\documentclass{article} \newcommand{\ARXIVVERSIONONLY}[1]{#1} \newcommand{\ELSVERSIONONLY}[1]{} 

\usepackage[T1]{fontenc} 
\usepackage{microtype} 
\usepackage[table]{xcolor} 
\usepackage{multirow}
\usepackage{soul} 
\usepackage{amsmath} 
\usepackage{graphicx} 
    \newcommand{\graphicsScale}{1} 
    \graphicspath{{./graphics/}}
\usepackage{subcaption} 
\usepackage{enumitem} 
    \newlist{thmEnumerate}{enumerate}{1}
    \setlist[thmEnumerate, 1]{label=(\alph{thmEnumeratei})}
    \newlist{rules}{enumerate}{1}
    \setlist[rules, 1]{label=(R\arabic{rulesi})}
    \newlist{Cases}{enumerate}{5}
    \setlist[Cases]{wide}
    \setlist[Cases, 1]{label=\textbf{Case~\arabic{Casesi}:},ref=\arabic*}
    \setlist[Cases, 2]{label=\textbf{Case~\arabic{Casesi}.\arabic{Casesii}:},ref=\arabic*}
    \setlist[Cases, 3]{label=\textbf{Case~\arabic{Casesi}.\arabic{Casesii}.\arabic{Casesiii}:},ref=\arabic*}
    \setlist[Cases, 4]{label=\textbf{Case~\arabic{Casesi}.\arabic{Casesii}.\arabic{Casesiii}.\arabic{Casesiv}:},ref=\arabic*}
    \setlist[Cases, 5]{label=\textbf{Case~\arabic{Casesi}.\arabic{Casesii}.\arabic{Casesiii}.\arabic{Casesiv}.\arabic{Casesv}:},ref=\arabic*}
    \newlist{inlineEnum}{enumerate*}{1}
    \setlist[inlineEnum, 1]{label=(\roman{inlineEnumi})}
    \renewlist{enumerate}{enumerate}{10}
    \setlist[enumerate]{label*=\arabic*.}
    \setlistdepth{10} 
\usepackage{mleftright}
\usepackage{amsthm}
    \newtheorem{theorem}{Theorem}[section]
    \newtheorem{lemma}{Lemma}[section]
    \newtheorem{conjecture}{Conjecture}
    
    \newtheorem{observation}{Observation}
\ARXIVVERSIONONLY{
    \usepackage[pdfusetitle,breaklinks]{hyperref} 
    \usepackage{authblk}
    \usepackage{fullpage}
}
\ELSVERSIONONLY{
    \usepackage[breaklinks]{hyperref} 
}

\ELSVERSIONONLY{
    \usepackage{lineno}
        \linenumbers
}
\showboxdepth=\maxdimen 
\showboxbreadth=\maxdimen 

\newcommand{\OO}{O} 
\newcommand{\PP}{\mathcal{P}} 
\newcommand{\NP}{\mathcal{N\!P}} 
\newcommand{\mya}{a} 
\newcommand{\polygon}{P} 
\newcommand{\SATn}{\mathsf{SAT}} 
\newcommand{\SAT}[1]{#1\textnormal{-}\SATn} 
\newcommand{\HORNSAT}{\mathsf{HORNSAT}} 
\newcommand{\STAB}[1]{#1\textnormal{-}\mathsf{STAB}} 
\newcommand{\CSTAB}[1]{#1\textnormal{-}\mathsf{CSTAB}} 
\newcommand{\RPM}{\mathsf{RPM}\textnormal{-}\SAT{3}} 
\newcommand{\partition}{\mathbf{R}} 
\newcommand{\segment}{\mathbf{s}} 
\newcommand{\pixel}{\xi} 
\newcommand{\vertexSetVariable}{V} 
\newcommand{\edgeSetVariable}{E} 
\newcommand{\pixelSetVariable}{\Xi} 
\newcommand{\vertexSetOf}[1]{\vertexSetVariable_{#1}} 
\newcommand{\edgeSetOf}[1]{\edgeSetVariable_{#1}} 
\newcommand{\pixelSetOf}[1]{\pixelSetVariable_{#1}} 
\newcommand{\formula}{\phi} 
\newcommand{\tw}{\mathit{tw}} 
\newcommand{\hor}{{\textnormal{hor}}} 
\newcommand{\ver}{{\textnormal{ver}}} 
\newcommand{\EdgesAndRefSeg}[1]{U} 
\newcommand{\reflexSegmentsOf}[1]{W} 
\newcommand{\sgt}[2]{\overline{#1 #2}} 
\newcommand{\graph}{H} 
\newcommand{\treeDecomposition}{\mathcal{T}} 
\newcommand{\pixelGraph}[1]{G} 
\newcommand{\extGraph}[1]{G'} 
\newcommand{\Equal}{:= \quad} 
\renewcommand{\implies}{\Rightarrow} 
\renewcommand{\iff}{\Leftrightarrow} 
\newcommand{\MSOFormulaConf}[1]{\formula^{\mathit{conf}}_{#1}} 
\newcommand{\MSOFormula}[1]{\formula _{ #1 }} 
\newcommand{\isPartition}{\mathit{isPartition}} 
\newcommand{\isConforming}{\mathit{isConforming}} 
\newcommand{\hasStabbingNumberAtMost}[1]{\mathit{hasStabbingNumberAtMost}_{#1}} 
\newcommand{\isOnBoundary}{\mathit{isOnBoundary}} 
\newcommand{\yieldsOnlyRectangles}{\mathit{yieldsOnlyRectangles}} 
\newcommand{\containsOnlyEdgesOfThePixelGraph}{\mathit{containsOnlyEdgesOfThePixelGraph}} 
\newcommand{\coversAllEdgesOfThePolygon}{\mathit{coversAllEdgesOfThePolygon}} 
\newcommand{\isAdjacent}[1]{\mathit{isAdjacent}_{#1}} 
\newcommand{\areAdjacent}{\mathit{areAdjacent}} 
\newcommand{\coversAllPixels}{\mathit{coversAllPixels}} 
\newcommand{\isCountIncrementedAtEachStab}{\mathit{isCountIncrementedAtEachStab}} 
\newcommand{\isCountNotDecrementedAtEachNonStab}{\mathit{isCountNotDecrementedAtEachNonStab}} 
\newcommand{\xcoor}{\mathrm{x}} 
\newcommand{\variable}{x} 
\newcommand{\literal}{y} 
\newcommand{\FOvariable}{x} 
\newcommand{\Variable}{X} 
\newcommand{\comb}{P_{\textnormal{comb}}} 
\newcommand{\ors}{\mathbf{S}} 
\newcommand{\Rhor}{\mathbf{R}^{\hor}} 
\newcommand{\Rver}{\mathbf{R}^{\ver}} 
\newcommand{\Shor}{\mathbf{S}^{\hor}} 
\newcommand{\Sver}{\mathbf{S}^{\ver}} 
\newcommand{\Lhor}{\mathbf{L}^{\hor}} 
\newcommand{\Lver}{\mathbf{L}^{\ver}} 
\newcommand{\query}{\kappa} 
\newcommand{\ColorInGreen}{\mathtt{ColorInGreen}} 
\newcommand{\ColorInBlue}{\mathtt{ColorInBlue}} 
\newcommand{\horOf}[1]{\mathbf{h}_{#1}} 
\newcommand{\verOf}[1]{\mathbf{v}_{#1}} 
\newcommand{\function}{f} 
\newcommand{\ray}{\mathbf{r}} 
\newcommand{\gadget}{\Gamma} 
\newcommand{\clause}{c} 
\newcommand{\north}{\mathit{North}} 
\newcommand{\south}{\mathit{South}} 
\newcommand{\est}{\mathit{East}} 
\newcommand{\west}{\mathit{West}} 
\newcommand{\interpreted}[2]{[#2]_{#1}} 

\newcommand{\mytitle}{Computing Conforming Partitions with Low Stabbing Number for Rectilinear Polygons}
\newcommand{\mythanks}{This paper is dedicated to the memory of our friend Saeed, whose work inspired the project. This work is funded in part by the Natural Sciences and Engineering Research Council of Canada (NSERC). A preliminary version of this article appeared in WALCOM 2025.}

\newcommand{\myabstract}{
A \emph{conforming partition} of a rectilinear $ n $-gon $ \polygon $ (possibly with holes) is a partition of $ \polygon $ into rectangles without using Steiner points (i.e., all corners of all rectangles must lie on the boundary of $ \polygon $).
The stabbing number of such a partition is the maximum number of rectangles intersected by an axis-aligned segment lying in the interior of $ \polygon $. 
In this paper, we examine the problem of computing conforming partitions with low stabbing number. We show that computing a conforming partition with stabbing number at most~$ 4 $ is $ \NP $-hard, which strengthens a previously known hardness result [Durocher \& Mehrabi, Theor. Comput. Sci. 689: 157-168 (2017)] and eliminates the possibility for fixed-parameter-tractable algorithms parameterized by the stabbing number unless $ \PP = \NP $. 
In contrast, we give 
(i) an $ \OO ( n \log n ) $-time algorithm to decide whether a conforming partition with stabbing number~$ 2 $ exists, 
(ii) a fixed-parameter-tractable algorithm parameterized by both the stabbing number and treewidth of the pixel graph of the polygon, and 
(iii) a fixed-parameter-tractable algorithm parameterized by the stabbing number for polygons without holes in general position.
}

\ARXIVVERSIONONLY{
    \title{\mytitle\texorpdfstring{\thanks{\mythanks}}{}}
}
\ELSVERSIONONLY{
    \title{\mytitle\tnoteref{t1}}
    \tnotetext[t1]{\mythanks}
}

\ARXIVVERSIONONLY{
    \author[1]{Therese Biedl}
    \author[2]{Stephane Durocher}
    \author[3]{Debajyoti Mondal}
    \author[4]{Rahnuma Islam Nishat}
    \author[4]{Bastien Rivier}
    \affil[1]{University of Waterloo, Waterloo, Canada\\ \protect\href{mailto:biedl@uwaterloo.ca}{\texttt{biedl@uwaterloo.ca}}}
    \affil[2]{University of Manitoba, Winnipeg, Canada\\ \protect\href{mailto:durocher@cs.umanitoba.ca}{\texttt{durocher@cs.umanitoba.ca}}}
    \affil[3]{University of Saskatchewan, Saskatoon, Canada\\ \protect\href{mailto:dmondal@cs.usask.ca}{\texttt{dmondal@cs.usask.ca}}}
    \affil[4]{Brock University, Saint Catharines, Canada\\ \protect\href{mailto:rnishat@brocku.ca}{\texttt{rnishat@brocku.ca}}, \protect\href{mailto:brivier@brocku.ca}{\texttt{brivier@brocku.ca}}}
    \date{}
}
\ELSVERSIONONLY{
    \author[1]{Therese Biedl}
    \ead{biedl@uwaterloo.ca}
    \author[2]{Stephane Durocher}
    \ead{durocher@cs.umanitoba.ca}
    \author[3]{Debajyoti Mondal}
    \ead{dmondal@cs.usask.ca}
    \author[4]{Rahnuma Islam Nishat}
    \ead{rnishat@brocku.ca}
    \author[4]{Bastien Rivier\texorpdfstring{\corref{cor1}}{}}
    \ead{brivier@brocku.ca}
    \affiliation[1]{
        organization={University of Waterloo},
        city={Waterloo},
        country={Canada}
        }
    \affiliation[2]{
        organization={University of Manitoba},
        city={Winnipeg},
        country={Canada}
        }
    \affiliation[3]{
        organization={University of Saskatchewan},
        city={Saskatoon},
        country={Canada}
        }
    \affiliation[4]{
        organization={Brock University},
        city={Saint Catharines},
        country={Canada}
        }
    \cortext[cor1]{Corresponding author}
}

\begin{document}

\ELSVERSIONONLY{
    \begin{abstract}
    \myabstract
    \end{abstract}
    \begin{keyword}
        Stabbing Number \sep Partition \sep Rectilinear Polygon \sep NP-Hard \sep Fixed-Parameter Tractability \sep Treewidth
    \end{keyword}

    \newpageafter{abstract}
}

\maketitle

\ARXIVVERSIONONLY{
    \begin{abstract}
        \myabstract
    \end{abstract}
}

\section{Introduction} \label{sec:intro}

Partitioning an $ n $-gon $ \polygon $ with nice properties is a fundamental paradigm in computational geometry. 
We are interested in the \emph{stabbing number} of a partition, i.e., the maximum number of elements of the partition that are intersected by a line segment that lies in the interior of $ \polygon $. 
Consider a partition of $ \polygon $ into triangles. 
Such a partition yields a data structure to efficiently process a ray-shooting query starting inside $ \polygon $: compute the first intersection with the boundary of $ \polygon $ by traversing the sequence of triangles that are stabbed by the ray. 
Since the running time is proportional to the number of stabbed triangles, it is desirable to find a triangular partition such that any ray intersects as few triangles as possible, or in other words, such that the stabbing number is minimum. 
Hershberger and  Suri~\cite{HS95} showed that every simple (without self-intersections) $ n $-gon with $ h \geq 0 $ holes has a triangular partition with stabbing number $ \OO ( \sqrt{ h + 1 } \log n ) $ and that there exist polygons without holes where any triangular partition has stabbing number $ \Omega ( \log n ) $. 
The literature also contains an $ \OO ( 1 ) $-approximation algorithm for minimizing the stabbing number of triangular partitions for polygons without holes~\cite{AADK11}.

\begin{figure}[ht]
    \hspace*{\stretch{1}}
    \subcaptionbox{}{\includegraphics[scale=\graphicsScale,page=1]{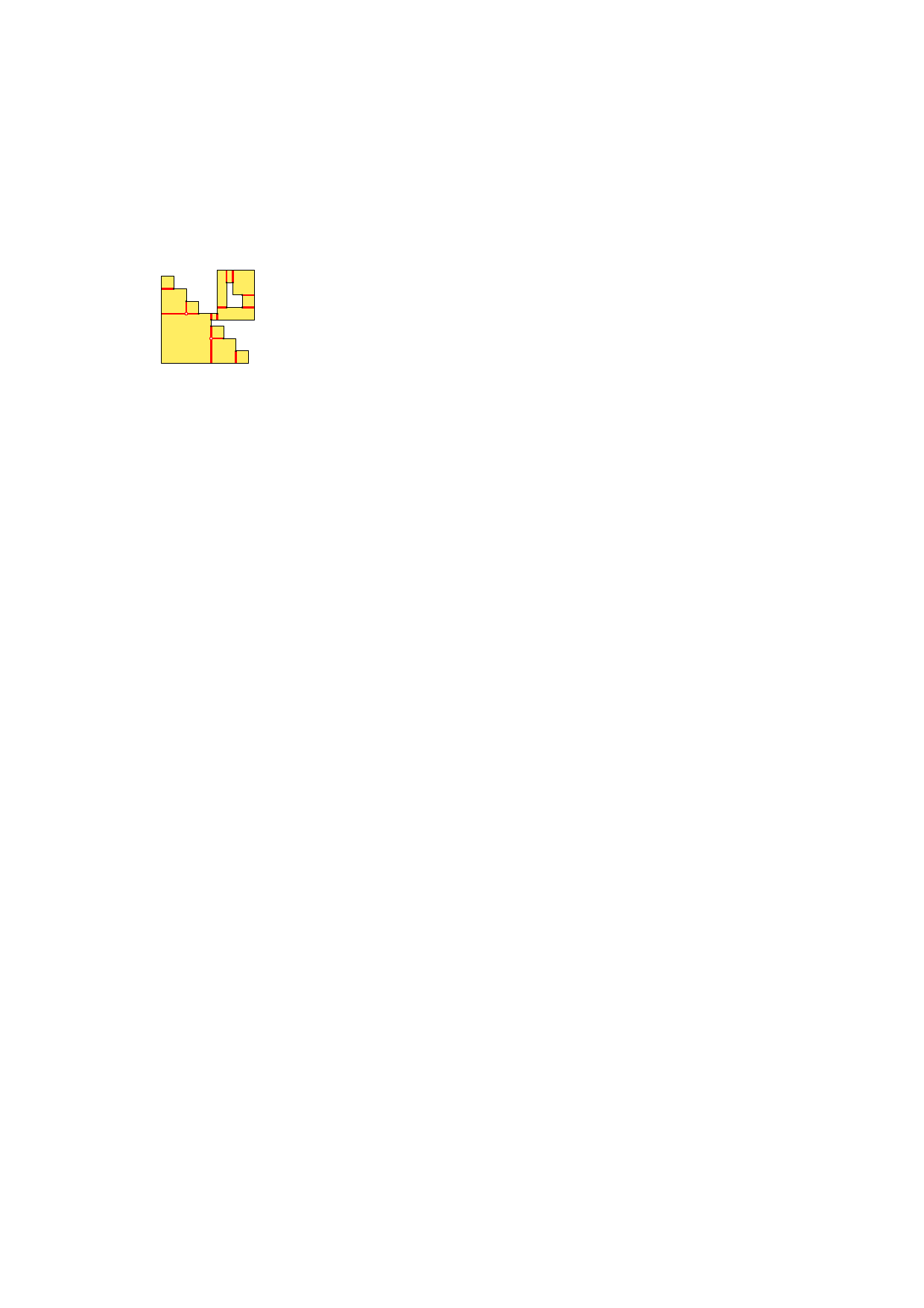}}
    \hspace*{\stretch{2}}
    \subcaptionbox{}{\includegraphics[scale=\graphicsScale,page=2]{partitionWithSteinerPointsVSConformingWithHoles.pdf}}
    \hspace*{\stretch{2}}
    \subcaptionbox{}{\includegraphics[scale=\graphicsScale,page=5]{partitionWithSteinerPointsVSConformingWithHoles.pdf}}
    \hspace*{\stretch{1}}
    \caption{
    (a) An optimal rectangular partition of a polygon $ \polygon _ 1 $ in general position  with one hole, using two Steiner points (tiny hollow circles) with stabbing number~$ 3 $. The portion of the edges of partition rectangles that are not on the boundary of $ \polygon _ 1 $ are plain bold (red).
    (b) An optimal conforming partition of $ \polygon _ 1 $ with stabbing number~$ 4 $.
    (c) The pixel graph of $ \polygon _ 1 $.}
    \label{fig:partitionWithSteinerPointsVSConforming}
\end{figure}

In this paper, we restrict the attention to a problem motivated by orthogonal ray-shooting (see~\cite{GiyoraKaplan} for an example where orthogonal ray-shooting is considered).
We only consider \emph{rectangular} partitions, i.e., partitions of rectilinear polygons (i.e., polygons whose edges are axis-aligned) into rectangles.
Alternatively, a rectangular partition is obtained by cutting $ \polygon $ along disjoint axis-aligned open segments (i.e., segments excluding endpoints) lying in the interior of $ \polygon $ that are of maximal length.
We say that such a set of segments is a \emph{cut set}, and freely consider a rectangular partition as a cut set.
As for the stabbing number, we only consider line segments that are in the interior of the polygon and axis-aligned; we call these \emph{stabbing segments}.
More precisely, we study the following problem for a rectilinear $ n $-gon $ \polygon $, possibly with holes (unless mentioned otherwise): partition $ \polygon $ into rectangles while minimizing the \emph{stabbing number of the partition}, 
that is, the maximum over all stabbing segments $ \segment $ of the number of partition rectangles intersected by $ \segment $. 
We say that such a rectangular partition is \emph{optimal}, and call its stabbing number the \emph{stabbing number} of $ \polygon $.
In fact, it is known (Lemma~1 in~\cite{AADK11}\footnote{Although this lemma is stated and proved in the context of a polygon without holes, its proof holds without modification for polygon with holes.}) that to minimize the stabbing number of a rectangular partition (viewed as a cut set), it suffices to restrict to \emph{anchored} segments, i.e., segments that have at least one endpoint that is a \emph{reflex} vertex of $ \polygon $ (i.e., a vertex of $ \polygon $ with a reflex interior angle).
Figure~\ref{fig:partitionWithSteinerPointsVSConforming}(a) shows an example of an optimal partition; note the use of \emph{Steiner points}, i.e., corners of partition rectangles that lie in the interior of $ \polygon $\footnote{In contrast, in the context of a triangulation of a polygon, a Steiner point is usually defined as a corner of a partition triangle that is not a vertex of the polygon (which includes corners lying on the boundary of the polygon).}.
Similar to triangular partitions, every rectilinear polygon with $ h \geq 0 $ holes has stabbing number $ \OO ( \sqrt{ h { + } 1 } \log n ) $, and there exist polygons (without holes) of arbitrary size with stabbing number $ \Omega ( \log n ) $~\cite{DV94}.  
However, there also exist arbitrary-size polygons (without holes) with stabbing number $ \OO ( 1 ) $. 
This motivates the design of approximation algorithms for computing the stabbing number of simple rectilinear polygons, such as the~$ 3 $-approximation algorithm in~\cite{AADK11}for polygons without holes.
An interesting open problem in this context is to determine the computational complexity of computing the stabbing number for simple polygons with or without holes. 
Although this question remains open in general, there has been some progress on a variant of rectangular partition called conforming partition.

A \emph{conforming partition}\footnote{The term ``conforming'' in this context appears in~\cite{DM17}.} of a rectilinear polygon $ \polygon $ is a rectangular partition without Steiner points. 
Again, we say that a conforming partition is \emph{optimal} if its stabbing number is minimum among all the conforming partitions, and we call this stabbing number the \emph{conforming stabbing number of $ \polygon $}.
To minimize the stabbing number in this case, it suffices to restrict the (cut sets of) conforming partitions to segments with \emph{both} endpoints on the boundary of $ \polygon $, at least one of which being a reflex vertex of $ \polygon $. 
We say that such segments are \emph{reflex} (Figure~\ref{fig:partitionWithSteinerPointsVSConforming}(b)).
Durocher and Mehrabi~\cite{DM17,DM17erratum} showed that computing an optimal conforming partition is $ \NP $-hard for rectilinear polygons with holes,
and gave a~$ 2 $-approximation algorithm to compute their conforming stabbing number (see also~\cite{piva2017minimum} for experimental results). 
However, the complexity of the problem remains open for simple polygons without holes. 
The state of the art covered in this introduction is summarized in Table~\ref{tab:stateOfTheArt}.

\begin{table}[htb]
    \newcommand{\mr}[2]{\multirow{#1}{5.2em}{#2}}
    \newcommand{\highlight}{\cellcolor{yellow!50}}
    \setlength{\tabcolsep}{0.1em}
    \renewcommand{\arraystretch}{1.1}
    \footnotesize
    \centering
    \begin{tabular}{@{}l@{}c@{}ccc@{}}
       Partition:   & Holes: & Stabbing Number: & Complexity Class: & Approx. Factor:\\\hline

\mr{2}{Triangular}  & $ h $  & $ \Omega ( \log n )$~\cite{HS95}, $ \OO ( \sqrt{ h { + } 1 } \log n ) $~\cite{HS95} & - & $ \OO ( \sqrt{ h { + } 1 } \log n ) $~\cite{HS95} \\

                    & $ 0 $  & $ \Theta ( \log n ) $~\cite{HS95} & - & $ \OO ( 1 ) $~\cite{AADK11} \\\hline

\mr{2}{Rectangular} & $ h $  & $ \Omega ( \sqrt{ h } { + } \log n ) $~\cite{DV94}, $ \OO ( \sqrt{ h { + } 1 } \log n ) $~\cite{DV94} & $ \NP $-c. (\ref{thm:NPHardStabbingN4+})\highlight & $ \OO ( \sqrt{ h { + } 1 } \log n ) $~\cite{DV94} \\ 

                    & $ 0 $  & $ \Theta ( \log n ) $~\cite{DV94} & $ \NP $ & $ 3 $~\cite{AADK11} \\\hline

\mr{2}{Conforming\\Rectangular}  & $ h $  & $ \Omega ( \sqrt{ h } )$~\cite{DV94}, $ \Theta ( n ) $ & $ \NP $-c.~\cite{DM17,DM17erratum}, (\ref{thm:NPHardStabbingN4+})\highlight & $ 2 $~\cite{DM17,DM17erratum} \\ 

                    & $ 0 $  & $ \Theta ( n ) $ & $ \NP $ & $ 2 $~\cite{DM17,DM17erratum}  
    \end{tabular}
    \caption{
    Summary of the state of the art for computing the stabbing number of a polygon $ \polygon $ with $ n $ vertices (highlighted cells signal contributions and the theorem number is given in parenthesis).
    The first column gives the partition type of $ \polygon $.
    The second column gives the number of holes of $ \polygon $, ``$ h $'' meaning any non-negative integer $ h $.
    The third column gives known bounds on the minimum upper bound on the stabbing number of $ \polygon $. Bounds without citation are trivial.
    The fourth column gives the time complexity of deciding whether the stabbing number of $ \polygon $ is at most a given bound. 
    The fifth column gives the best approximation factor known for polynomial time algorithms approximating the stabbing number of $ \polygon $.
    }
    \label{tab:stateOfTheArt}
\end{table}

\paragraph{Contributions\ARXIVVERSIONONLY{.}} In this paper, we investigate the problem of computing an optimal conforming partition of an arbitrary rectilinear $ n $-gon $ \polygon $ (possibly with holes) from the perspective of designing fixed-parameter tractable (FPT) algorithms. (Recall that an algorithm is FTP if its running time is of the form $ f ( t ) n ^{ \OO ( 1 ) } $ for some parameter $ t $ and some computable function $ f $ that is independent of $ n $.)
The choice of a suitable parameter is critical.
To decide whether $ \polygon $ admits conforming partitions with stabbing number at most $ k $ using an FPT algorithm, the parameter $ t = k $ is a natural choice.
We show that an FPT algorithm with such a parameter does not exist unless $ \PP = \NP $. 
Specifically, we define $ \STAB{ k } $ (respectively, $ \CSTAB{ k } $) as the problem of deciding whether the stabbing number (respectively, conforming stabbing number) of $ \polygon $ is at most~$ k $.
We show that both $ \CSTAB{ k } $ and $ \STAB{ k } $ remain $ \NP $-hard when  $ k \geq 4 $ (Section~\ref{sec:4+intratable} and Section~\ref{sec:intractableProof}). 
This strengthens the result of Durocher and Mehrabi~\cite{DM17} stating that it is $ \NP $-hard to determine whether the conforming stabbing number is $ \Theta ( \sqrt{ n } ) $.

Our strengthened hardness result raises two interesting questions. 
First, is it decidable whether $ \polygon $ admits a conforming partition with stabbing number at most~$ 2 $ or~$ 3 $ in polynomial time? 
Second, are there other natural parameters for FPT algorithms to compute optimal conforming partitions of $ \polygon $?
For the former, we give an $ \OO ( n \log n ) $-time algorithm to decide whether a conforming partition with stabbing number~$ 2 $ exists (Section~\ref{sec:2tractable}); this leaves the case of stabbing number~$ 3 $ open.
For the latter, we give two FPT algorithms to test whether $ \polygon $ has conforming stabbing number at most $ k $ (Section~\ref{sec:tw} and Section~\ref{sec:gate-free-hole-free}). 
One is parameterized by $ t = k + \tw ( \polygon ) $, the tuple consisting of $ k $ and the \emph{treewidth} of $ \polygon $ (defined as the treewidth of the \emph{pixel graph} of $ \polygon $ illustrated in Figure~\ref{fig:partitionWithSteinerPointsVSConforming}(c)), the other is specific to polygons without holes in \emph{general position} and is parameterized by $ t = k $ alone (missing definitions are in Section~\ref{sec:preliminaries}). 
The treewidth has frequently been used as a parameter for FPT algorithms for graph problems but also for solving problems on polygons, see e.g.~\cite{BiedlMehrabi}.
The contributions are summarized in Table~\ref{tab:stateOfTheArt} and~\ref{tab:contributions}.

\begin{table}[htb]
    \newcommand{\mr}[2]{\multirow{#1}{5.2em}{#2}}
    \newcommand{\highlight}{\cellcolor{yellow!50}}
    \renewcommand{\arraystretch}{1.1}
    \footnotesize
    \centering
    \begin{tabular}{@{}l@{\hspace{-1em}}|@{\hspace{.3em}}c@{~}c@{~}c@{\hspace{.3em}}|cccc}
Problem: & \multicolumn{3}{@{}c|}{Restrictions:} & $ k = 1 $: & $ k = 2 $: & $ k = 3 $: & $ k \geq 4 $: \\\hline

\mr{8}{$ \STAB{ k }  $} &            &          &      & $ \OO ( 1 ) $ & $ \OO ( n \log n ) $ (\ref{thm:decideStabbing2})\highlight & FPT\highlight & $ \NP $-c. (\ref{thm:NPHardStabbingN4+}), FPT\highlight \\ 

                         &            &          & thin & $ \OO ( 1 ) $ & $ \OO ( n \log n ) $ (\ref{thm:decideStabbing2})\highlight & FPT\highlight & $ \NP $-c. (\ref{thm:NPHardStabbingN4+}), FPT\highlight \\ 

                         &            & gen.pos. &      & $ \OO ( 1 ) $ & $ \OO ( n \log n ) $ (\ref{thm:decideStabbing2})\highlight & FPT\highlight & $ \NP $-c. (\ref{thm:NPHardStabbingN4+}), FPT\highlight \\ 

                         &            & gen.pos. & thin & $ \OO ( 1 ) $ & $ \OO ( 1 ) $ (\ref{lem:thinGeneral})\highlight & $ \OO ( 1 ) $ (\ref{lem:thinGeneral})\highlight & $ \OO ( 1 ) $ (\ref{lem:thinGeneral})\highlight \\ 

                         & \st{holes} &          &      & $ \OO ( 1 ) $ & $ \OO ( n \log n ) $ (\ref{thm:decideStabbing2})\highlight & FPT\highlight & FPT\highlight \\ 

                         & \st{holes} &          & thin & $ \OO ( 1 ) $ & $ \OO ( n \log n ) $ (\ref{thm:decideStabbing2})\highlight & FPT\highlight & FPT\highlight \\ 

                         & \st{holes} & gen.pos. &      & $ \OO ( 1 ) $ & $ \OO ( n \log n ) $ (\ref{thm:decideStabbing2})\highlight & FPT\highlight & FPT\highlight \\ 

                         & \st{holes} & gen.pos. & thin & $ \OO ( 1 ) $ & $ \OO ( 1 ) $ (\ref{lem:thinGeneral})\highlight & $ \OO ( 1 ) $ (\ref{lem:thinGeneral})\highlight & $ \OO ( 1 ) $ (\ref{lem:thinGeneral})\highlight \\ \hline

\mr{8}{$ \CSTAB{ k } $} &            &          &      & $ \OO ( 1 ) $ & $ \OO ( n \log n ) $ (\ref{thm:decideStabbing2})\highlight & FPT\highlight & $ \NP $-c. (\ref{thm:NPHardStabbingN4+}), FPT\highlight \\ 

                         &            &          & thin & $ \OO ( 1 ) $ & $ \OO ( n \log n ) $ (\ref{thm:decideStabbing2})\highlight & FPT\highlight & $ \NP $-c. (\ref{thm:NPHardStabbingN4+}), FPT\highlight \\ 

                         &            & gen.pos. &      & $ \OO ( 1 ) $ & $ \OO ( n \log n ) $ (\ref{thm:decideStabbing2})\highlight & FPT\highlight & $ \NP $-c. (\ref{thm:NPHardStabbingN4+}), FPT\highlight \\ 

                         &            & gen.pos. & thin & $ \OO ( 1 ) $ & $ \OO ( 1 ) $ (\ref{lem:thinGeneral})\highlight & $ \OO ( 1 ) $ (\ref{lem:thinGeneral})\highlight & $ \OO ( 1 ) $ (\ref{lem:thinGeneral})\highlight \\ 

                         & \st{holes} &          &      & $ \OO ( 1 ) $ & $ \OO ( n \log n ) $ (\ref{thm:decideStabbing2})\highlight & FPT\highlight & FPT\highlight \\ 

                         & \st{holes} &          & thin & $ \OO ( 1 ) $ & $ \OO ( n \log n ) $ (\ref{thm:decideStabbing2})\highlight & FPT\highlight & FPT\highlight \\ 

                         & \st{holes} & gen.pos. &      & $ \OO ( 1 ) $ & $ \OO ( n \log n ) $ (\ref{thm:decideStabbing2})\highlight & FPT$'$\highlight & FPT$'$\highlight \\

                         & \st{holes} & gen.pos. & thin & $ \OO ( 1 ) $ & $ \OO ( 1 ) $ (\ref{lem:thinGeneral})\highlight & $ \OO ( 1 ) $ (\ref{lem:thinGeneral})\highlight & $ \OO ( 1 ) $ (\ref{lem:thinGeneral})\highlight 
    \end{tabular}
    \caption{Summary of the contributions (using the same conventions as in Table~\ref{tab:stateOfTheArt}).
    The first column gives the decision problem: $ \STAB{ k } $ or $ \CSTAB{ k } $.
    The second column gives the restrictions on the input polygon $ \polygon $, ``\st{holes}'' standing for ``without holes'' and ``gen.pos.'' for ``general position''.
    In the remaining, ``FPT'' stands for ``$ \OO ( f ( k { + } \ell ) n ^ 2 )$ (\ref{thm:treewidth})'' and ``FPT$'$'' stands for ``$ \OO ( f '( k ) n ^ 2 ) $ (\ref{thm:simpleGeneral})'', where $ \ell $ is the treewidth of $ \polygon $, and $ f $ and $ f ' $ are two functions that do not depend on $ n $.
    Every decision algorithm mentioned also provides a solution if it exists.
    }
    \label{tab:contributions}
\end{table}

\section{Preliminaries} \label{sec:preliminaries}

Throughout the article, the polygons we consider are all simple (i.e., without self-intersections), rectilinear (i.e., the edges are axis-aligned) and may have holes (unless mentioned otherwise).
Although we do not make it a general assumption, we say that a polygon $ \polygon $ is in \emph{general position} if no three vertices of $ \polygon $ lie on one axis-aligned line (Figure~\ref{fig:partitionWithSteinerPointsVSConforming}(a-c) and Figure~\ref{fig:preliminaries}(a)).
Moreover, we use the terminology of the Euclidean topology of the plane: a set of points in the plane may be \emph{open}, \emph{closed}, has an \emph{interior} and a \emph{boundary}. A standard exception is when a line segment $ \segment $ is considered: although $ \segment $ is never open in the Euclidean topology of the plane, we say that $ \segment $ is \emph{open} if $ \segment $ does not contain its endpoints (the underlying topology is then the Euclidean topology of a line containing $ \segment $).

Next, we recall the definitions leading to the notion of \emph{stabbing number of a polygon} (already in Section~\ref{sec:intro}):
Given a polygon $ \polygon $, a finite partition of $ \polygon $ is \emph{rectangular} if all the elements of the partition are rectangles with non-empty interior (an unimportant detail: an edge or a vertex shared by more than one partition rectangle is assigned to any one of them).
Alternatively, a rectangular partition is obtained by cutting $ \polygon $ along disjoint axis-aligned open segments (i.e., segments excluding endpoints) lying in the interior of $ \polygon $ that are of maximal length.
We say that such a set of segments is a \emph{cut set}, and freely consider a rectangular partition as a cut set.
Given a polygon $ \polygon $, a \emph{stabbing segment} of $ \polygon $ is an axis-aligned line segment that lies in the interior of $ \polygon $.
Given a polygon $ \polygon $ and a rectangular partition $ \partition $ of $ \polygon $, the \emph{stabbing number of $ \partition $} is the maximum, over all stabbing segments $ \segment $ of $ \polygon $, of the number of partition rectangles of $ \partition $ intersected by $ \segment $.
A rectangular partition of $ \polygon $ with minimum stabbing number $ k $ is \emph{optimal}; $ k $ is then the \emph{stabbing number of $ \polygon $}.
Figure~\ref{fig:partitionWithSteinerPointsVSConforming}(a) shows an example of an optimal partition; note the use of \emph{Steiner points}, i.e., corners of partition rectangles that lie in the interior of $ \polygon $.
In fact, it is known (Lemma~1 in~\cite{AADK11}) that to minimize the stabbing number of a rectangular partition (viewed as a cut set), it suffices to restrict to \emph{anchored} segments, i.e., segments that have at least one endpoint that is a \emph{reflex} vertex of $ \polygon $ (i.e., a vertex of $ \polygon $ with a reflex interior angle).

We now recall the definitions leading to the notion of \emph{conforming} stabbing number of a polygon (also in Section~\ref{sec:intro}):
A \emph{conforming partition} of a polygon $ \polygon $ is a rectangular partition without Steiner points. 
To minimize the stabbing number in this case, it suffices to restrict the (cut sets of) conforming partitions to segments with \emph{both} endpoints on the boundary of $ \polygon $, at least one of which being a reflex vertex of $ \polygon $.
We say that such segments are \emph{reflex} (Figure~\ref{fig:partitionWithSteinerPointsVSConforming}(b)).
Again, we say that a conforming partition is \emph{optimal} if its stabbing number is minimum among all the conforming partitions, and we call this stabbing number the \emph{conforming stabbing number of $ \polygon $}.
Note that, if no pair of reflex segments of $ \polygon $ intersect, then any optimal partition of $ \polygon $ is conforming.
We say that such a polygon is \emph{thin} (Figure~\ref{fig:preliminaries}(b-c)).

Next, we define the treewidth of a polygon $ \polygon $.
We first recall the well-known definition of the treewidth of a graph and the related concepts.
A \emph{tree decomposition} of a graph $ \graph = ( V , E ) $ is a tree $ \treeDecomposition $ whose nodes (called \emph{bags}) are subsets of $ V $ with the following properties:
\begin{inlineEnum}
    \item\label{connectedSubtree} For every vertex $ v $ of $ \graph $, the bags of $ \treeDecomposition $ containing $ v $ form a non-empty connected subtree of $ \treeDecomposition $.
    \item\label{edgeInBag} For every edge $ e $ of $ \graph $, there exists a bag of $ \treeDecomposition $ that contains both vertices of $ e $.
\end{inlineEnum}
The \emph{width} of a tree decomposition is the maximum bag size minus one, and the \emph{treewidth} $ \tw ( \graph ) $ of $ \graph $ is the minimum width of a tree decomposition of $ \graph $.

\begin{figure}[ht]
    \hspace*{\stretch{1}}
    \subcaptionbox{}{\includegraphics[scale=\graphicsScale,page=4]{partitionWithSteinerPointsVSConformingWithHoles.pdf}}
    \hspace*{\stretch{2}}
    \subcaptionbox{}{\includegraphics[scale=\graphicsScale,page=2]{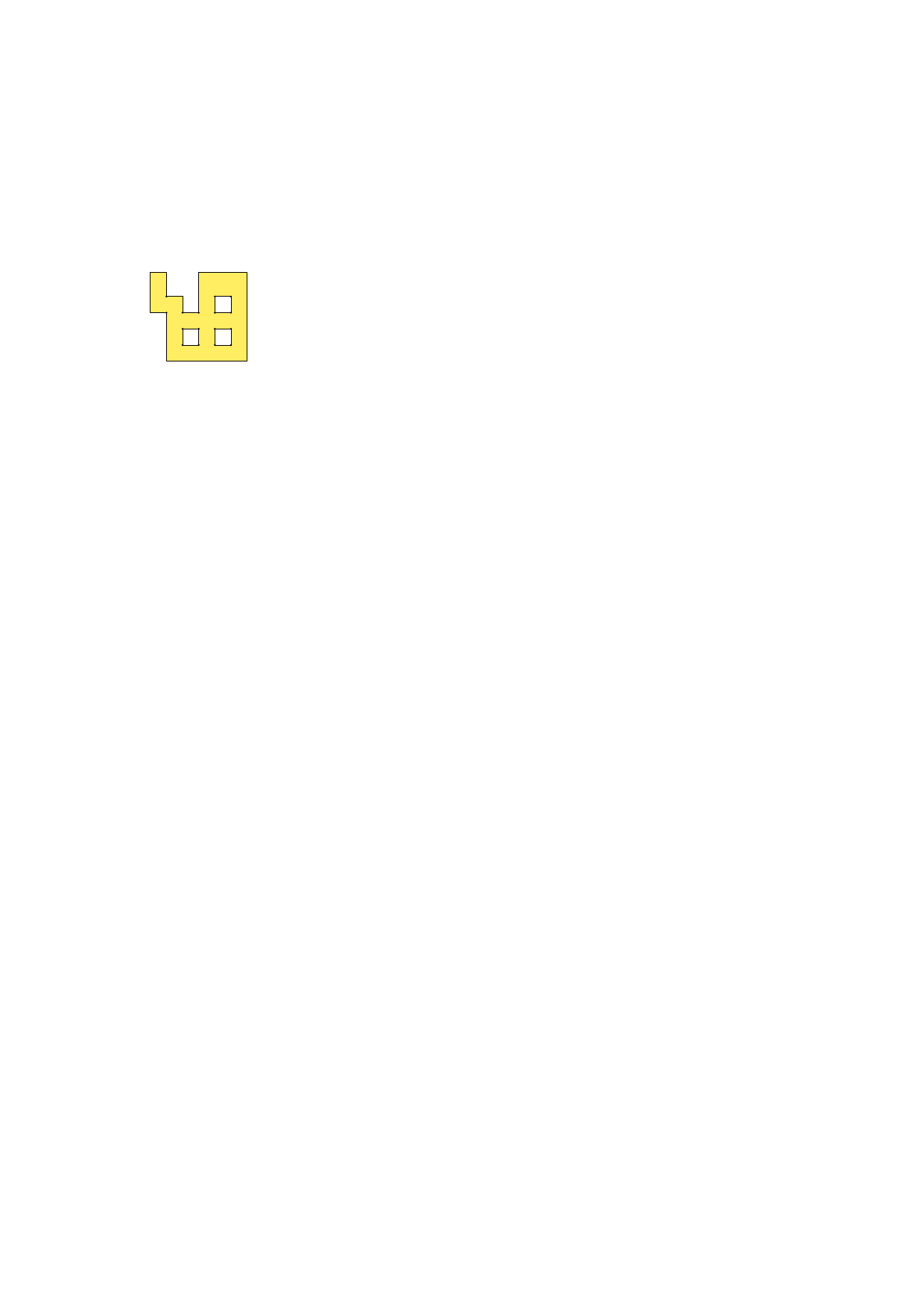}}
    \hspace*{\stretch{2}}
    \subcaptionbox{}{\includegraphics[scale=\graphicsScale,page=3]{thinPolygonAndPixalation.pdf}}
    \hspace*{\stretch{1}}
    \caption{
    (a) The reflex segments of $ \polygon _ 1 $ from Figure~\ref{fig:partitionWithSteinerPointsVSConforming} are dotted (red) and the reflex vertices are tiny (black) discs. The horizontal and vertical reflex segments $ \horOf{ p } $ and $ \verOf{ p } $ from the reflex vertex $ p $ are bold. The wedge-pixel of $ p $ is shaded (in orange) and labeled~$ 1 $.
    (b) The reflex segments of a thin polygon $ \polygon _ 2 $ (not in general position) with three holes and seven gates. The gates are the reflex segments with two small orange triangles on their side. (Note that the reflex segment $ \sgt{ q }{ q ' } $ is not a gate.)
    (c) An optimal conforming partition of $ \polygon _ 2 $ with stabbing number~$ 3 $.
    }
    \label{fig:preliminaries}
\end{figure}

In our case, we consider the treewidth of the \emph{pixel graph} $ \pixelGraph{ \polygon } $ of $ \polygon $, defined as the undirected simple graph with the following vertex set and edge set (and illustrated in Figure~\ref{fig:partitionWithSteinerPointsVSConforming}(c)). To describe these two sets, consider the union $ \EdgesAndRefSeg{ \polygon } $ of the edges of $ \polygon $ and the reflex segments of $ \polygon $.
\begin{itemize}
    \item The vertex set of $ \pixelGraph{ \polygon } $, denoted by $ \vertexSetOf{ \pixelGraph{ \polygon } } $, consists of all the endpoints of the segments in $ \EdgesAndRefSeg{ \polygon } $ together with all the intersections between any pair of segments in $ \EdgesAndRefSeg{ \polygon } $.
    \item The edge set of $ \pixelGraph{ \polygon } $, denoted by $ \edgeSetOf{ \pixelGraph{ \polygon } } $, consists of all the pairs of vertices $ v _ 1 , v _ 2 \in \vertexSetOf{ \pixelGraph{ \polygon } } $ such that the open segment $ \sgt{ v _ 1 }{ v _ 2 } $ is contained in some segment of $ \EdgesAndRefSeg{ \polygon } $ and does not contain any vertex of $ \vertexSetOf{ \pixelGraph{ \polygon } } $.
\end{itemize}
We then define the \emph{treewidth} of $ \polygon $ as the treewidth $ \tw ( \pixelGraph{ \polygon } ) $ of the pixel graph.

We also define the \emph{pixels} of a polygon $ \polygon $ as the inner faces of the pixel graph $ \pixelGraph{ \polygon } $ (which is planar by definition), and their set is denoted $ \pixelSetOf{ \pixelGraph{ \polygon } } $.
For each reflex vertex $ p $ of a polygon $ \polygon $, let $ \horOf{ p } $ and $ \verOf{ p } $ denote the horizontal and vertical reflex segments incident to $ p $, and let the \emph{wedge-pixel of $ p $} be the pixel of $ \polygon $ that is incident to $ \horOf{ p } $ and $ \verOf{ p } $ (Figure~\ref{fig:preliminaries}(a)).
A \emph{gate} of $ \polygon $ is a reflex segment $ \sgt{ p }{ q } $ that connects two reflex vertices $ p , q $ of $ \polygon $ such that the wedge-pixels of $ p $ and $ q $ lie on the same side of $ \sgt{ p }{ q } $.
Figure~\ref{fig:preliminaries}(b) shows all the gates of a polygon; note that segment $ \sgt{ q }{ q ' } $ is not a gate.
Note that if $ \polygon $ is in general position, then $ \polygon $ has no gates.

Recall that a \emph{stabbing segment} of a rectilinear polygon $ \polygon $ is an axis-aligned line segment that lies in the interior of $ \polygon $; for purposes of the stabbing number we only need to consider segments of maximal length, and we consider them to be open segments.
We say that two stabbing segments are \emph{equivalent} if they intersect the same set of pixels; there are $ \OO ( n ) $ equivalence classes of stabbing segments.
For instance, in Figure~\ref{fig:preliminaries}(a), there are~$ 52 $ equivalence classes: One for each of the~$ 26 $ reflex segments, one for each of the~$ 13 $ maximal open regions that do not contain a horizontal reflex segment, and one for each of the~$ 13 $ maximal open regions that do not contain a vertical reflex segment.

Finally, we denote by $ \STAB{ k } $ (respectively $ \CSTAB{ k } $) the set of rectilinear polygons $ \polygon $ (possibly with holes) that admit a partition (respectively conforming partition) into rectangles such that all stabbing segments of $ \polygon $ intersect at most $ k $ of these rectangles.
As is usual, we use the same notation $ \STAB{ k } $ (respectively $ \CSTAB{ k } $) for the associated problem of deciding whether a polygon is in $ \STAB{ k } $ (respectively $ \CSTAB{ k } $).
Moreover, we say that a conforming partition $ \partition $ of $ \polygon $ (seen as a set of reflex segment) is \emph{minimal} if it is minimal for inclusion, that is, if any set obtained by removing a reflex segment from $ \partition $ is not a conforming partition anymore.
This definition leads to the following equivalence: $ \polygon \in \CSTAB{ k } $ if and only if $ \polygon $ admits a minimal conforming partition with stabbing number at most $ k $.

\section{Intractability of Stabbing Number \texorpdfstring{$ 4 $}{4} or More} \label{sec:4+intratable}

In this section, we prove the following $ \NP $-completeness result.

\begin{theorem} \label{thm:NPHardStabbingN4+}
    For all integers $ k \geq 4 $, the decision problems $ \STAB{ k } $ and $ \CSTAB{ k } $ are $ \NP $-complete. 
    Moreover, $ \STAB{ k } $ and $ \CSTAB{ k } $ remain $ \NP $-complete even for thin polygons (possibly with holes).
    Furthermore, $ \CSTAB{ k } $ remains $ \NP $-complete even for rectilinear polygons (possibly with holes) in general position.
\end{theorem}

Before proving this result, we point out that Theorem~\ref{thm:NPHardStabbingN4+} does not state anything about rectilinear polygons that are both thin and in general position.
Indeed, for a rectilinear polygon that is both thin and in general position, $ \STAB{ k } $ and $ \CSTAB{ k } $ are solvable in polynomial time;
see Lemma~\ref{lem:thinGeneral}. 
Further note that Theorem~\ref{thm:NPHardStabbingN4+} does not state that $ \STAB{ k } $ is $ \NP $-complete for polygons in general position as this remains an open problem.

\paragraph{Proof structure\ARXIVVERSIONONLY{.}}
Thanks to Lemma~1 in~\cite{AADK11}, we know that, to minimize the stabbing number of a rectangular partition, it suffices to restrict to anchored segments.
In other words, it suffices to restrict to the $ \OO ( n ^ 2 ) $ vertices of the pixel graph for the Steiner points.
It is then straightforward to encode a polynomial-length certificate for $ \STAB{ k } $ and $ \CSTAB{ k } $, and verify its validity in polynomial time. This settles that $ \STAB{ k } $ and $ \CSTAB{ k } $ are in $ \NP $ for any integer $ k $.
We thus concentrate on proving $ \NP $-hardness.

First, we prove that $ \CSTAB{ 4 } $ is $ \NP $-hard in thin polygons.
In a thin polygon, any optimal partition is conforming.
Consequently, $ \STAB{ 4 } $ is also $ \NP $-hard. 
However, the gadgets take advantage of not being in general position.

Second, we provide an alternative version of this proof, this time for polygons in general position (but the gadgets take advantage of not being thin).

Third, using a similar approach and with a similar alternative version for polygons in general position, we prove that $ \CSTAB{ 5 } $, and thus $ \STAB{ 5 } $, is $ \NP $-hard.

Finally, we show how to modify our constructions for $ \CSTAB{ 4 } $ and $ \CSTAB{ 5 } $ to work for $ \CSTAB{ ( 4 + 2 m ) } $ and $ \CSTAB{ ( 5 + 2 m ) } $ for any $ m \geq 1 $.
Therefore, $ \CSTAB{ k } $ is $ \NP $-hard for thin polygons for all $ k \geq 4 $, which implies hardness for $ \STAB{ k } $.
Similarly, we show that, for all $ k \geq 4 $, $ \CSTAB{ k } $ is $ \NP $-hard for polygons in general position.

\paragraph{Problem used for the reductions: $ \RPM $\ARXIVVERSIONONLY{.}}
We use an $ \NP $-hard problem called \emph{rectilinear planar monotone $ \SAT{ 3 } $} ($ \RPM $)~\cite{BK12} to prove the hardness results by reduction.
The $ \RPM $ problem is a version of $ \SAT{ 3 } $ where every clause is either negative or positive, i.e., consists of either three positive literals or three negative literals.
Furthermore, the 
bipartite graph $ \graph _ \formula $ (defined by creating a vertex for every variable and clause of a given a $ 3 $-CNF formula $ \formula $ and adding an edge $ ( \clause , \variable ) $ whenever $ \variable $ is a variable of a clause $ \clause $)
has a drawing where the following holds (Figure~\ref{fig:RPMStabbingN4}).

\begin{figure}[hpt]
    \centering
    \includegraphics[scale=\graphicsScale,page=1]{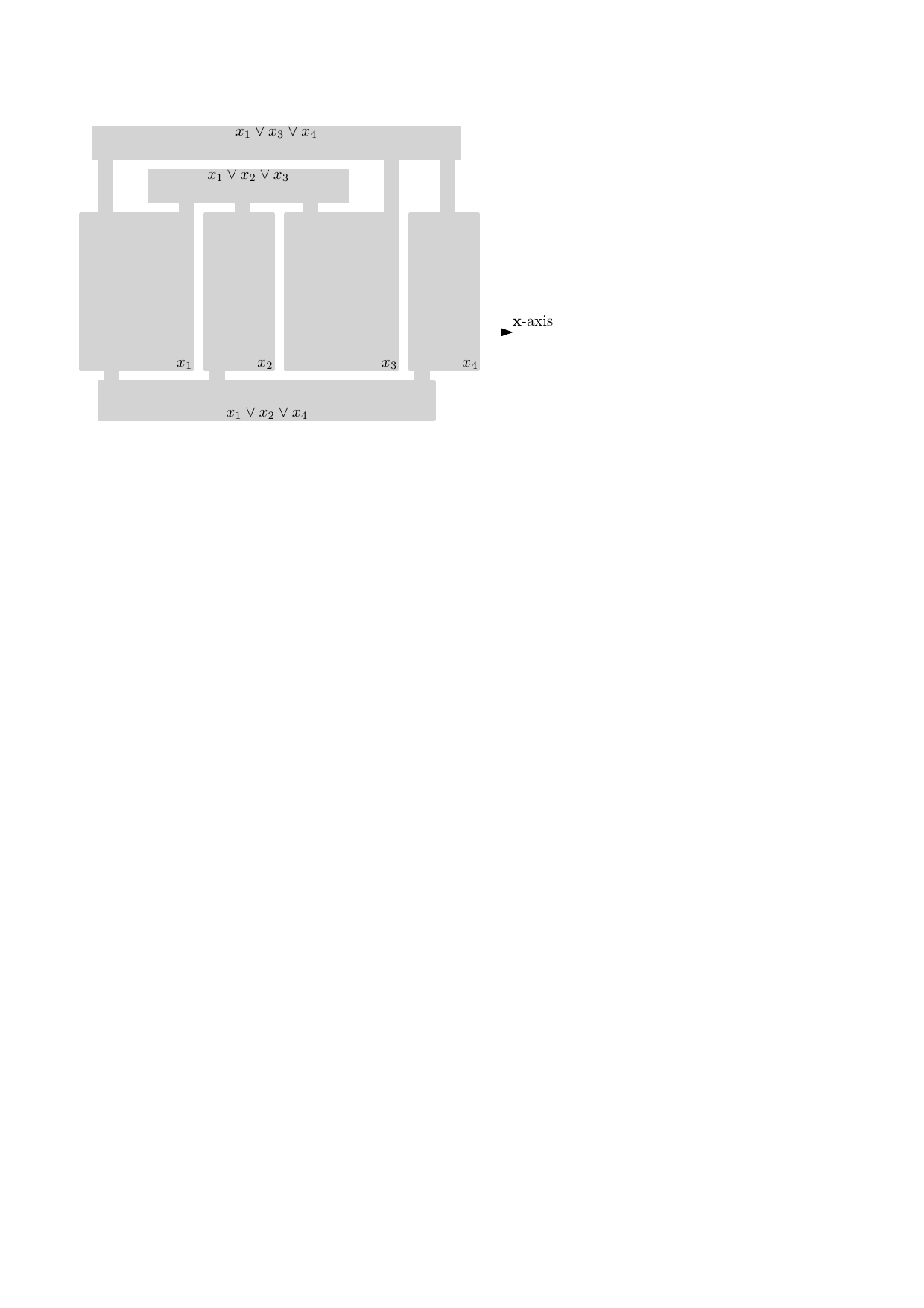}
    \caption{An $ \RPM $ drawing of $ \formula = ( \variable _ 1 \lor \variable _ 3 \lor \variable _ 4 ) \land  ( \variable _ 1 \lor \variable _ 2 \lor \variable _ 3 ) \land  ( \overline{ \variable _ 1 } \lor \overline{ \variable _ 2 } \lor \overline{ \variable _ 4 } ) $.}
    \label{fig:RPMStabbingN4}
\end{figure}

\begin{itemize}
    \item Each variable is represented by a \emph{variable rectangle},
    an axis-aligned rectangle that intersects the $ \xcoor $-axis. 
    \item Each clause is represented by a \emph{clause rectangle}, an
        axis-aligned rectangle that is above the $ \xcoor $-axis for a positive clause and below the $ \xcoor $-axis for a negative clause.
    \item The edges are drawn as \emph{channels}, i.e., as open axis-aligned rectangles whose top and bottom sides touch the corresponding variable and clause rectangles.
    \item All of the variable rectangles, clause rectangles, and edge channels are pairwise interior disjoint.
\end{itemize}

In the following, $ \formula $ is an arbitrary instance of $ \RPM $ and implicitly comes with an $ \RPM $ drawing.

\paragraph{Proof of the $ \NP $-hardness of $ \CSTAB{ 4 } $ for thin polygons\ARXIVVERSIONONLY{.}}
We reduce $ \RPM $ to $ \STAB{ 4 } $ in polynomial time.
We illustrate the reduction on the instance $ \formula $ of $ \RPM $ in Figure~\ref{fig:RPMStabbingN4}, see the resulting polygon $ \polygon ( \formula ) $ in Figure~\ref{fig:fullGadgetStabbingN4}(a).

\subparagraph{Gadgets.}
Our reduction uses multiple gadgets, that is, polygon pieces from which we build the polygon $ \polygon ( \formula ) $.
Specifically we use \emph{variable gadgets}, \emph{split gadgets} and \emph{clause gadgets}, and the former two in turn use \emph{forcer gadgets}.
In this section, we define these gadgets mostly via figure; precise descriptions with coordinates are given in the appendix.
We also give a local description (gadget per gadget) of key properties
of any conforming partition $ \partition $ of $ \polygon ( \formula ) $ that has stabbing number at most~$ 4 $ and that is \emph{minimal} (i.e., no reflex segment can be removed while retaining a conforming partition).
Besides, we exhibit key instances of such a partition in the figures.
Again, we refer to the appendix for the formal statements and their proofs.

\begin{figure}[hpt]
    \centering
    \includegraphics[scale=\graphicsScale,page=6]{fullGadgetStabbingN4.pdf}
    \caption{
    The polygon $ \polygon ( \formula ) $ (not to scale) of the $ \RPM $ drawing of $ \formula = ( \variable _ 1 \lor \variable _ 3 \lor \variable _ 4 ) \land  ( \variable _ 1 \lor \variable _ 2 \lor \variable _ 3 ) \land  ( \overline{ \variable _ 1 } \lor \overline{ \variable _ 2 } \lor \overline{ \variable _ 4 } ) $. Forcer gadgets are represented by squares labeled F. 
    The reflex segments of a partition with stabbing number~$ 4 $ are solid bold (in red), the other reflex segments are dotted (in red).
    Vertical stabbing segments are bold, and propagate~$ 0 $ if they are green (lightly shaded) and~$ 0 $ if they are purple (darkly shaded).
    We use $ 0 ^ \star $ for a value that is~$ 1 $ in the variable assignment but that has been decreased by a variable gadget or by a split gadget and is propagated as~$ 0 $.
    }
    \label{fig:fullGadgetStabbingN4}
\end{figure}

\begin{itemize}
\item
A {\bf forcer gadget} is shown in Figure~\ref{fig:forcerGadgetOverview}(a).
This is a thin polygon similar to a $ 4 { \times } 4 $ grid, with an extension to attach it to the rest of the polygon through a pixel called \emph{connection pixel}.
Let the \emph{force-stab} be the vertical stabbing segment through the connection pixel.
The grid-like structure and the seven pixels in each row and column imply the following:
\begin{quotation}
In any partition $ \partition $, the force-stab intersects three reflex segments within the $ 4 { \times } 4 $ grid.
Besides, such a partition exists.
\end{quotation}
See  Appendix~\ref{sec:forcer-gadget}  and Lemma~\ref{lem:forcerGadgetStabbingN4}  for details. 
In consequence, the force-stab is not allowed to intersect reflex segments outside the forcer gadget, and in particular, enforces the presence  of the indicated parallel pair of reflex segments in the pixel to which it is attached. 
In our later construction, we indicate a forcer gadget by a square labeled F; see Figure~\ref{fig:forcerGadgetOverview}(c).

\begin{figure}
    \hspace*{\stretch{1}}
    \subcaptionbox{}{\includegraphics[scale=\graphicsScale,page=14]{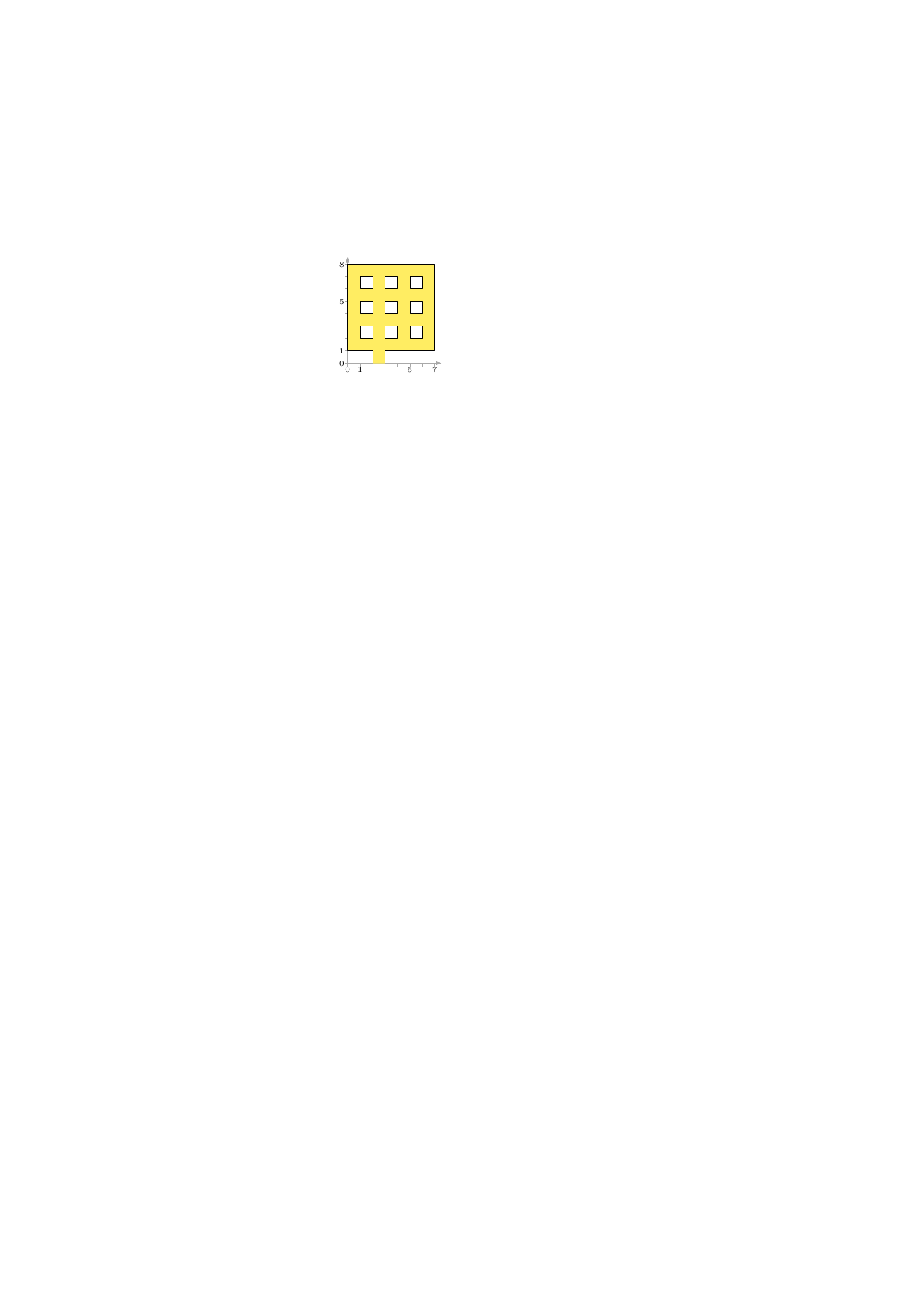}}
    \hspace*{\stretch{2}}
    \subcaptionbox{}{\includegraphics[scale=\graphicsScale,page=5]{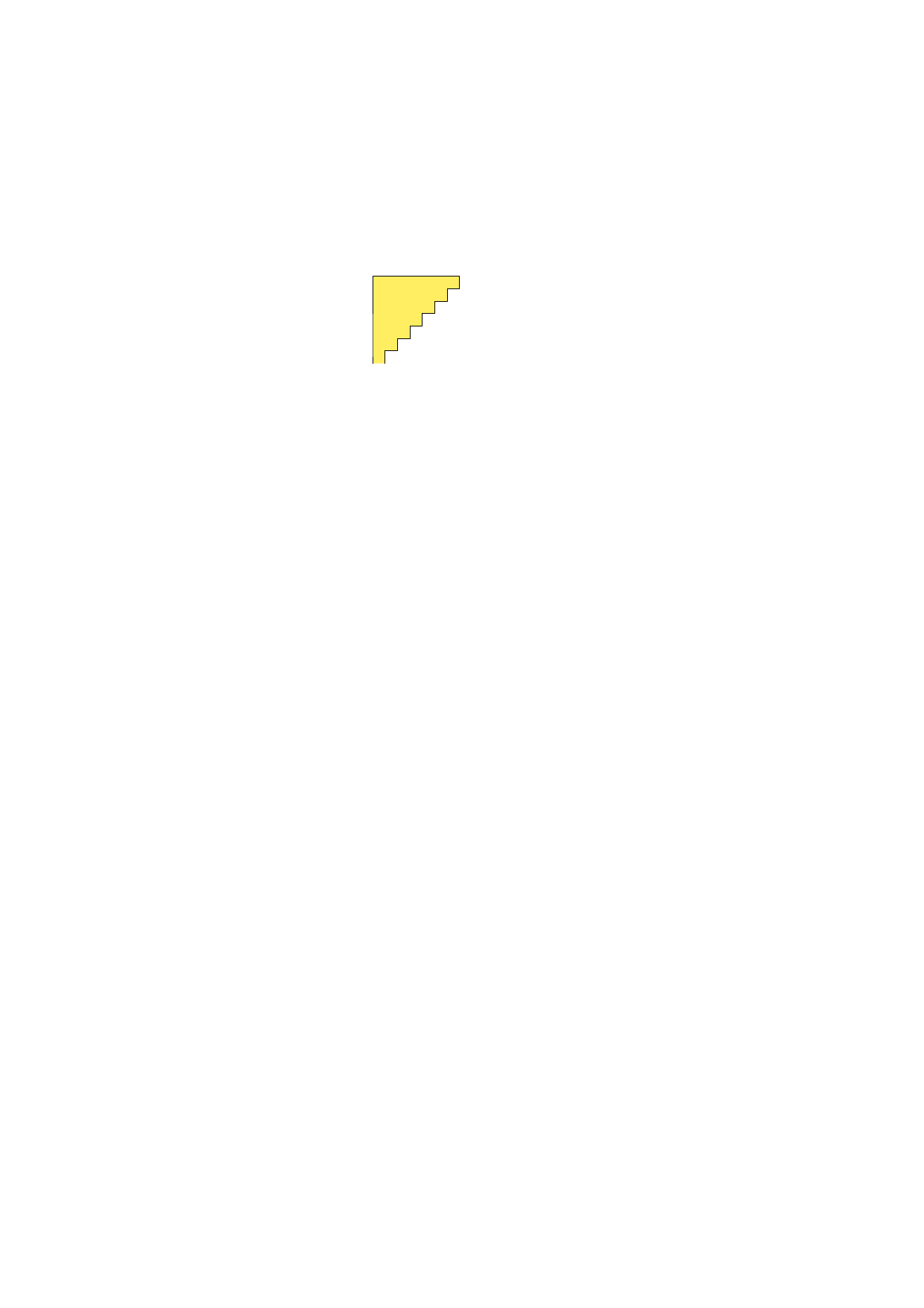}}
    \hspace*{\stretch{2}}
    \subcaptionbox{}{\includegraphics[scale=\graphicsScale,page=13]{forcerGadgetStabbingN4.pdf}}
    \hspace*{\stretch{1}}
    \caption{Forcer gadgets (with their force-stab in bold purple, i.e., darkly shaded), for (a) thin polygons and (b) polygons in general position. (c) A schematic drawing of the forcer gadget.}
    \label{fig:forcerGadgetOverview}
\end{figure}

\item
Before describing the other gadgets, we first need to discuss how
information is propagated between them.
This happens via vertical stabbing segments that intersects two gadgets.
(In Figure~\ref{fig:fullGadgetStabbingN4}, such segments are bold.)
We consider such a stabbing segment $\segment$ as directed away from the $ \xcoor $-axis and call it an \emph{out-stab} of the gadget containing its tail, and an \emph{in-stab} of the gadget containing its head.
When considering a partition $ \partition $, we say that $ \segment $ \emph{propagates}~$ 0 $ (standing for ``false'') if $ \segment $ intersects three segments of $ \partition $ within the gadget $ \gadget $ containing its tail, and that $ \segment $ propagates~$ 1 $ (standing for ``true'') if $ \segment $ intersects at most two segments of $ \partition $ within $ \gadget $.

\item The {\bf variable gadget} for a variable $ \variable $ is placed inside the variable rectangle of $ \variable $.
It consists of a height-$ 1 $ axis-aligned rectangle that straddles the $ \xcoor $-axis, with two width-$ 1 $ axis-aligned rectangles attached above and below at different $ \xcoor $-coordinates. 
See Figure~\ref{fig:otherGadgetsOverviewN4}(a).
The vertical stabbing segments through these latter rectangles define the two out-stabs of the variable gadgets; the upward one is assigned to the positive literal $ \variable $ and the downward one to the negative literal $ \overline{ \variable } $. 
We attach a forcer gadget at each width-$ 1 $ rectangle such that, in $ \partition $, the corresponding out-stab intersects two reflex segments within the variable gadget. 
As for determining whether the remaining reflex segments of the variable gadget belong to $ \partition $, there are exactly three possibilities summarized as follows:
\begin{quotation}
    In any partition $ \partition $, not both out-stabs propagates~$ 1 $, but all other combinations of propagated values exist.
\end{quotation}
See  Appendix~\ref{sec:variable-gadget}  and Lemma~\ref{lem:variableGadgetStabbingN4}  for details. 
For each variable $ \variable $, we can therefore read
from the partition $ \partition $ of its variable gadget a value for $ \variable $, namely, $ \variable $ is assigned the value propagated by the out-stab of the literal $ \variable $.
Note that we set $ \variable = 0 $ (by convention) if both out-stabs propagate~$ 0 $ (the convention $ \variable = 1 $ would have worked as well).
In the partition of $ \polygon ( \formula ) $ in Figure~\ref{fig:fullGadgetStabbingN4}(a), $ \variable _ 1 , \variable _ 2 , \variable _ 3 , \variable _ 4 = 0 , 1 , 1 , 0 $ (even though the out-stab of the literal $ \overline{ \variable _ 4 } $ propagates~$ 0 $).

\begin{figure}[ht]
    \hspace*{\stretch{1}}
    \subcaptionbox{}{\includegraphics[scale=\graphicsScale,page=3]{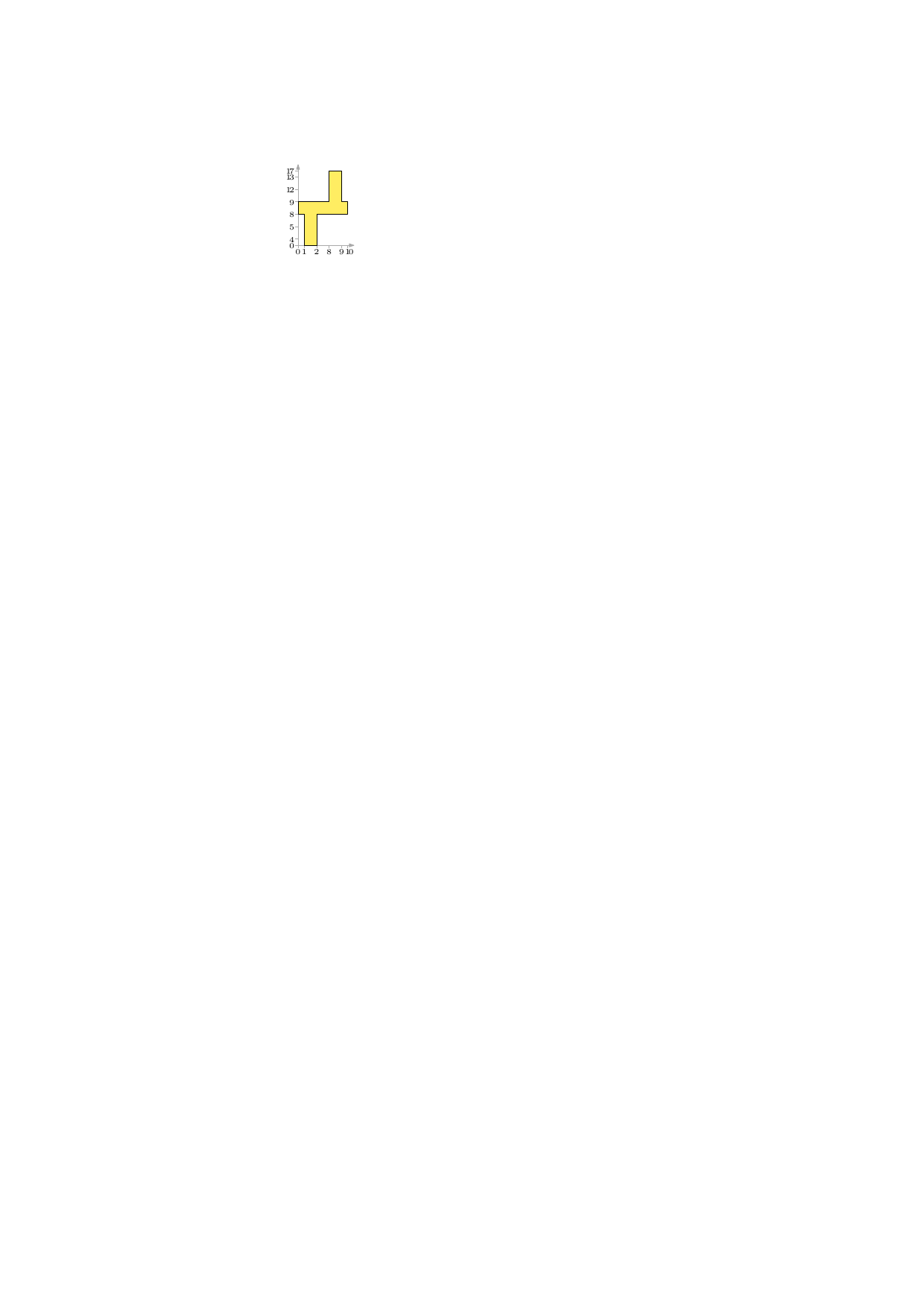}}
    \hspace*{\stretch{2}}
    \subcaptionbox{}{\includegraphics[scale=\graphicsScale,page=3]{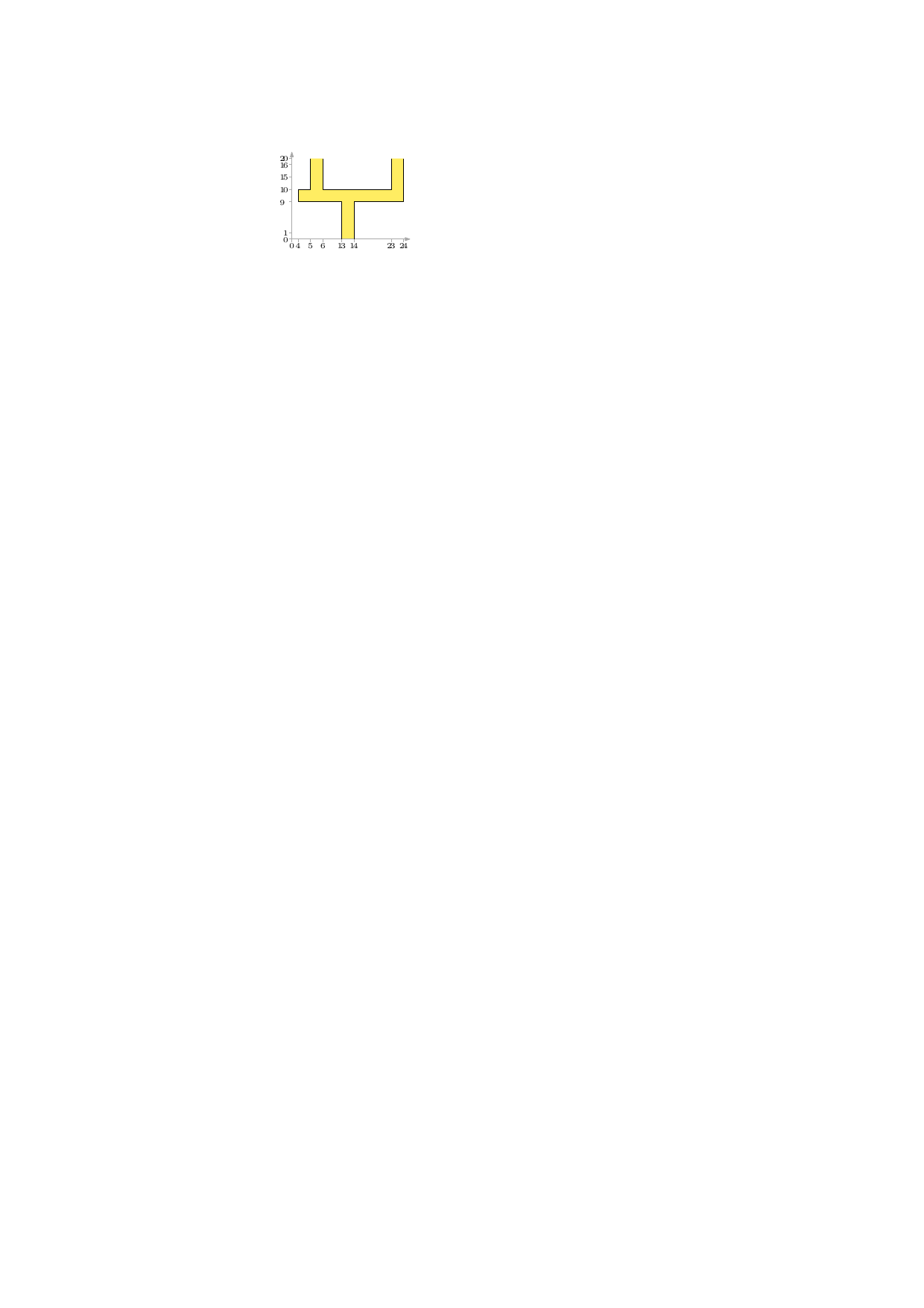}}
    \hspace*{\stretch{2}}
    \subcaptionbox{}{\includegraphics[scale=\graphicsScale,page=11]{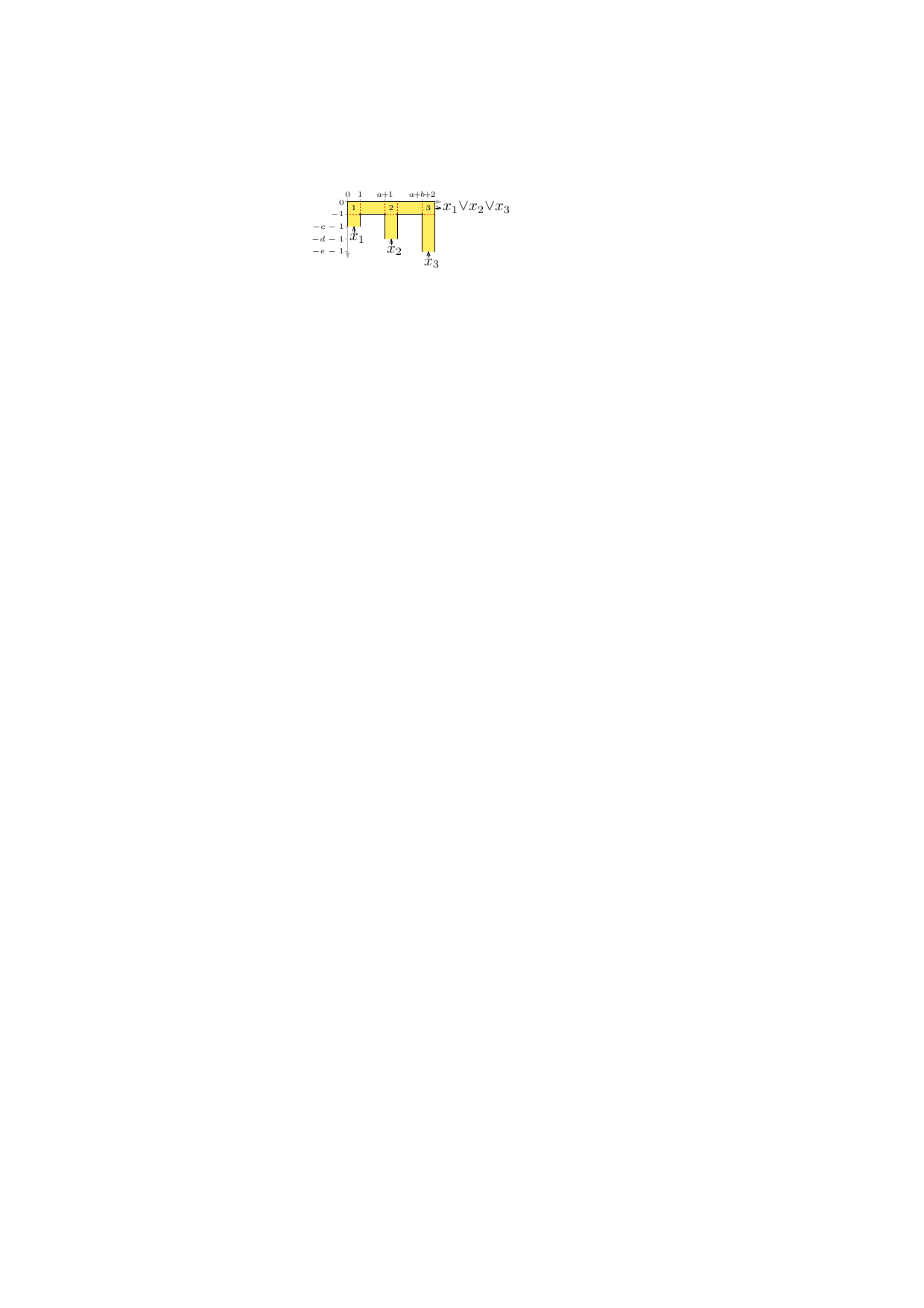}}
    \hspace*{\stretch{1}}
    \caption{(a) Variable gadget, (b) split gadget, and (c) clause gadget for thin polygons and $\CSTAB{4}$.}
    \label{fig:otherGadgetsOverviewN4}
\end{figure}

\item
The {\bf split gadgets} for a variable $ \variable $ are also drawn inside the variable rectangle of $ \variable $.
In Figure~\ref{fig:fullGadgetStabbingN4}, we only need split gadgets for the positive literals $ \variable _ 1 $ and $ \variable _ 3 $; these split gadgets are placed above the variable gadgets.
In general there could also be split gadgets for negative literals (which are placed below the variable gadgets), and there could be nested split gadgets in case a literal is used in more than two clause.
A \emph{positive split gadget} consists of a height-$ 1 $ axis-aligned rectangle, with one width-$ 1 $ axis-aligned rectangle attached below and two width-$ 1 $ axis-aligned rectangles attached above, see also Figure~\ref{fig:otherGadgetsOverviewN4}(b).
A \emph{negative split gadget} is horizontally symmetric.
The vertical stabbing segments through the width-$ 1 $ rectangles define one in-stab and two outer-stabs of the split gadget.
We attach forcer gadgets to these rectangles that force both out-stabs to intersect two reflex segments, and the stabbing segment of the height-$ 1 $ rectangle to intersect one reflex segment.
A case analysis shows that the following holds:
\begin{quotation}
    In any partition $ \partition $, the value propagated by each out-stab is at most the value propagated by the in-stab.
    Besides, such a partition exists for all such combinations of propagated values.
\end{quotation}
See Appendix~\ref{sec:split-gadget} and Lemma~\ref{lem:splitGadgetStabbingN4}  for details.    
In Figure~\ref{fig:fullGadgetStabbingN4}, the split gadget of $ \variable _ 1 $ splits the in-stab's value~$ 0 $ into two out-stabs propagating~$ 0 $ as well, whereas the split gadget of $ \variable _ 3 $ splits the in-stab's value~$ 1 $ into a left out-stab propagating~$ 0 $ and a right one propagating~$ 1 $. 

\item 
A {\bf clause gadget} of clause $ \clause $ consists of a unit-height axis-aligned rectangle that is placed inside clause rectangle of $ \clause $ and spans its width.  We attach unit-width axis-aligned rectangles to connect the unit-height rectangle to the edge channels that lead to its three literals of $ \clause $; these define three in-stabs of the gadget.
It is easy to verify the following:
\begin{quotation}
Partition $ \partition $ exists only if and only if at least one of the three in-stabs propagates~$ 1 $.
\end{quotation}
See  Appendix~\ref{sec:clause-gadget} and Lemma~\ref{lem:clauseGadgetStabbingN4}  for details.    

\end{itemize}

\subparagraph{Combining gadgets.} 
We build the polygon $ \polygon ( \formula ) $ by combining the gadgets as follows. 
For each variable $ \variable $, if the positive literal of $ \variable $ appears $ t $ times in a clause, then we add $ t - 1 $ split gadgets above the variable gadget of $ \variable $. 
We translate these split gadgets such that the right out-stab of one gadget (variable or split) exactly becomes the in-stab of the next gadget. 
This leaves $ t $ out-stabs that are not yet used as in-stabs. 
We call these the \emph{out-stabs of literal $ \variable $} and extend the corresponding width-$ 1 $ rectangles upward until they reach the top side of the variable rectangle of $ \variable $.  
We assume that the variable rectangle of $ \variable $ has been increased suitably in width and height (by inserting empty rows/columns into the drawing) so that all these split gadgets fit within it. 
We also assume, after stretching the drawing within the variable rectangle horizontally as needed, that the out-stabs of the literals exactly hit the boundaries of the variable rectangles at the place where the edge channels attach. 
We include the edge channels in $ \polygon ( \formula ) $, and these in turn attach to the clause gadgets of each clause.
As such, the three in-stabs of a clause gadget of clause $ \clause $ now correspond to three out-stabs from the three literals in $ \clause $.

\subparagraph{Proof of equivalence.}
We prove that there exists a conforming partition $ \partition $ of $ \polygon ( \formula ) $ with stabbing number at most~$ 4 $ if and only if there exists a satisfying assignment for $ \formula $.
Without loss of generality, we assume $ \partition $ to be minimal.

For the first implication, we are given the partition $ \partition $.
We have already described how to read from $ \partition $ an assignment for each variable of $ \formula $ in the description of the variable gadget: use the value propagated by the upward out-stab.
We now prove that this assignment satisfies $ \formula $, by proving that each clause $ \clause $ is satisfied.
Because the stabbing number of $ \partition $ is at most~$ 4 $, the clause gadget of $ \clause $ has at least one in-stab $ \segment $ propagating~$ 1 $.
This in-stab $ \segment $ propagates~$ 1 $ as a result of a chained propagation, possibly through a number of split gadgets, and that originates from one side of a variable gadget associated with some literal $ \literal \in \{ \variable , \overline{ \variable } \} $.
Recall that, for split gadgets, the propagated value never increases.
Consequently, the out-stab of $ \literal $ also propagates $ 1 $.
Because the values propagated from a variable gadget never increase compared to the value of the corresponding literals in the assignment, $ \literal $ has value $ 1 $ in the assignment.
Therefore, the clause $ \clause $ is satisfied in the assignment.
(Lemma~\ref{lem:variableGadgetStabbingN4}, Lemma~\ref{lem:splitGadgetStabbingN4}, and Lemma~\ref{lem:clauseGadgetStabbingN4} in the appendix show the non-increasing property for the variable gadget and the split gadget, and the property that a clause gadget admitting a partition with stabbing number at most~$ 4 $ has at least one in-stab $ \segment $ propagating~$ 1 $.)

For the converse implication, we are given an assignment satisfying $ \formula $.
From this assignment, we construct a conforming partition $ \partition $ of $ \polygon ( \formula ) $ with stabbing number at most~$ 4 $ as follows.
We partition each variable gadget such that the values propagated by its out-stabs equal the values of the corresponding literals in the assignment.
We partition each split gadget such that its out-stabs propagate the same value as its in-stab.
By construction, every propagated value throughout every gadget equals the value of the corresponding literal in the assignment.
Since the assignment satisfies $ \formula $, at least one in-stab per clause gadget propagates~$ 1 $.
It is then possible to partition each clause gadget with stabbing number at most~$ 4 $.
(Lemma~\ref{lem:variableGadgetStabbingN4}, Lemma~\ref{lem:splitGadgetStabbingN4}, and Lemma~\ref{lem:clauseGadgetStabbingN4} in the appendix show that all these partitions of gadgets exist.)

\paragraph{Proof of the $ \NP $-hardness of $ \CSTAB{ 4 } $ for polygons in general position\ARXIVVERSIONONLY{.}}
In the previous reduction, we use aligned reflex vertices most notably in the forcer gadget.
To achieve the reduction for polygons in general position, we design a completely different forcer gadget based on a staircase with~$ 6 $ reflex vertices; see Figure~\ref{fig:forcerGadgetOverview}(b) for a picture and Section~\ref{sec:4intractableGeneralPosition} for details.
For all other gadgets, having aligned reflex vertices turns out to be not important if we need not have a thin polygon.   Specifically, we show that by shifting all some reflex vertices in these gadgets suitably, we obtain gadgets in general position which have the same properties with respect to the values of their in-stabs and out-stabs.
Figure~\ref{fig:GadgetsGeneralN4} shows the modified variable gadget, split gadget and clause gadget, and Section~\ref{sec:4intractableGeneralPosition} describes their construction in detail.
The proof of NP-hardness then carries over verbatim.

\begin{figure}[ht]
    \hspace*{\stretch{1}}
    \subcaptionbox{}{\includegraphics[scale=\graphicsScale,page=1]{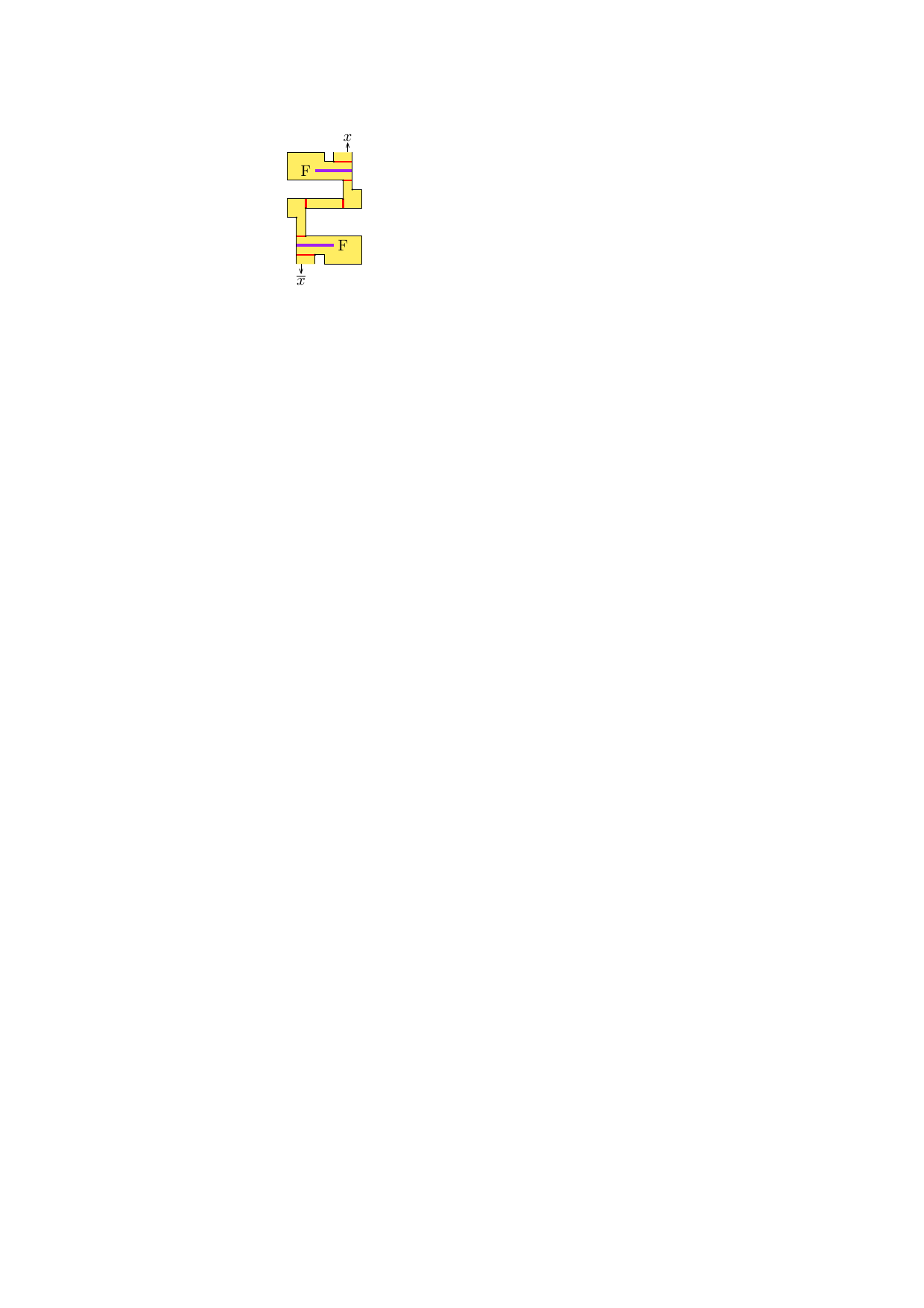}}
    \hspace*{\stretch{2}}
    \subcaptionbox{}{\includegraphics[scale=\graphicsScale,page=1]{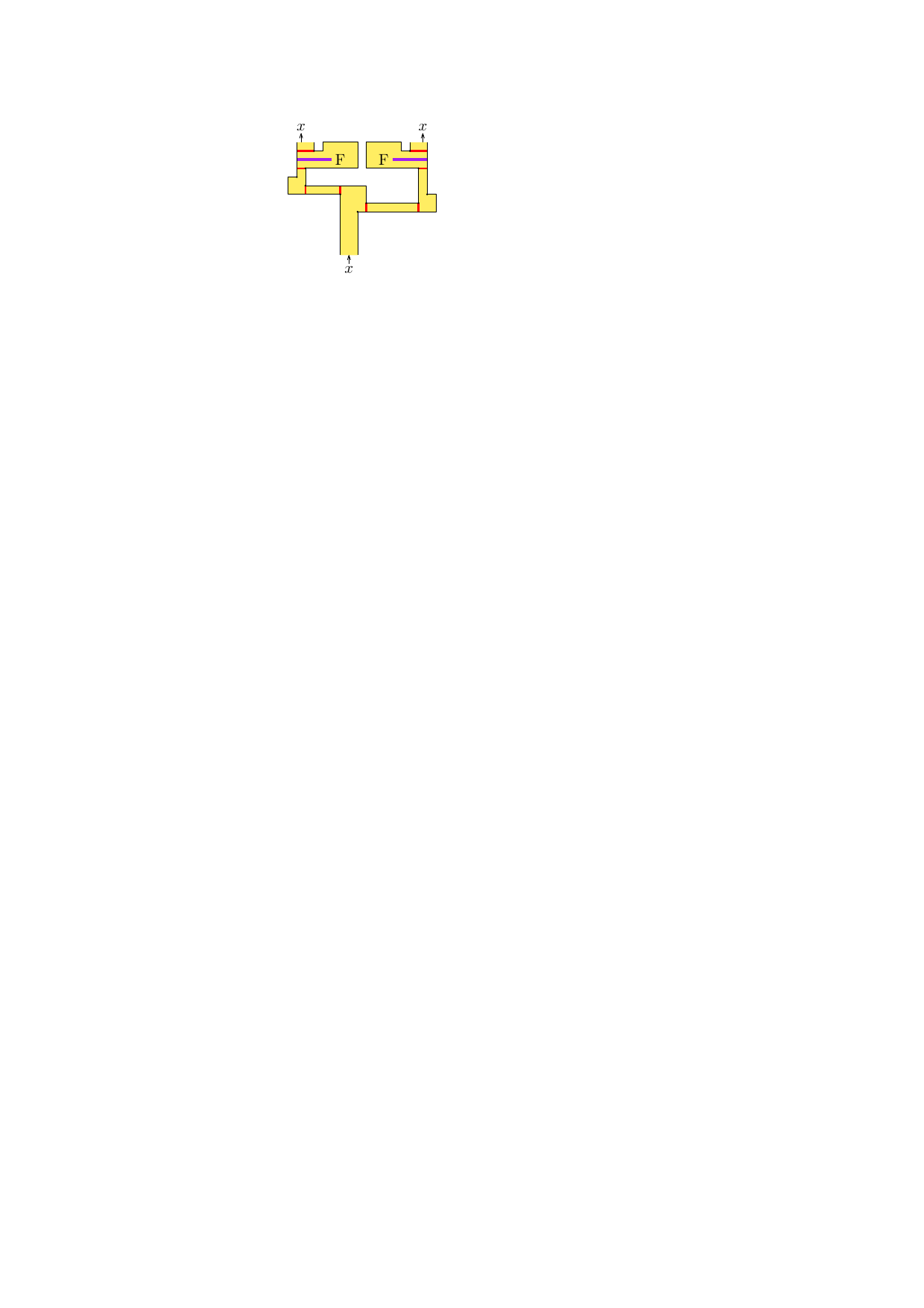}}
    \hspace*{\stretch{2}}
    \subcaptionbox{}{\includegraphics[scale=\graphicsScale,page=1]{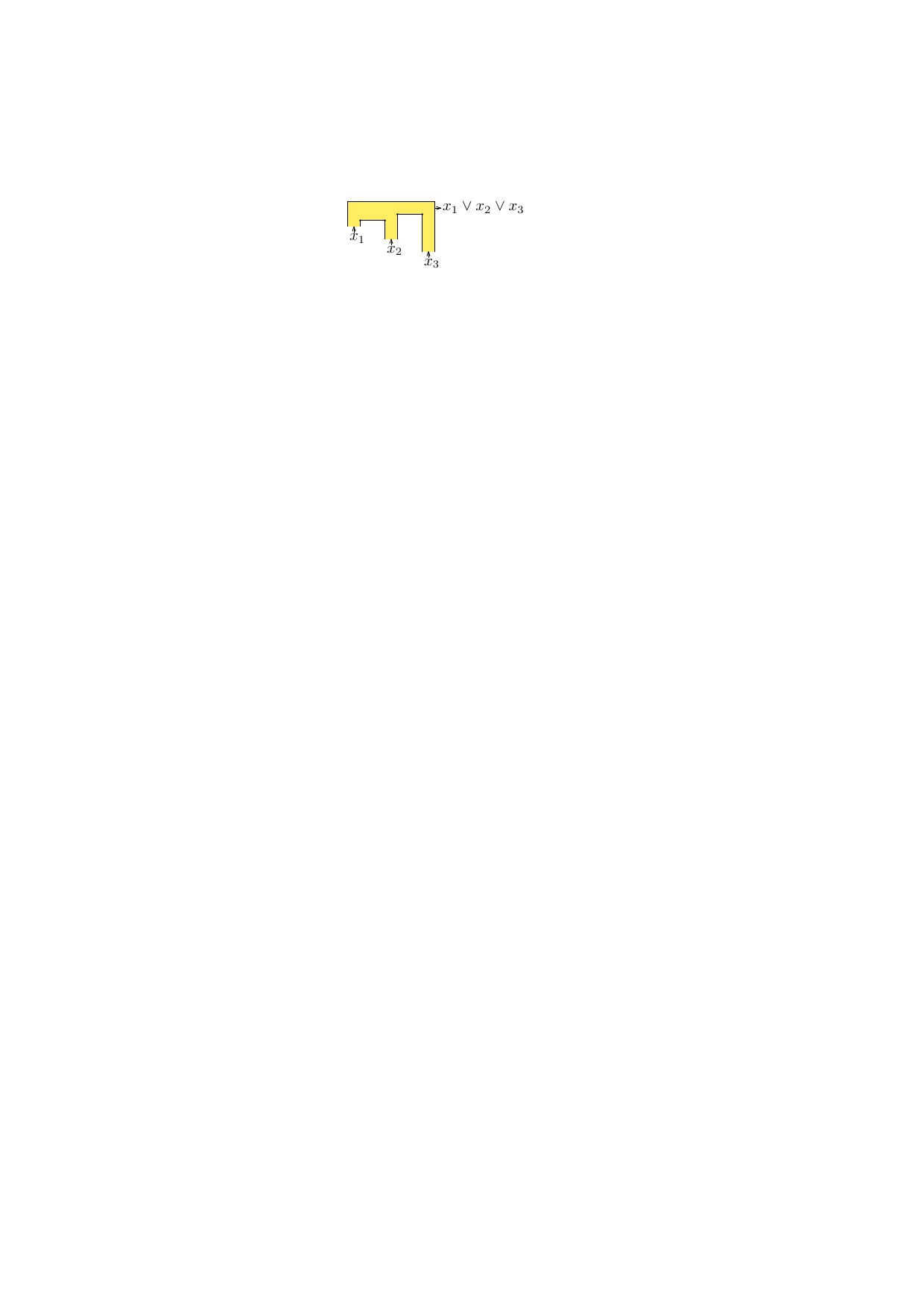}}
    \hspace*{\stretch{1}}
    \caption{The modified (a) variable gadget, (b) split gadget, and (c) clause gadget for polygons in general position and $\CSTAB{4}$.}
    \label{fig:GadgetsGeneralN4}
\end{figure}

\paragraph{Proof of the $ \NP $-hardness of $ \CSTAB{ 5 } $ for thin polygons and for polygons in general position\ARXIVVERSIONONLY{.}}
We prove that $ \CSTAB{ 5 } $ is $ \NP $-hard using the exact same reduction idea with slightly modified gadgets, be it for thin polygons or for polygons in general position.

Informally, the gadgets for $ \CSTAB{ 5 } $ for thin polygons are obtained from the gadgets for $ \CSTAB{ 4 } $ for thin polygons by adding one pixel at both ends of each row or column of adjacent pixels (Figure~\ref{fig:fullGadgetStabbingN5_main}).
In particular, we use this technique to generate the forcer gadget for $ \CSTAB{ 5 } $ for thin polygons (Figure~\ref{fig:forcerGadgetStabbingN5+}(a)).

As for $ \CSTAB{ 5 } $ and polygons in general position, we generalize the staircase forcer gadget for $ \CSTAB{ 4 } $ with~$ 6 $ reflex vertices to a staircase with~$ 8 $ reflex vertices.
We actually do not use this forcer gadget as is but integrate its principle into the other gadgets, as shown in Figure~\ref{fig:generalGadgetsStabbingN5}.

Details can be found in Section~\ref{sec:5+intractable}.

\begin{figure}[hpt]
    \centering
    \includegraphics[scale=\graphicsScale,page=2]{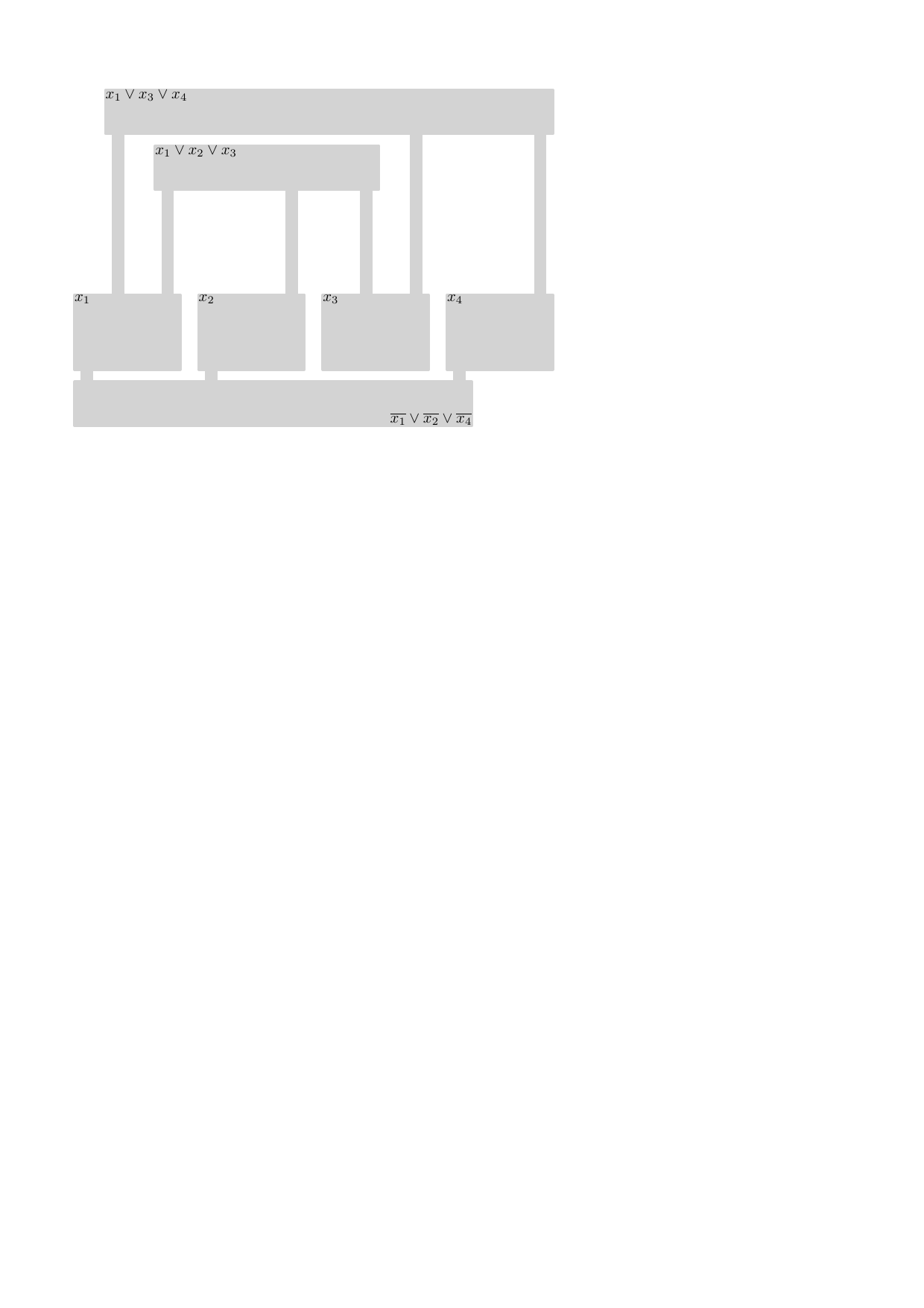}
    \caption{The polygon used to reduce $ \RPM $ to $ \STAB{ 5 } $. (The figure is not to scale, and the split gadgets are drawn outside the variable rectangles only to fit the figure in the page.)}
    \label{fig:fullGadgetStabbingN5_main}
\end{figure}

\begin{figure}[ht]
    \hspace*{\stretch{1}}
    \subcaptionbox{}{\includegraphics[scale=\graphicsScale,page=1]{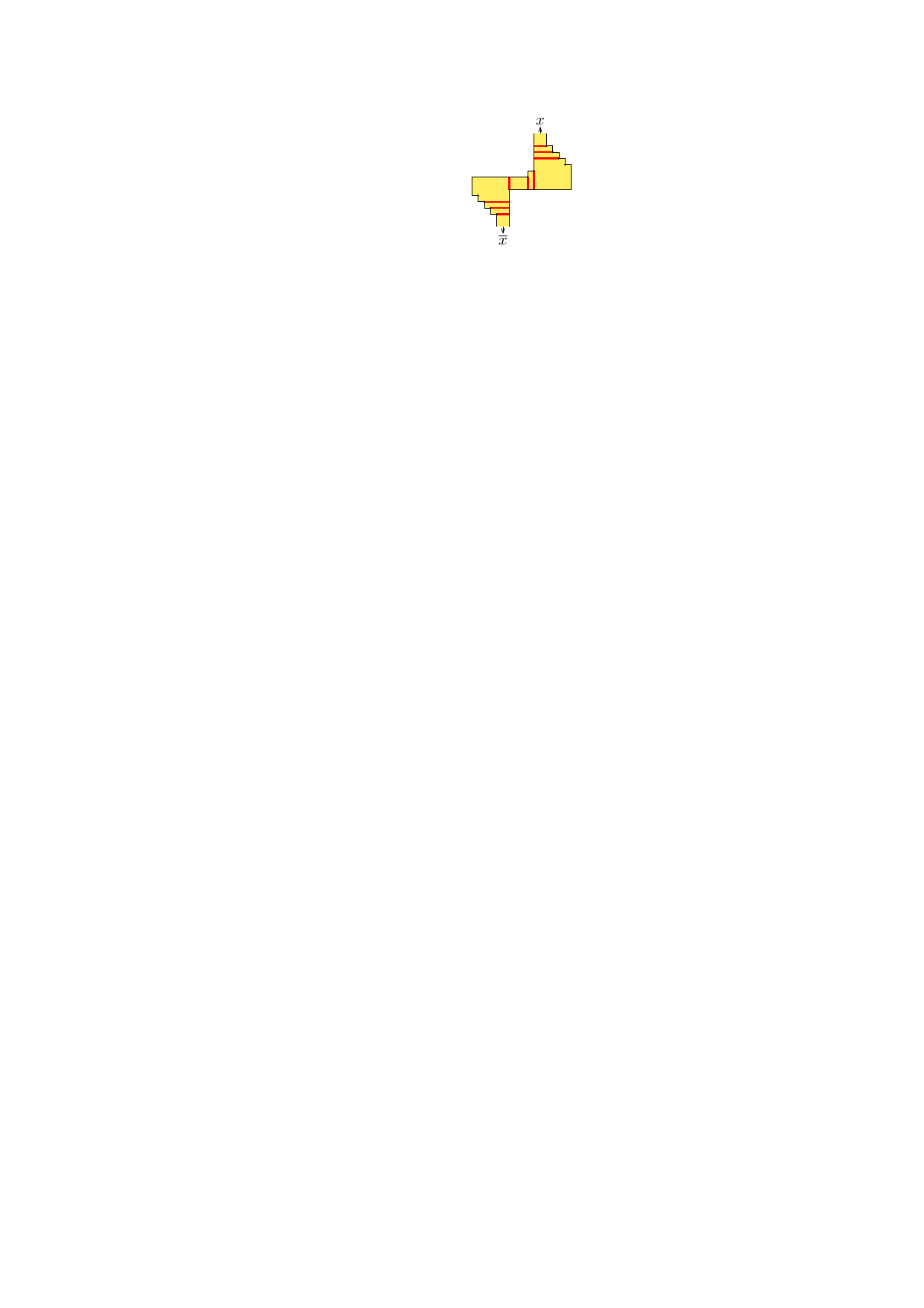}}
    \hspace*{\stretch{2}}
    \subcaptionbox{}{\includegraphics[scale=\graphicsScale,page=1]{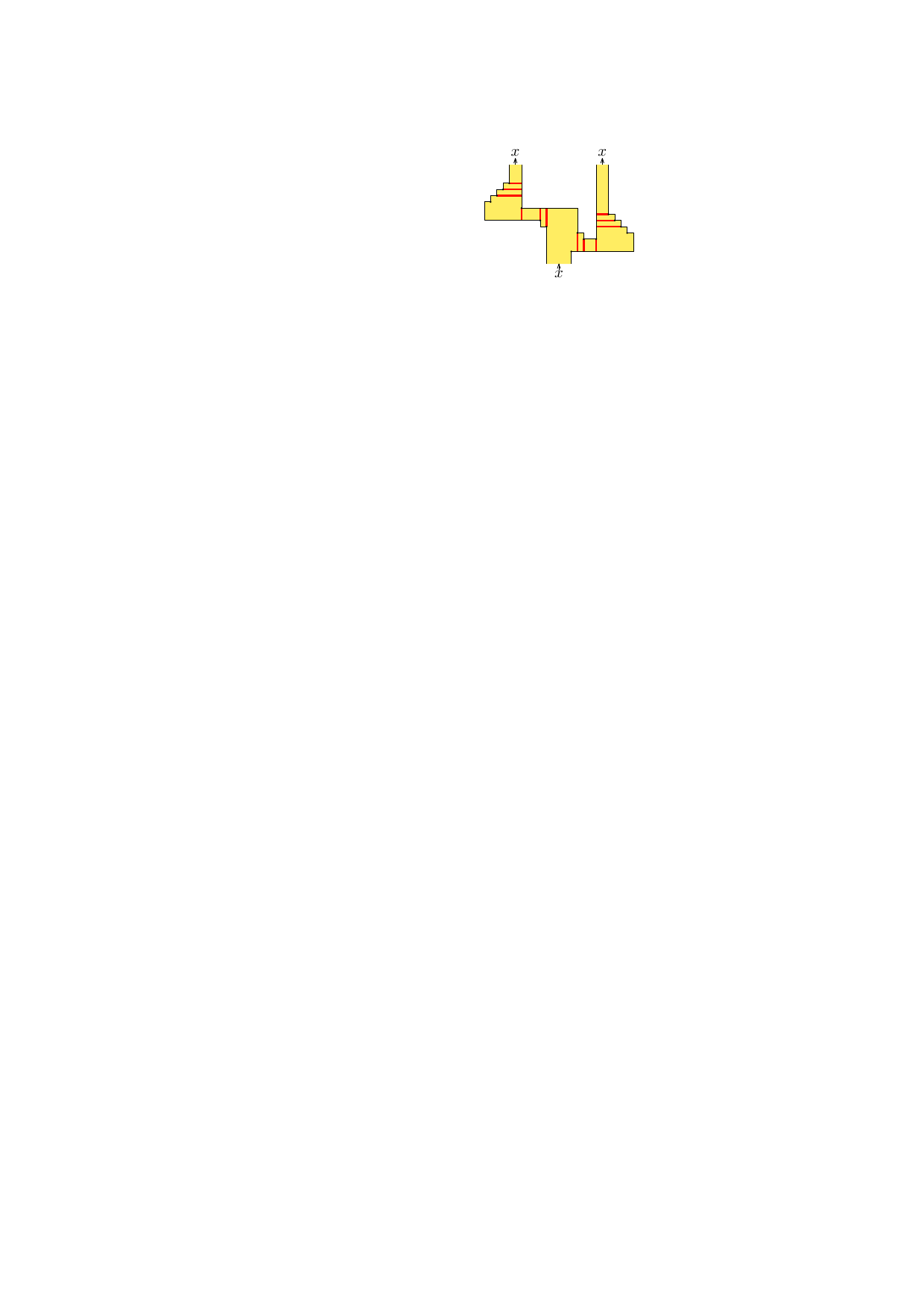}}
    \hspace*{\stretch{2}}
    \subcaptionbox{}{\includegraphics[scale=\graphicsScale,page=1]{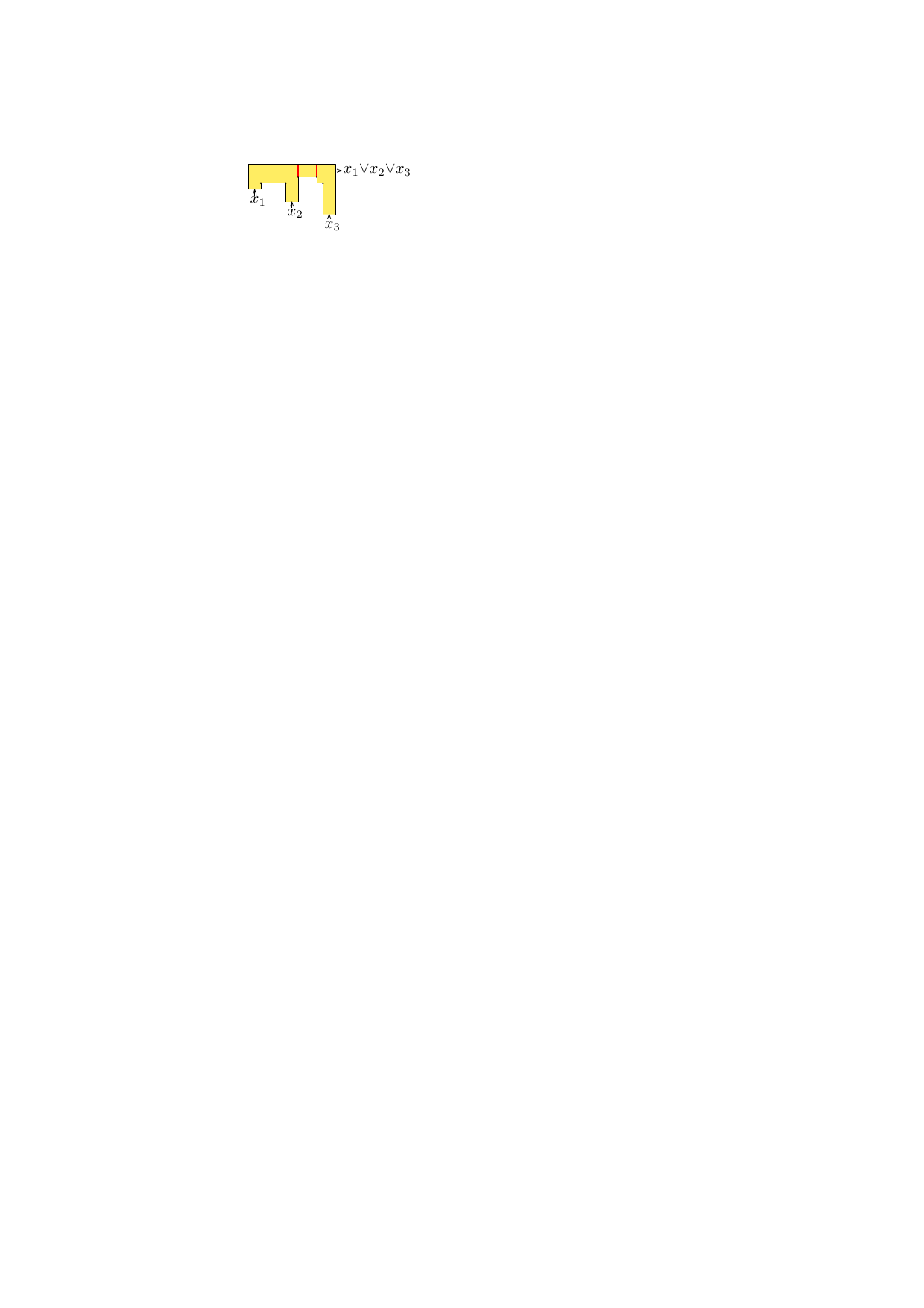}}
    \hspace*{\stretch{1}}
    \caption{In the context of $ \STAB{ 5 } $ for polygons in general position, (a) the variable gadget, (b) the split gadget, and (c) the clause gadget. The reflex segments that are forced by the gadgets themselves (i.e., the reflex segments that must be in any conforming partition with stabbing number at most~$ 5 $) are drawn in bold red. (To gain space, the gadgets are drawn with some edges on a common line which implies that they are not in general position. But since every reflex segment has exactly one endpoint which is a reflex vertex, it is straightforward to regain general position with small perturbations of the vertices of the polygons.)}
    \label{fig:generalGadgetsStabbingN5}
\end{figure}

\paragraph{Proof of the $ \NP $-hardness of $ \CSTAB{ k } $, $ k > 5 $, for thin polygons and for polygons in general position\ARXIVVERSIONONLY{.}}
The proof has two cases based on the parity of $ k $. In both cases, the two main ideas are the following.
First, the forcer gadgets for stabbing number $ 4 $ and $ 5 $ are straightforward to generalize for stabbing number $ k = 4 + 2 m $ and $ k = 5 + 2 m $, where $ m $ is a positive integer (Figure~\ref{fig:forcerGadgetStabbingN5+}).
Second, it is enough to attach $ m $ forcer gadgets for stabbing number $ k $ to the middle of each row or column of adjacent pixels of $ \polygon ( \formula ) $ (Figure~\ref{fig:fullGadgetStabbingN6}).
To avoid interference between gadgets, the new forcer gadgets are placed away from the other gadgets in such a way that they do not share a stabbing segment with other gadgets.
Doing so, the other gadgets for stabbing number $ 4 $ and $ 5 $ do not need to be modified to work for stabbing number $ k = 4 + 2 m $ and $ k = 5 + 2 m $.
See Section~\ref{sec:forcer-gadget} for details.

\begin{figure}[ht]
    \hspace*{\stretch{1}}
    \subcaptionbox{}{\includegraphics[scale=\graphicsScale,page=1]{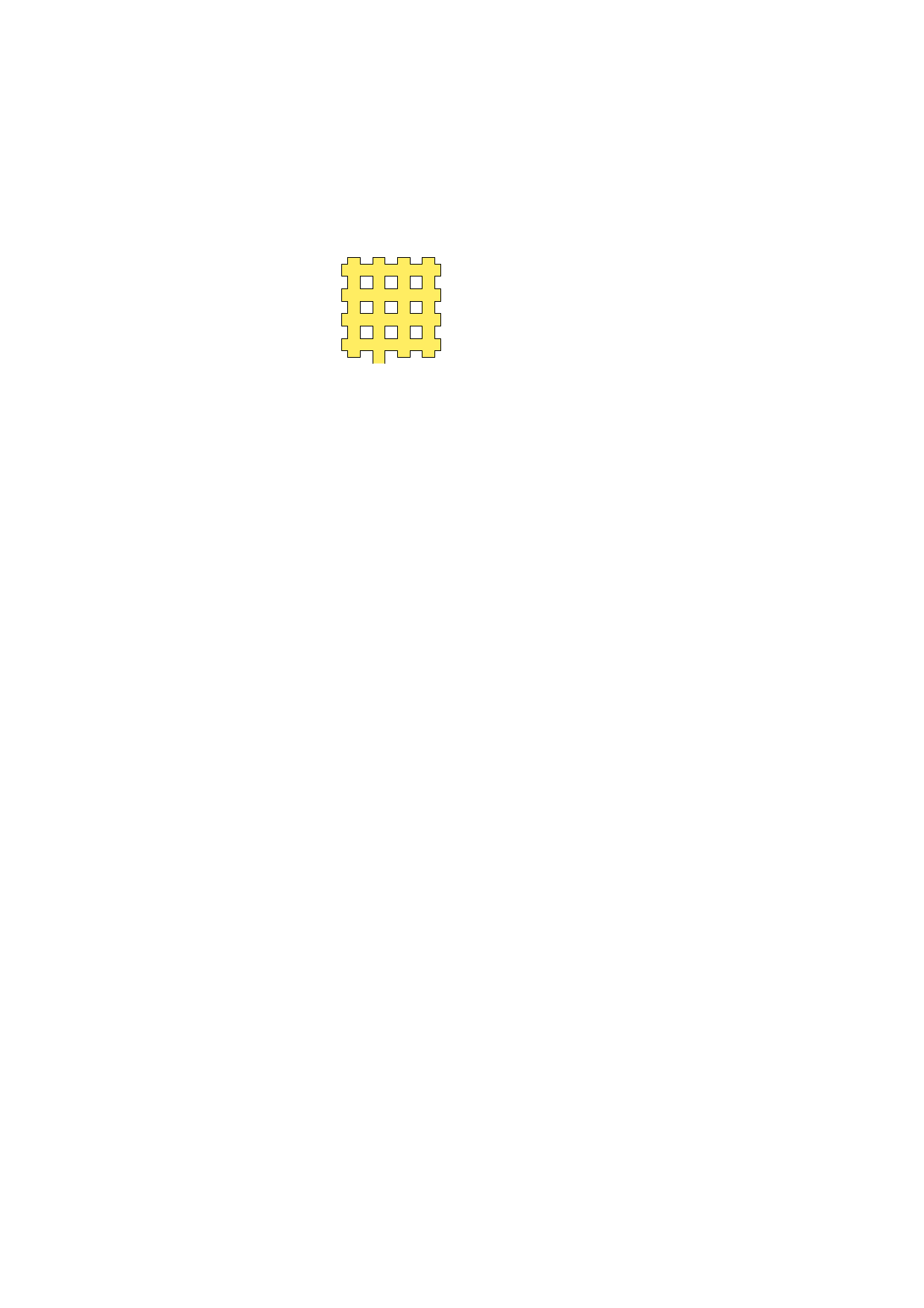}}
    \hspace*{\stretch{2}}
    \subcaptionbox{}{\includegraphics[scale=\graphicsScale,page=1]{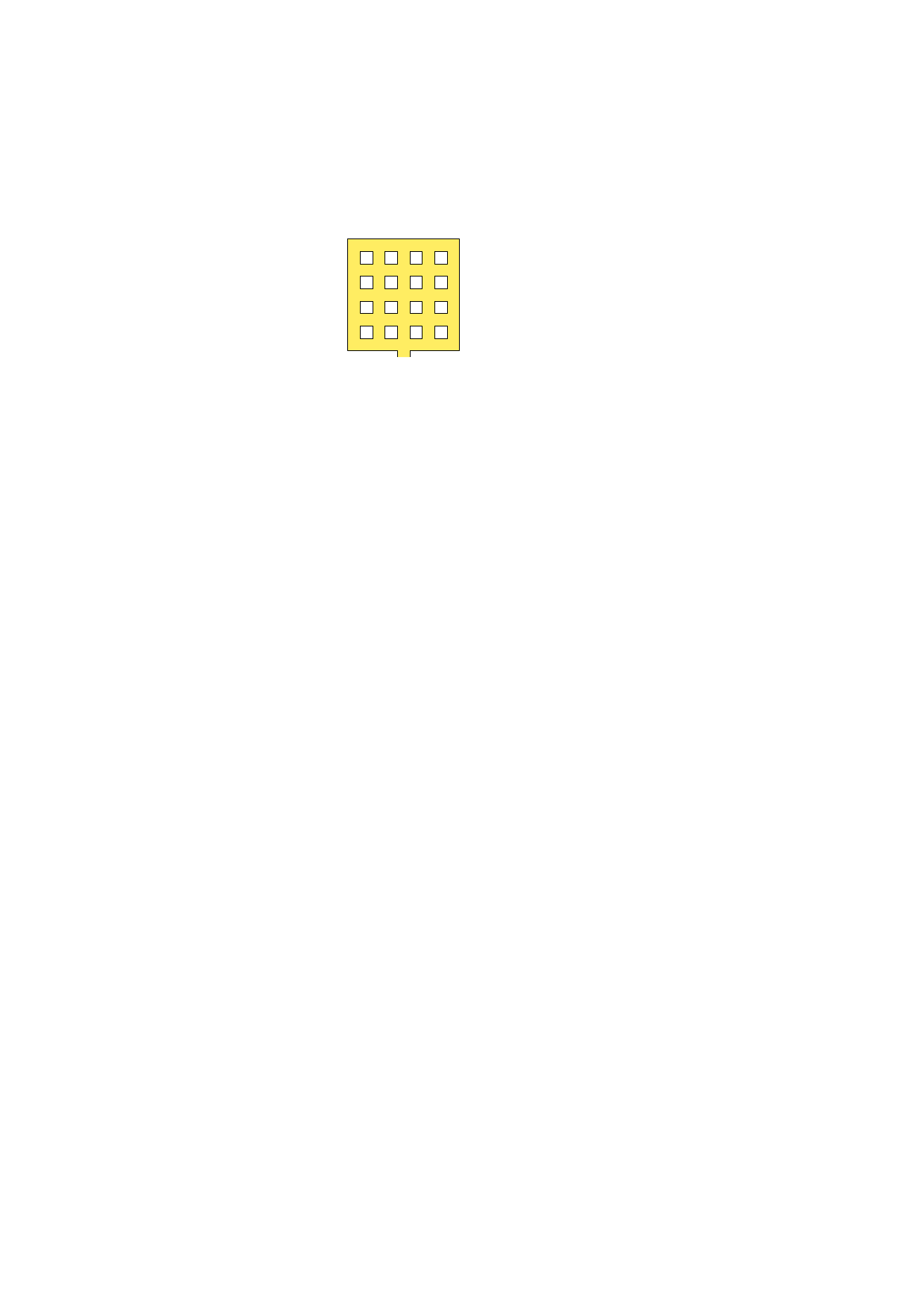}}
    \hspace*{\stretch{2}}
    \subcaptionbox{}{\includegraphics[scale=\graphicsScale,page=1]{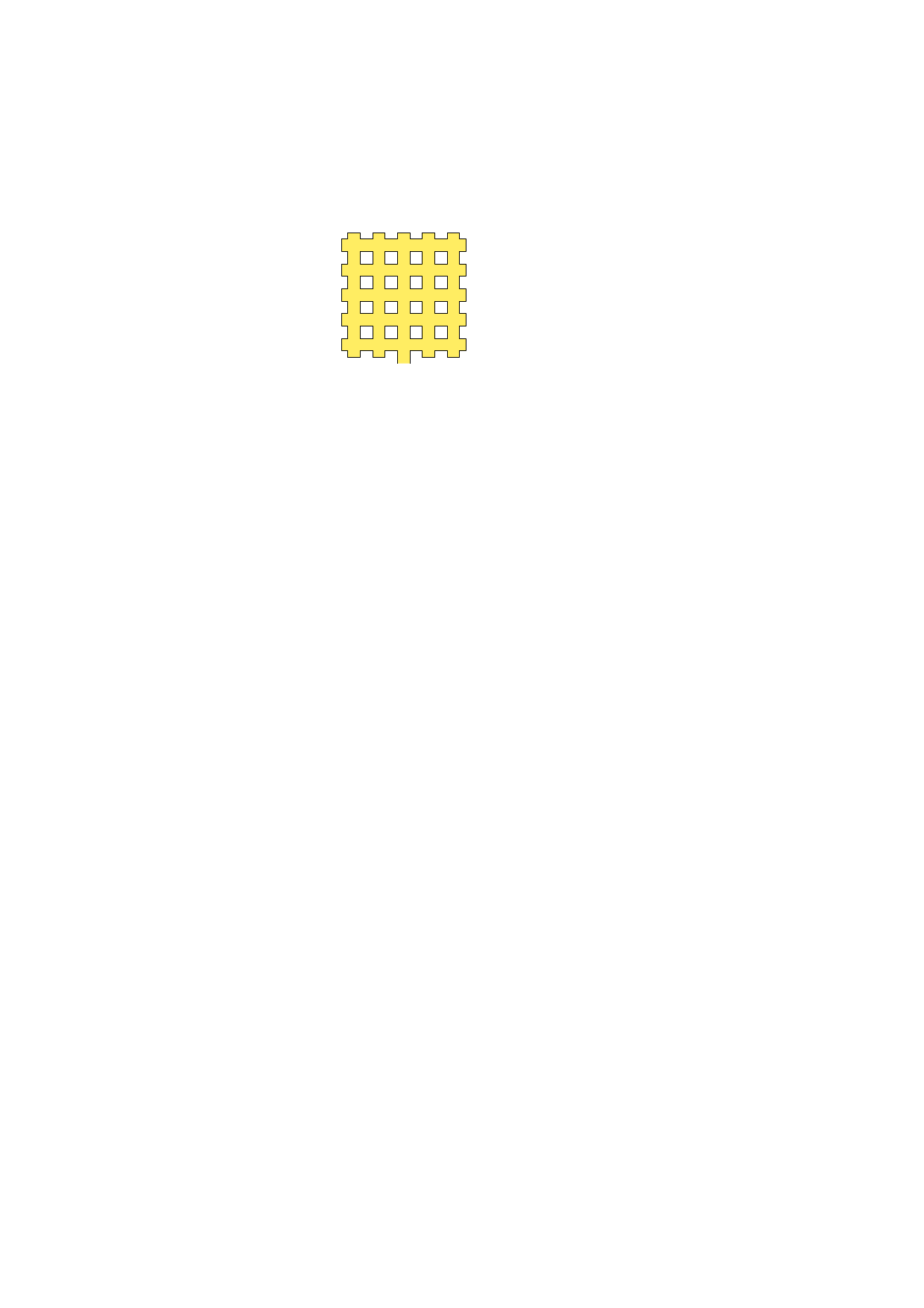}}
    \hspace*{\stretch{2}}
    \subcaptionbox{}{\includegraphics[scale=\graphicsScale,page=1]{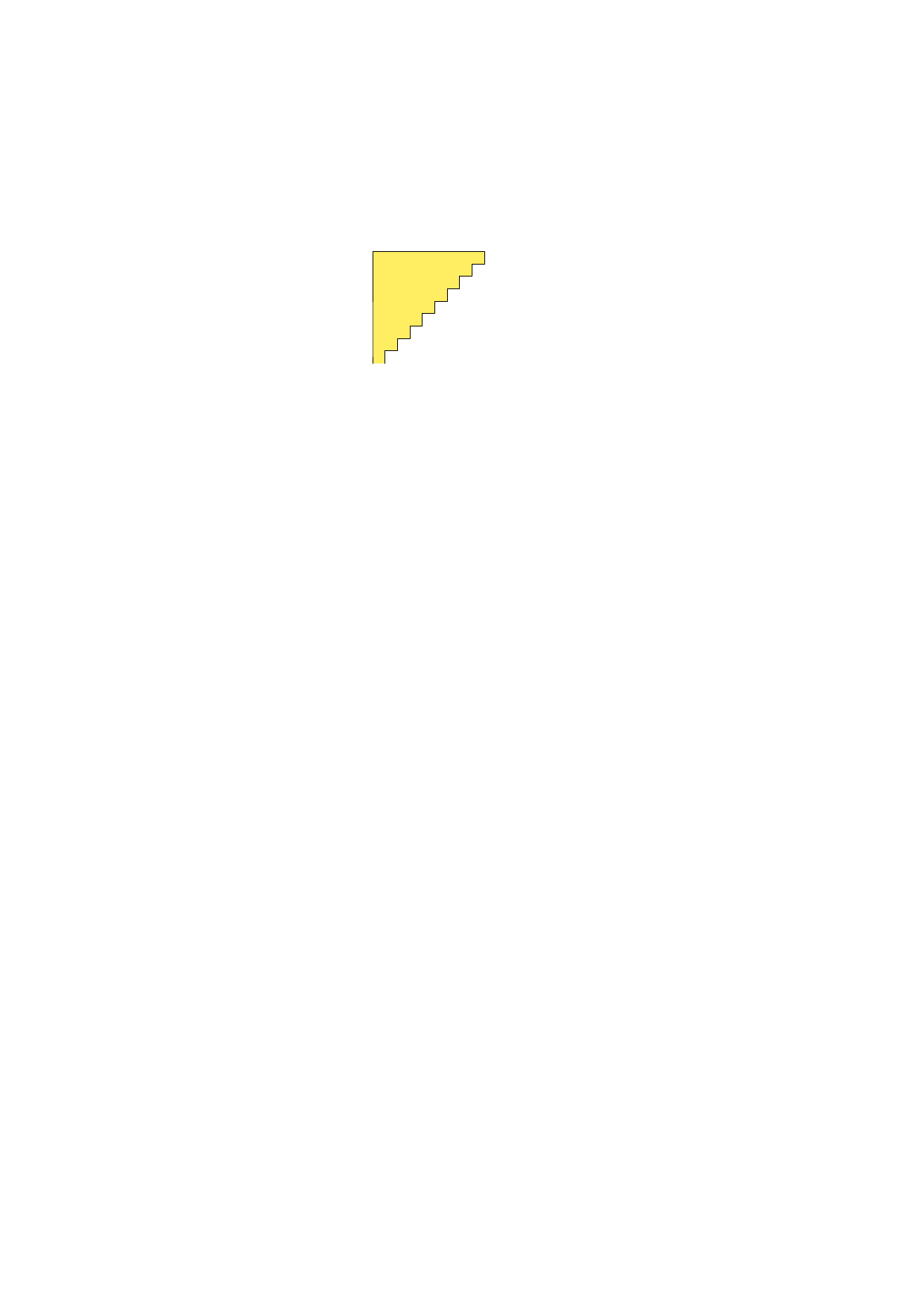}}
    \hspace*{\stretch{1}}
    \caption{(a-c) The thin forcer gadget in the context of $ \STAB{ k } $ for $ k = 5 , 6 , 7 $.
    (d) The forcer gadget in general position in the context of $ \STAB{ 5 } $.}
    \label{fig:forcerGadgetStabbingN5+}
\end{figure}

\begin{figure}[hpt]
    \centering
    \includegraphics[scale=\graphicsScale,page=7]{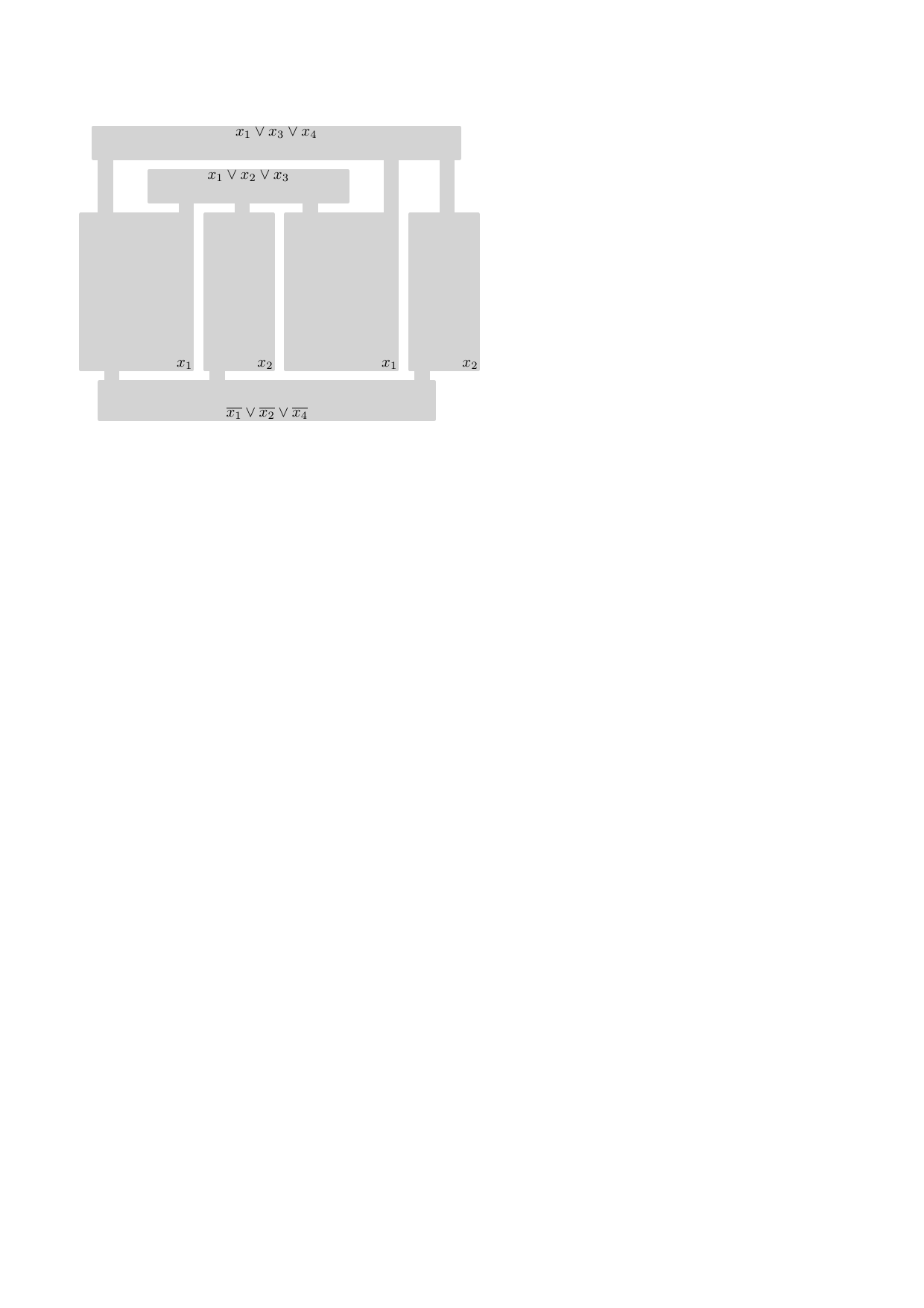}
    \caption{The polygon used to reduce $ \RPM $ to $ \STAB{ 6 } $ obtained from the polygon of Figure~\ref{fig:fullGadgetStabbingN4} by adding a forcer gadget to each row and each column of adjacent pixels. Additional forcer gadgets are orange (and smaller for space reasons).}
    \label{fig:fullGadgetStabbingN6}
\end{figure}

\section{Tractability of Conforming Stabbing Number \texorpdfstring{$ 2 $}{2}}
\label{sec:2tractable}

In this section, we consider $ \CSTAB{ 2 } $, the problem of deciding whether a rectilinear $ n $-gon $ \polygon $ (possibly with holes) has conforming stabbing number at most~$ 2 $.
We present two different proofs for the tractability of $ \CSTAB{ 2 } $.
The first proof, the proof of Lemma~\ref{lem:2SAT} hereafter, is an easy reduction of $ \CSTAB{ 2 } $ to $ \SAT{ 2 } $ and yields a quadratic running time decision algorithm.
The second proof, the proof of Theorem~\ref{thm:decideStabbing2} which spans over the entire section, provides a decision algorithm with an improved running time of $ \OO ( n \log n ) $.
Interestingly, in this second proof, we also prove that a special class of polygons, that is, the thin polygons in general position always have stabbing number at most~$ 2 $ (see Lemma~\ref{lem:thinGeneral}).

\paragraph{First proof\ARXIVVERSIONONLY{.}}
We start with a reduction from $\CSTAB{2}$ to $\SAT{2}$.

\begin{lemma}
    \label{lem:2SAT}
    Let $ \polygon $ be a rectilinear $ n $-gon (possibly with holes).
    Suppose that every stabbing segment intersects at most $ \ell $ reflex segments.
    There exists an algorithm that decides $ \CSTAB{ 2 } $ for $ \polygon $ and provides a solution (if any) in $ \OO ( \ell n ) $ time.
\end{lemma}

\begin{proof}
We reduce $ \CSTAB{ 2 } $ to $ \SAT{ 2 } $ as follows.
We declare a Boolean variable $ \variable ( \segment ) $ for every reflex segment $ \segment $. A reflex segment $ \segment $ is used in the solution if and only if $ \variable ( \segment ) $ is true. 
To ensure that we have encoded a conforming partition, we require
\begin{thmEnumerate}
    \item  $ \variable ( \horOf{ p } ) \lor \variable ( \verOf{ p } ) $ for every reflex vertex $ p $,  as well as
    \item $ \lnot \variable ( \horOf{ p } ) \lor \lnot \variable ( \verOf{ q } ) $ for any two intersecting reflex segments $ \horOf{ p } , \verOf{ q } $ (recall that reflex segments are open segments and thus do not intersect at their endpoints).
\end{thmEnumerate}
To ensure that the conforming stabbing number is at most~$ 2 $, we require that every stabbing segment intersects at most one chosen reflex segment. 
In other words, we additionally require 
\begin{thmEnumerate}\setcounter{thmEnumeratei}{2}
    \item $ \lnot \variable ( \segment _ 1 ) \lor \lnot \variable ( \segment _ 2 ) $ for any two reflex segments $ \segment _ 1 , \segment _ 2 $ intersected by a common stabbing segment.
\end{thmEnumerate}

Each of these restrictions only involve two variables, so this gives a $ \SAT{ 2 } $ instance that has $ \OO ( n ) $ variables.
For every reflex segment $ \segment $, variable $ \variable ( \segment ) $ belongs to at most two clauses of the first kind,
and at most $ \ell $ clauses each of the second and the third kind. 
Thus, the number of clauses is $ \OO ( \ell n ) $.
Since $ \SAT{ 2 } $ is decidable in linear time~\cite{aspvall1979linear}, the lemma follows.
\end{proof}

In an arbitrary polygon there could be stabbing segments that intersect $ \Theta ( n ) $ reflex segments, so the running time of the $ \SAT{ 2 } $ approach is $ \OO ( n ^ 2 ) $ in the worst case.
Our main contribution in this section is to give a faster algorithm, with $ \OO ( n \log n ) $ worst-case running time.

\paragraph{Second proof idea\ARXIVVERSIONONLY{.}}

We call a reflex segment of $ \polygon $ \emph{impossible} if no conforming partition of $ \polygon $ with stabbing number~$ 2 $ contains it, and \emph{fixed} if any such conforming partition must contain it.
The idea of the second proof of Theorem~\ref{thm:decideStabbing2} is an algorithm that determines via some rules that some segments are impossible and some are fixed, from which it deduces other segments to be impossible or fixed, and so on (Figure~\ref{fig:algo}).
We apply these rules recursively
\begin{itemize}
\item until we find a reflex segment that is both fixed and impossible, meaning that $ \polygon \not \in \CSTAB{ 2 } $, or 
\item until \emph{saturation} (i.e. until no more rule applies), in which case we show that $ \polygon \in \CSTAB{ 2 } $.
\end{itemize}
The next lemma lists these rules.

\begin{figure}[ht]
    \hspace*{\stretch{1}}
    \subcaptionbox{}{\includegraphics[scale=\graphicsScale,page=1]{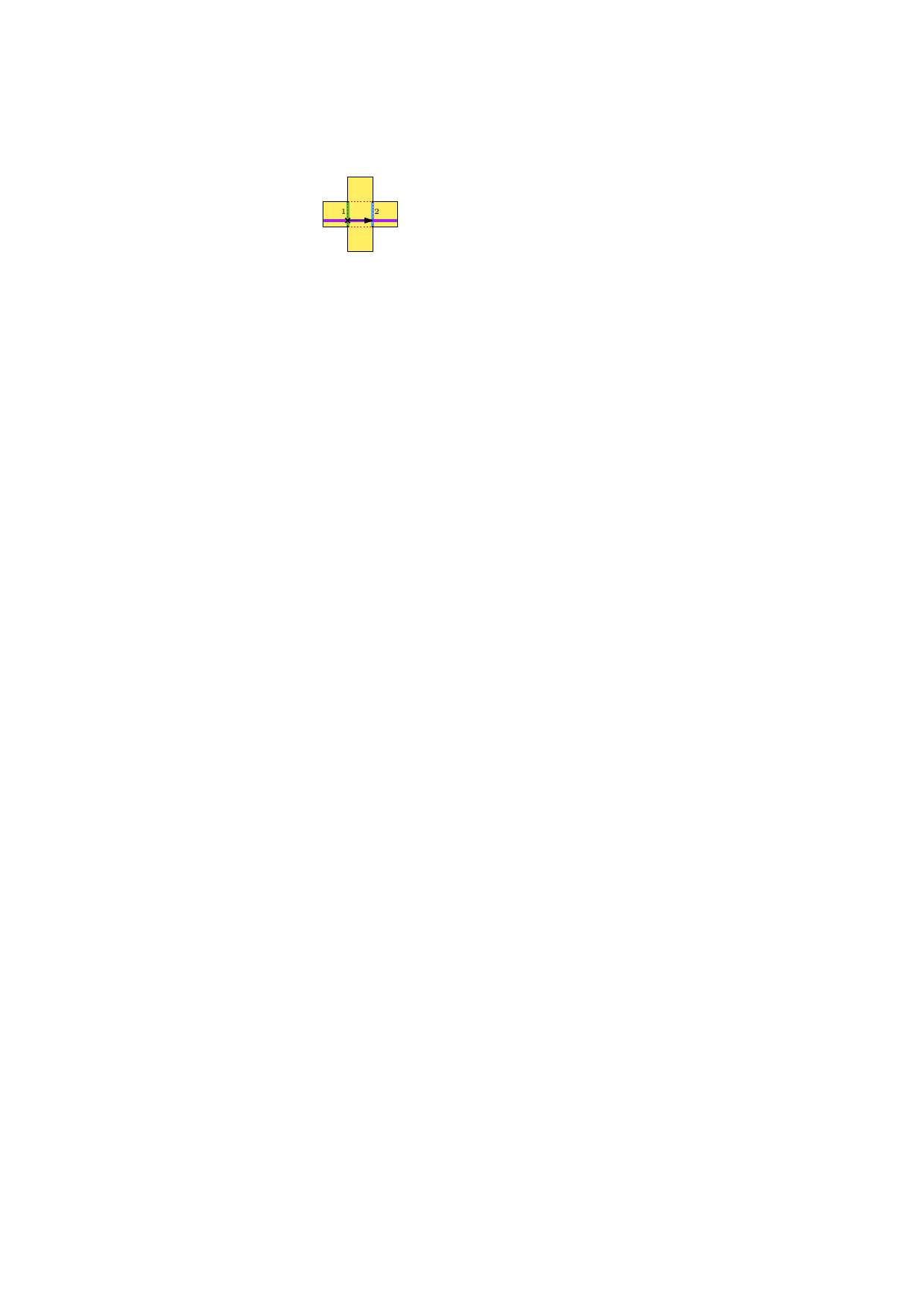}}
    \hspace*{\stretch{2}}
    \subcaptionbox{}{\includegraphics[scale=\graphicsScale,page=2]{algo.pdf}}
    \hspace*{\stretch{2}}
    \subcaptionbox{}{\includegraphics[scale=\graphicsScale,page=3]{algo.pdf}}
    \hspace*{\stretch{2}}
    \subcaptionbox{}{\includegraphics[scale=\graphicsScale,page=4]{algo.pdf}}
    \hspace*{\stretch{1}}
    \caption{
    (a) An example where the algorithm rejects. The gate at label~1 is marked as fixed (i.e., colored green) by~\ref{R4}. As a result, all segments intersecting a common stabbing segment with the segment at~1 are marked as impossible (i.e., colored blue) by~\ref{R2}, which is the case of the segment at~2. But then, this segment at~2 is also a gate and thus is fixed, at which point the algorithm rejects.
    (b) An example where the algorithm marks all the reflex segments and accepts. Both segments at~1 are marked as impossible by~\ref{R3}. This triggers to mark as fixed the two segments at~2 by~\ref{R1}. Consequently, the segment at~3 is marked as impossible by~\ref{R2}, which in turn triggers to mark the segment at~4 as fixed by~\ref{R1}. Furthermore, the gate segment at~5 is marked as fixed by~\ref{R4}. 
    (c) An example where none of the rules apply. The algorithm does not mark any reflex segment and accepts. (Note that $ \sgt{ p _ 1 }{ p _ 2 } $ and $ \sgt{ p _ 1 }{ p _ 3 } $ are not gates.) 
    (d) An example where the algorithm marks only some reflex segments and accepts. 
    } 
    \label{fig:algo}
\end{figure}

\paragraph{Rules\ARXIVVERSIONONLY{.}}

In the following lemma, the first two rules are obvious and stated without proof, whereas the last two rules, which initialize the entire process, are proved next.

\begin{lemma}[Rules]\label{lem:rules}
    Let $ \polygon $ be a rectilinear $ n $-gon (possibly with holes).
    The following holds.
    \begin{rules}
        \item\label{R1} If, at some reflex vertex $ p $ of $ \polygon $, one of $ \horOf{ p } , \verOf{ p } $ is impossible, then the other one is fixed (Figure~\ref{fig:Rules}(a)).
        \item\label{R2} If a stabbing segment $ \segment $ of $ \polygon $ intersects a fixed segment, then all other reflex segments of $ \polygon $ intersected by $ \segment $ are impossible (Figure~\ref{fig:Rules}(b)).
        \item\label{R3} If two reflex segments of $ \polygon $ (open by definition) intersect, then both are impossible (Figure~\ref{fig:Rules}(c)).
        \item\label{R4} Any gate of $ \polygon $ is fixed (Figure~\ref{fig:Rules}(d)).
    \end{rules}
\end{lemma}

\begin{proof}[Proof of~\ref{R3}]
Let $ \horOf{ p } $ and $ \verOf{ q } $ be a pair of horizontal and vertical reflex segments that intersect.
Assume for contradiction that some conforming partition $ \partition $ includes $ \horOf{ p } $ (the argument is similar for $ \verOf{ q } $). 
Then $ \partition $ does not include $ \verOf{ q } $ (segments of $ \partition $ must not intersect), so we must include $ \horOf{ q } $.
Let $ \chi $ be the common point of $ \horOf{ p } $ and $ \verOf{ q } $.
Up to symmetry, the wedge-pixel of $ q $ lies to the right of $ \verOf{ q } $. 
Then for a small enough $ \varepsilon $, the vertical stabbing segment through $ \chi + ( \varepsilon , \varepsilon ) $ intersects both $ \horOf{ p } $ and $ \horOf{ q }$ (Figure~\ref{fig:Rules}(c)) and the conforming partition has stabbing number~$ 3 $ or more.
\end{proof}

\begin{proof}[Proof of~\ref{R4}]
Let $ \sgt{ p }{ q } $ be a gate of $ \polygon $.
Without loss of generality, assume that the segment $ \sgt{ p }{ q } $ is horizontal and that both $ \verOf{ p } $ and $ \verOf{ q } $ lie above $ \sgt{ p }{ q } $.
The horizontal stabbing segment through $ p + ( 0 , \varepsilon ) $ (for a small enough $ \varepsilon $) then intersects both $ \verOf{ p } $ and $ \verOf{ q } $ (Figure~\ref{fig:Rules}(d)).
Thus, any conforming partition with stabbing number~$ 2 $ does not include both $ \verOf{ p } $ and $ \verOf{ q } $, which means by~\ref{R1} that the segment $ \sgt{ p }{ q } $ is included instead.
\end{proof}

\begin{figure}[ht]
    \hspace*{\stretch{1}}
    \subcaptionbox{}{\includegraphics[scale=\graphicsScale,page=1]{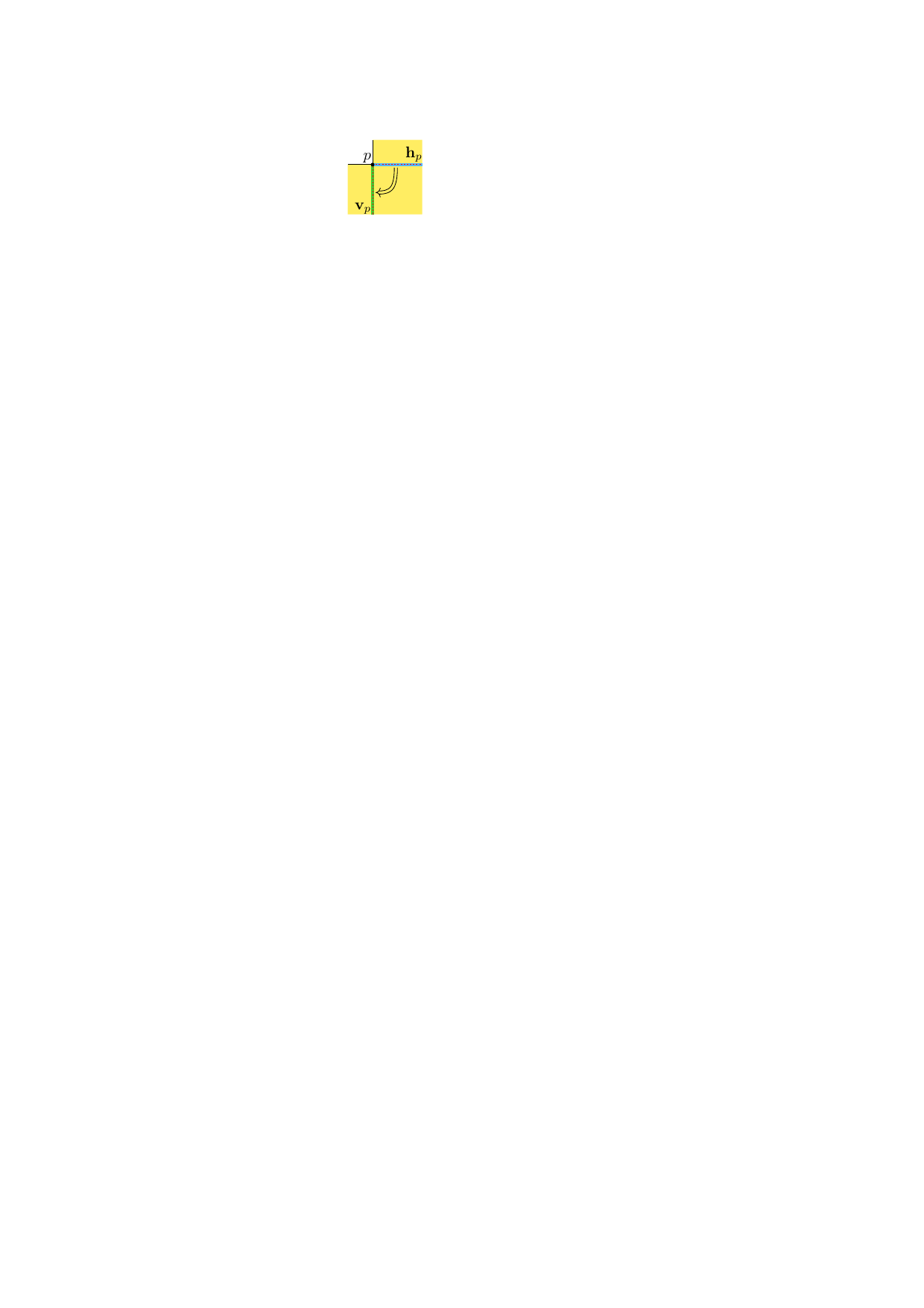}}
    \hspace*{\stretch{2}}
    \subcaptionbox{}{\includegraphics[scale=\graphicsScale,page=2]{Rules.pdf}}
    \hspace*{\stretch{2}}
    \subcaptionbox{}{\includegraphics[scale=\graphicsScale,page=3]{Rules.pdf}}
    \hspace*{\stretch{2}}
    \subcaptionbox{}{\includegraphics[scale=\graphicsScale,page=4]{Rules.pdf}}
    \hspace*{\stretch{1}}
    \caption{Illustrations of (a)~\ref{R1}, (b)~\ref{R2}, (c)~\ref{R3}, and (d)~\ref{R4}.} 
    \label{fig:Rules}
\end{figure}

\paragraph{Orthogonal ray-shooting\ARXIVVERSIONONLY{.}}
Before we describe the $ \OO ( n \log n ) $-time decision algorithm for $ \CSTAB{ 2 } $, note that we cannot afford to compute all intersections between reflex segments (or equivalently, the entire pixel graph), since there may be $ \Omega ( n ^ 2 ) $ such intersections 
    (as in Figure~\ref{fig:partitionWithSteinerPointsVSConforming} and Figure~\ref{fig:comb} for example).

Instead, it is possible to derive all required information via \emph{orthogonal ray-shooting} queries, which we define first. 
For such queries, consider a set $ \ors $ of disjoint parallel line segments stored in some data structure that we call \emph{orthogonal ray-shooting} data structure.
An orthogonal ray-shooting query receives as input an open ray that is perpendicular to the segments, and it reports the first segment in $ \ors $ that is hit by the ray, or that there is no such segment. 
(Here rays are considered open at their origin, i.e., we do not report a segment that only intersects the ray at its origin.) 
Giyora and Kaplan~\cite{GiyoraKaplan} gave an implementation that performs such a query in $ \OO ( \log | \ors | ) $ time and that also permits (within the same running time) to delete a segment of $ \ors $ from the data structure.
In the algorithm described next, we use orthogonal ray-shooting data structure as a black box.

\paragraph{Algorithm\ARXIVVERSIONONLY{.}}
We now describe the decision algorithm using a dynamic programming approach.
The idea is that the algorithm determines the status (i.e., fixed or impossible) for increasingly many segments, until it ends with an impossibility or a partition.
We express whether the status of a segment has been determined with a color scheme.
Initially all reflex segments of $ \polygon $ are uncolored.
During the execution we color some of the sections in green or in blue while maintaining the following invariants:
\begin{enumerate}
   \item \label{invariantGreen} The green reflex segments are fixed.
   \item \label{invariantBlue} The blue reflex segments are impossible.
\end{enumerate}
(Effectively, ``green'' means ``the segment is fixed, and the algorithm has determined that the segments is fixed'', and similarly for ``blue''.)
At any time during the execution, if the algorithm tries to color a blue segment in green (respectively a green segment in blue), then the execution is interrupted and the algorithm rejects.

\subparagraph{Input.}
The input is an axis-aligned polygon $ \polygon $ possibly with holes, stored as a list of directed cycles of vertices (i.e., the coordinates and some room for additional pointers and markers), where each directed cycle corresponds to a connected component of the boundary of $ \polygon $ with the convention that the interior of $ \polygon $ lies on the left (relatively to the local direction of the cycle).
In the following, we assume that every coordinate is a rational number (not a floating-point number), and that computations involving coordinates are exact. In particular, the coordinates of an intersection of two segments are computed exactly, and it is reliable to test whether this intersection is equal to a point.

\subparagraph{Preprocessing.}
The challenge of the algorithm is to avoid computing all the intersections between pairs of reflex segments and all the intersections between a stabbing segment (up to equivalence) and a reflex segment.
To achieve $ \OO ( n \log n ) $ time, we preprocess the input polygon $ \polygon $ to initialize and populate the following two pairs of orthogonal ray-shooting data structures (each pair having one data structure for horizontal segments and one for vertical segments).
\begin{description}
    \item[$ \Rhor , \Rver $:] Contains the reflex segments of $ \polygon $ which are either not colored (as is the case initially), or red (as we will remove a blue reflex segment upon coloring) to allow querying what is the first relevant reflex segment intersected by a stabbing segment.
    \item[$ \Shor , \Sver $:] Contains the stabbing segments of $ \polygon $ (more specifically, one representative per equivalent class of stabbing segment) which have not been processed (as we remove a stabbing segment after having colored all the intersected reflex segments) to allow querying what is the first relevant stabbing segment intersected by a reflex segment that we just colored red.
\end{description}
In the preprocessing, we also build a pair of lists $ \Lhor , \Lver $ of all horizontal and vertical reflex segments of $ \polygon $ that will be unaffected by removals performed on $ \Rhor , \Rver $.
In the following, we only detail how to compute the vertical reflex segments and vertical stabbing segments of $ \polygon $, and to populate $ \Lver , \Rver , \Sver $. The horizontal ones and $ \Lhor , \Rhor , \Shor $ are computed similarly.

We compute the trapezoidal map of the horizontal edges of $ \polygon $.
The set of vertical edges of the trapezoidal map that are not edges of $ \polygon $ is equal to the set of vertical reflex segments of $ \polygon $.
We use this fact to populate $ \Lver $ and $ \Rver $ with all the vertical reflex segments of $ \polygon $.
Moreover, by construction, there is a one-to-one correspondence between the trapezes (which are in fact rectangles that are columns of adjacent pixels) and the equivalent class of vertical stabbing segments.
In order to populate $ \Sver $ with one representative per equivalent class of vertical stabbing segment, we choose a point in the interior of each trapezes and compute the endpoints of the vertical stabbing segment containing this point.

To finalize the preprocessing, for each reflex segment $ \segment $ in $ \Lhor , \Lver $, we perform an orthogonal ray-shooting queries in $ \Rver , \Rhor $ with a ray containing $ \segment $ and with an endpoint of $ \segment $ as origin. If the query reports a reflex segment, i.e., a segment that is not an edge of $ \polygon $, then we mark $ \segment $ as \emph{intersecting some other reflex segment}.
We also determine whether $ \segment $ is a gate and mark $ \segment $ accordingly.

Next, we describe two mutually recursive routines, namely $ \ColorInBlue $ and $ \ColorInGreen $, assuming without loss of generality that the input reflex segment (an element of $ \Lhor \cup \Lver $) is of the form $ \horOf{ p } $ and $ \verOf{ p } $ respectively.

\subparagraph{Routine $  \ColorInBlue ( \horOf{ p } ) $:}
\begin{enumerate}
    \item If $ \horOf{ p } $ is green, then interrupt the algorithm and reject.
    \item If $ \horOf{ p } $ is blue, then do nothing.
    \item If $ \horOf{ p } $ is not colored yet, then do:
    \begin{enumerate}
        \item\label{clr:blue} Color $ \horOf{ p } $ blue.
        \item\label{rm:reflex} Remove $ \horOf{ p } $ from $ \Rhor $.
        \item Call $ \ColorInGreen ( \verOf{ p } ) $ (Figure~\ref{fig:Rules}(a)).
    \end{enumerate}
\end{enumerate}

\subparagraph{Routine $  \ColorInGreen ( \verOf{ p } ) $:}
\begin{enumerate}
    \item If $ \verOf{ p } $ is blue, then interrupt the algorithm and reject.
    \item If $ \verOf{ p } $ is green, then do nothing.
    \item If $ \verOf{ p } $ is not colored yet, then do:
    \begin{enumerate}
        \item\label{clr:green} Color $ \verOf{ p } $ green.
        \item\label{qry:stab} While $ \segment $, the first segment of $ \Shor $ encountered by the ray from $ p $ along $ \verOf{ p } $ (and computed via the corresponding orthogonal ray-shooting query), is not an edge of $ \polygon $, do:
        \begin{enumerate}
            \item\label{rm:stab} Remove $ \segment $ from $ \Shor $.
            \item Let $ \chi $ be the intersection of $ \segment $ and $ \verOf{ p } $.
            \item Let $ \ray _ 1 , \ray _ 2 $ be the rays from $ \chi $ along $ \segment $ on the left and on the right of $ \chi $ respectively (Figure~\ref{fig:Rules}(b)).
            \item For $ i = 1 $ to $ 2 $, do:
            \begin{enumerate}
                \item\label{qry:reflex} While $ \horOf{ q } $, the first segment of $ \Rhor $ encountered by the ray $ \ray _ i $ (and computed via the corresponding orthogonal ray-shooting query), is not an edge of $ \polygon $, do:
                \begin{enumerate}
                    \item Call $ \ColorInBlue ( \horOf{ q } ) $.
                \end{enumerate}
            \end{enumerate}
        \end{enumerate}
    \end{enumerate}
\end{enumerate}

We now describe the initialization of the recursive calls to the routines $ \ColorInBlue $ and $ \ColorInGreen $, which constitute the main algorithm.

\subparagraph{Main:}
\begin{enumerate}
    \item Do the preprocessing.
    \item For each segment $ \segment $ in $ \Lhor , \Lver $, do:
    \begin{enumerate}
        \item If $ \segment $ is marked as intersecting some other reflex segment, then call $ \ColorInBlue ( \segment ) $ (Figure~\ref{fig:Rules}(c)).
        \item If $ \segment $ is marked as a gate, then call $ \ColorInGreen ( \segment ) $ (Figure~\ref{fig:Rules}(d)).
    \end{enumerate}
    \item If all these calls end normally, that is to say, if none of them interrupted the algorithm to reject, then the algorithm accepts.
\end{enumerate}

\paragraph{Running time\ARXIVVERSIONONLY{.}}

\begin{lemma}\label{lem:runningTime}
    The algorithm runs in $ \OO ( n \log n ) $ time.
\end{lemma}

\begin{proof}
We first show with simple verifications that the preprocessing runs in $ \OO ( n \log n ) $ time.
Each step of the preprocessing takes $ \OO ( n \log n ) $ time.
Indeed, the two bottlenecks are the computation of the two trapezoidal maps in $ \OO ( n \log n ) $ time (see, e.g.,~\cite{Mark-book}), and the $ \OO ( n ) $ operations (either insertions or ray-shooting queries) on the orthogonal ray-shooting data structures in $ \OO ( \log n ) $ time each~\cite{GiyoraKaplan}.

We now prove that the rest of the algorithm also runs in $ \OO ( n \log n ) $ time.
Observe that, once the preprocessing is over, we perform only two types of operations that are not in constant time: orthogonal ray-shooting queries and removals from orthogonal ray-shooting data structures.
As we perform no insertion in orthogonal ray-shooting data structures, the number of removals in upper bounded by the maximum size of the orthogonal ray-shooting data structures which is $ \OO ( n ) $.
Besides, each of the $ \OO ( n ) $ reflex segments in $ \Lhor \cup \Lver $ is colored at most once.

We now map each orthogonal ray-shooting query either to a removal from an orthogonal ray-shooting data structure or to a coloring of some reflex segment.
Since each removal and each coloring will have at most a constant number of queries mapped to it, this will prove that the number of such queries is also $ \OO ( n ) $.

Every orthogonal ray-shooting query $ \query $ in $ \Shor , \Sver $ is performed at~\ref{qry:stab} in $ \ColorInGreen $.
If $ \query $ reports an edge of $ \polygon $, then we map $ \query $ to the coloring at~\ref{clr:green} in $ \ColorInGreen $.
Otherwise, $ \query $ reports a stabbing segment and we enter the while body.
We then map $ \query $ to the removal from $ \Shor , \Sver $ performed at~\ref{rm:stab} in $ \ColorInGreen $.

Similarly, every orthogonal ray-shooting query $ \query $ in $ \Rhor , \Rver $ is performed at~\ref{qry:reflex} in $ \ColorInGreen $.
If $ \query $ reports an edge of $ \polygon $, then we map $ \query $ to the coloring at~\ref{rm:stab} in $ \ColorInGreen $ (which is the second time we map a query to this coloring).
Otherwise, $ \query $ reports a reflex segment $ \segment ' $ and we enter the while body to call $ \ColorInBlue ( \segment ' ) $.
If $ \segment ' $ is already green, then we again map $ \query $ to the coloring at~\ref{rm:stab} in $ \ColorInGreen $ (which is still only the second time we map a query to this coloring since the algorithm is interrupted and the while loop ends prematurely without a query reporting an edge of $ \polygon $).
If $ \segment ' $ is not already green, then $ \segment ' $ is not already blue either, since every reflex segment that has been colored in blue (at~\ref{clr:blue} in $ \ColorInBlue $) has also been removed from $ \Rhor , \Rver $ (at~\ref{rm:reflex} in $ \ColorInBlue $).
Thus, $ \segment ' $ is colored in blue at~\ref{clr:blue} in $ \ColorInBlue $.
We then map $ \query $ to this coloring, which concludes the proof that the running time of the algorithm is indeed $ \OO ( n \log n )$.
\end{proof}

\paragraph{Correctness\ARXIVVERSIONONLY{.}}
In the rest of this section, we prove the correctness of the algorithm.

First observe that, by construction, $ \Shor \cup \Sver $ contains at least one stabbing segment for each equivalence class of stabbing segments in $ \polygon $.
A simple verification then shows that the algorithm implements the rules in Lemma~\ref{lem:rules} correctly.
Therefore, Invariants~\ref{invariantGreen} and~\ref{invariantBlue} hold.
Moreover, if the algorithm accepts, then the rules in Lemma~\ref{lem:rules} have indeed been applied until saturation as intended.

Next, we verify the correctness of the algorithm in three cases, of which only the third case is not straightforward.
If the algorithm rejects (Figure~\ref{fig:algo}(a)), then a reflex segment has been found to be both fixed and impossible.
Thus, $ \polygon \not \in \CSTAB{ 2 } $ and the algorithm is correct in this case.
If the algorithm accepts and if all reflex segments have been colored (Figure~\ref{fig:algo}(b)), then, for each reflex vertex $ p $ at least one of $ \horOf{ p } , \verOf{ p } $ is green, and each stabbing segment intersects at most one green segment.
This implies that the green reflex segments yield the only possible conforming partition of $ \polygon $ with stabbing number at most~$ 2 $ (by Invariants~\ref{invariantGreen} and~\ref{invariantBlue}).
In this case, $ \polygon \in \CSTAB{ 2 } $ and the algorithm is correct again.

In the remaining case, the algorithm accepts and some reflex segments have not been colored (while the rules in Lemma~\ref{lem:rules} have been applied until saturation, as in Figure~\ref{fig:algo}(c)).
Next, we show that a conforming partition of $ \polygon $ with stabbing number at most~$ 2 $ always exists, thereby proving that the algorithm is correct in this third case too.

To build a conforming partition of $ \polygon $ with stabbing number at most~$ 2 $, we partition $ \polygon $ along the green segments into rectilinear polygons $ \polygon _ 1 , \dots , \polygon _ \ell $ that we call the \emph{pieces} of $ \polygon $ (Figure~\ref{fig:pices}).
The idea is to prove that:
\begin{itemize}
    \item If we have a conforming partition with stabbing number at most~$ 2 $ for each piece of $ \polygon $, then the induced partition of $ \polygon $ is also conforming and has stabbing number at most~$ 2 $ (Lemma~\ref{lem:solvePieces}).
    \item Each piece of $ \polygon $ indeed has conforming stabbing number at most~$ 2 $ (Lemma~\ref{lem:thinGeneral}).
\end{itemize}

\begin{figure}[ht]
    \hspace*{\stretch{1}}
    \subcaptionbox{}{\includegraphics[scale=\graphicsScale,page=2]{algo.pdf}}
    \hspace*{\stretch{2}}
    \subcaptionbox{}{\includegraphics[scale=\graphicsScale,page=5]{algo.pdf}}
    \hspace*{\stretch{2}}
    \subcaptionbox{}{\includegraphics[scale=\graphicsScale,page=4]{algo.pdf}}
    \hspace*{\stretch{2}}
    \subcaptionbox{}{\includegraphics[scale=\graphicsScale,page=7]{algo.pdf}}
    \hspace*{\stretch{1}}
    \caption{
    (a) The polygon of Figure~\ref{fig:algo}(b).
    (b) The pieces of the polygon of Figure~\ref{fig:algo}(b). The constraint graph of each piece is empty. 
    (c) The polygon of Figure~\ref{fig:algo}(d). 
    (d) The pieces of the polygon of Figure~\ref{fig:algo}(d), with the constraint graph of each piece. 
    } 
    \label{fig:pices}
\end{figure}

Next, we make a useful observation that we then use to prove Lemma~\ref{lem:solvePieces}.

\begin{observation}\label{obs:rflxSgtOfPieceIsRflxSgtOfP}
    For every piece $ \polygon _ i $ of a rectilinear polygon $ \polygon $ (possibly with holes), every reflex segment $ \segment $ of $ \polygon _ i $ is a reflex segment of $ \polygon $ that is not colored (Figure~\ref{fig:pices}).
\end{observation}

\begin{proof}
Since $ \segment $ is a reflex segment of $ \polygon _ i $, one endpoint of $ \segment $, say $ p $, is a reflex vertex of $ \polygon _ i $, hence also a reflex vertex of $ \polygon $.
The other endpoint of $ \segment $, say $ q $, lies on the boundary of $ \polygon _ i $. 
If $ q $ is not on the boundary of $ \polygon $, then $ q $ is on an open green reflex segment $ \segment ' $. 
But then, rule~\ref{R3} applies in $ \polygon $ to $ \segment ' $ and the reflex segment of $ \polygon $ containing $ \segment $, a contradiction to the definition of the $ \polygon _ i $ (which implies that the rules have been applied until saturation).
Thus, $ q $ also lies on the boundary of $ \polygon $, and $ \segment $ is a reflex segment of $ \polygon $.

To see that $ \segment $ is not colored, observe first that $ p $ does not have an incident green reflex segment since it is reflex in the piece $ \polygon _ i $.
Thus, $ \segment $ is not green.
Now, $ \segment $ is not blue either, since otherwise the algorithm would have applied rule~\ref{R1} to color in green the other reflex segment incident to $ p $. 
Therefore, $ \segment $ is not colored.
\end{proof}

\begin{lemma}
    \label{lem:solvePieces}
    A rectilinear polygon $ \polygon $ (possibly with holes) is in $ \CSTAB{ 2 } $ if and only if each of the pieces $ \polygon _ 1 , \dots , \polygon _ \ell $ of $ \polygon $ is in $ \CSTAB{ 2 } $.
\end{lemma}

\begin{proof}
Any solution for $ \polygon $ includes all green segments, thereby yielding a solution for each piece.
Vice versa, assume that each piece $ \polygon _ i $ of $ \polygon $ admits a solution $ \partition _ i $ (seen as a set of reflex segments) to $ \CSTAB{ 2 } $. 
We show that $ \partition = \Rhor \cup \Rver \cup ( \bigcup _ i \partition _ i ) $ is a solution for $ \polygon $ ($ \Rhor \cup \Rver $ is equal to the set of all green segments at the end of the algorithm).
To see that $ \partition $ is a conforming partition, observe that it only contains reflex segments of $ \polygon $ by Observation~\ref{obs:rflxSgtOfPieceIsRflxSgtOfP}, and assigns at least one reflex segment to each reflex vertex of $ \polygon $.
Since the pieces $ \polygon _ 1 , \dots , \polygon _ \ell $ are interior-disjoint, the reflex segments in $ \bigcup _ i \partition _ i $ do not intersect each other. They do not intersect a green segment either, by Observation~\ref{obs:rflxSgtOfPieceIsRflxSgtOfP}, so $ \partition $ yields a conforming partition.

To show that $ \partition $ has stabbing number at most~$ 2 $, consider any stabbing segment $ \segment $ of $ \polygon $. 
If $ \segment $ intersects no green segment, then $ \segment $ is also a stabbing segment for one piece $ \polygon _ i $.
Thus, $ \segment $ will intersect at most one segment of $ \partition $.
Now assume that $ \segment $ intersects a green segment.
Since rule~\ref{R2} have been applied until saturation, all other reflex segments of $ \polygon $ intersected by $ \segment $ have been colored blue, thus, are not reflex segments of any pieces, and hence are not used by $ \partition $.
Therefore, stabbing segment $ \segment $ intersects at most one segment of $ \partition $.
\end{proof}

It remains to show that each piece of $ \polygon $ indeed has conforming stabbing number at most~$ 2 $.
Here, we observe that a piece $ \polygon _ i $ of $ \polygon $ is not an arbitrary polygon.
Specifically, since rule~\ref{R3} does not apply, $ \polygon _ i $ has no intersecting reflex segments, so $ \polygon _ i $ is thin. 
Moreover, since rule~\ref{R4} does not apply, $ \polygon _ i $ has no gates.
We now prove a statement that holds for any thin polygon that has no gates, and that we use to prove Lemma~\ref{lem:thinGeneral}.

\begin{observation}
    \label{obs:thinGeneral}
    Let $ \polygon $ be a thin rectilinear polygon (possibly with holes) that has no gates. 
    Then every stabbing segment of $ \polygon $ intersects at most two reflex segments of $ \polygon $ (Figure~\ref{fig:thinGeneral}). 
\end{observation}

\begin{proof}
Let $ \segment $ be a stabbing segment of $ \polygon $, and assume for contradiction that $ \segment $ intersects three reflex segments, say $ \segment $ intersects $ \segment _ 1 , \segment _ 2 , \segment _ 3 $, in this order and with no other reflex segments in between (Figure~\ref{fig:thinGeneral}).   
Up to symmetry, $ \segment $ is horizontal, so $ \segment _ 1 , \segment _ 2 , \segment _ 3 $ are vertical, and up to renaming, $ \segment _ 1 $ is on the left of $ \segment _ 2 $.
Let $ p $ be the reflex vertex of $ \polygon $ with $ \segment _ 2 = \verOf{ p } $.

Up to symmetry, the wedge-pixel $ \pixel $ of $ p $ is to the left of $ \verOf{ p } $ and below $ \horOf{ p } $.
Since $ \polygon $ is thin, pixel $ \pixel $ spans the entire length of $ \verOf{ p } $, and in particular includes the point common to $ \segment $ and $ \verOf{ p } $.
Since there are no vertical reflex segments between $ \segment _ 1 $ and $ \segment _ 2 $ along $ \segment $, pixel $ \pixel $ extends to the point common to $ \segment _ 1 $ and $ \segment $, and therefore spans the entire length of $ \segment _ 1 $. 
It also includes the entire length of $ \horOf{ p } $. 
Thus, the top left corner of $ \pixel $ is a point $ q $ common to $ \segment _ 1 $ and $ \horOf{ p } $, hence $ q $ is a reflex vertex that lies on a horizontal line with $ p $. 
Since both $ \verOf{ q } $ and $ \horOf{ q } $ bound sides of $ \pixel $, this makes $ \sgt{ p }{ q } $ a gate.
\end{proof}

\begin{figure}[ht]
    \centering
    \includegraphics[scale=\graphicsScale,page=7]{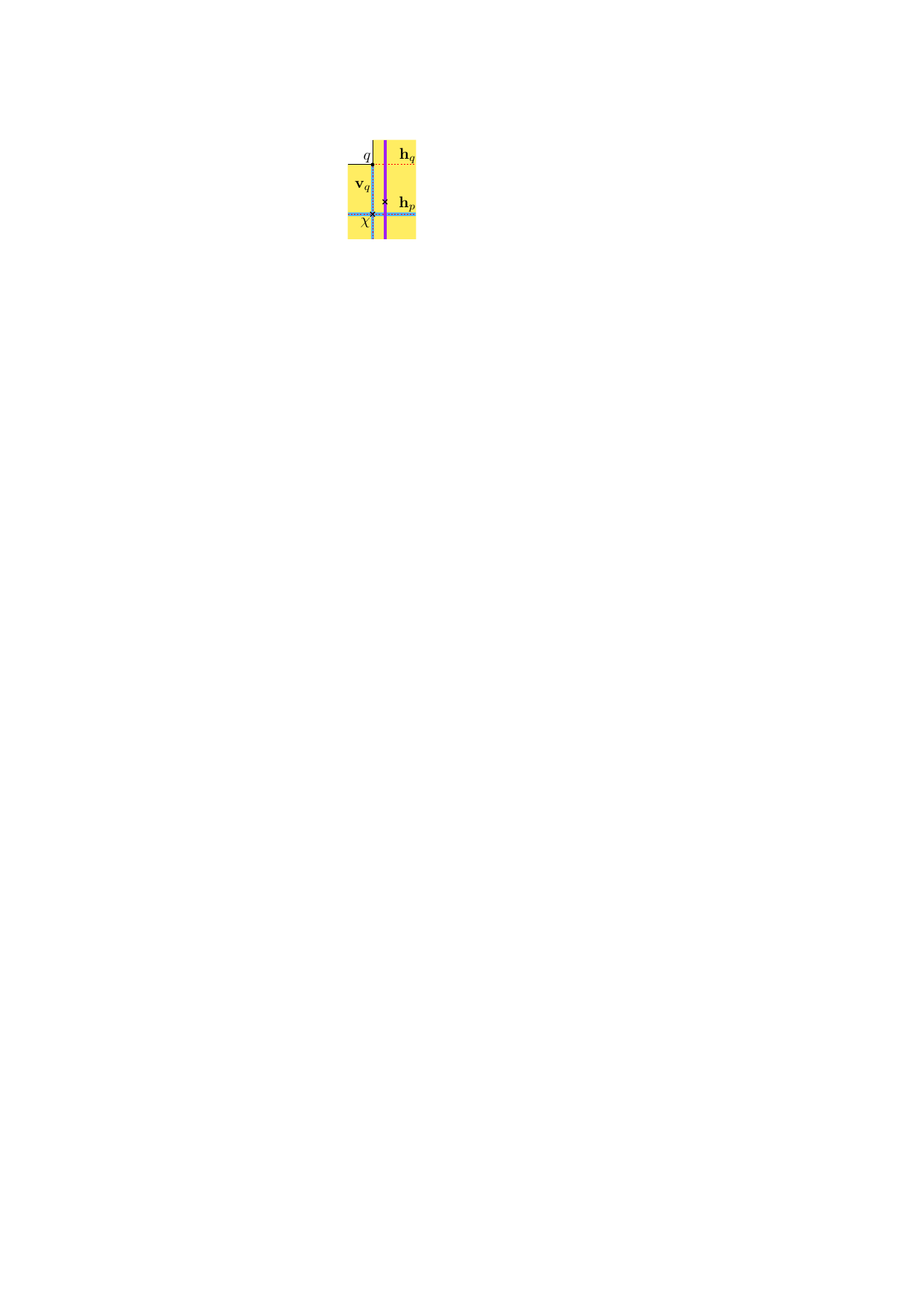}
    \caption{If a stabbing segment $ \segment $ intersects three reflex segments, and no two reflex segments intersect, then the polygon has a gate.   The upper right corner of the wedge-pixel $ \pixel $ of $ p $ is shaded (in orange).}
    \label{fig:thinGeneral}
\end{figure}

In the following, we prove the last lemma, which is rather interesting in itself, using Observation~\ref{obs:thinGeneral}.

\begin{lemma}
    \label{lem:thinGeneral}
    Let $ \polygon $ be a rectilinear $ n $-gon (possibly with holes) that is thin and has no gates (which is the case if $ \polygon $ is in general position).
    Then, $ \polygon $ has conforming stabbing number at most~$ 2 $.
    Moreover, there exists an algorithm that computes a conforming partition of $ \polygon $ with stabbing number at most~$ 2 $ in $ \OO ( n ) $ time.
\end{lemma}

\begin{proof}
We prove that $ \polygon $ admits a conforming partition with stabbing number at most~$ 2 $.
First, we reduce our problem to $ \SAT{ 2 } $ by taking advantage of the fact that $ \polygon $ is thin and has no gates.
Then, we prove that the $ \SAT{ 2 } $ formula we obtain is always satisfiable.

We start with some observations.
A conforming partition may be given by a set $ \partition $ of reflex segments, but it may equivalently be given as a function $ \function $ mapping each reflex vertex $ p $ to a subset of $ \{ \horOf{ p } , \verOf{ p } \} $, with the convention that the union of the image of the reflex vertices under $ \function $ is the set $ \partition $.
By definition of being conforming, each reflex vertex is mapped to at least one reflex segment.
Moreover, if $ \partition $ is minimal (i.e., no reflex segments can be removed while retaining a conforming partition; see Section~\ref{sec:preliminaries}), then each reflex vertex is mapped by $ \function $ to at most one reflex segment.
Thus, without loss of generality, we restrict our search to a conforming partition $ \function $ of $ \polygon $ with stabbing number at most~$ 2 $, where each reflex vertex is mapped by $ \function $ to at most one reflex segment.
(Note that a given reflex segment $ \segment $ may be the image of up to two reflex vertices, even if the conforming partition is minimal. For example, for a given reflex vertex $ p $, it is possible that both $ \horOf{ p } $ and $ \verOf{ p } $ are in a minimal partition $ \partition $ if $ q $, the other endpoint of $ \horOf{ p } $, is also a reflex vertex.
In this case, $ \function $ would map, say, $ p $ to $ \verOf{ p } $ and $ q $ to $ \horOf{ p } $.)

We now declare a Boolean variable $ \variable ( p ) $ for every reflex vertex $ p $ of $ \polygon $.
Given an assignment of all the variables, we build a function $ \function $ from the reflex vertices of $ \polygon $ to the reflex segments of $ \polygon $ as follows.
If $ \variable ( p ) $ is true, then we define $ \function ( p ) $ to be $ \horOf{ p } $.
If $ \variable ( p ) $ is false, then we define $ \function ( p ) $ to be $ \verOf{ p } $.

We now list the clauses which ensure that the function $ \function $ indeed encodes a conforming partition with stabbing number at most~$ 2 $.
By construction, we have ensured that for each reflex vertex $ p $, at least one of $ \horOf{ p } , \verOf{ p } $ is included in the partition $ \function $.
Moreover, since $ \polygon $ is thin, no reflex segments intersect, and in particular, no reflex segments in the partition $ \function $ intersect.
There remains to ensure that the stabbing number of $ \function $ is at most~$ 2 $.
By Observation~\ref{obs:thinGeneral}, every stabbing segment intersects at most two reflex segments.
Thus, for each equivalence class of vertical stabbing segment that intersects exactly two reflex segments of the form $ \horOf{ p } , \horOf{ q } $, we require
\begin{thmEnumerate}
    \item $ \lnot \variable ( p ) \lor \lnot \variable ( q ) $ (which is also equivalent to $ \variable ( p ) \implies \lnot \variable ( q ) $ and to $ \variable ( q ) \implies \lnot \variable ( p ) $).
\end{thmEnumerate}
Similarly, for each equivalence class of horizontal stabbing segment that intersects exactly two reflex segments of the form $ \verOf{ p } , \verOf{ q } $, we require
\begin{thmEnumerate}\setcounter{thmEnumeratei}{1}
    \item $ \variable ( p ) \lor \variable ( q ) $ (which is also equivalent to $ \lnot \variable ( p ) \implies \variable ( q ) $ and to $ \lnot \variable ( q ) \implies \variable ( p ) $).
\end{thmEnumerate}

We finally prove that the conjunction $ \formula $ of these clauses is always satisfiable.
Recall that a \emph{Horn clause} is a clause with at most one positive literal, and that $ \HORNSAT $, the restriction of $ \SATn $ to conjunctions of Horn clauses, is in $ \PP $. 
Our strategy is to show that
\begin{inlineEnum}
    \item these clauses are \emph{renameable} Horn clauses (i.e., they become Horn clauses after a change of variables), and that
    \item after renaming the appropriate variables, each clause contains a negative literal.
\end{inlineEnum}
A conjunction of Horn clauses where each clause contains a negative literal is always satisfiable (by assigning every variable to false), hence the lemma.

To this end, we consider the undirected simple graph $ \graph $, that we call the \emph{constraint} graph of $ \polygon $, defined as follows (Figure~\ref{fig:pices}(d)).
\begin{itemize}
    \item The vertices of $ \graph $ are the pixels of $ \polygon $ that are the wedge-pixel of some reflex vertex $ p $.
    Each vertex of $ \graph $ is thus associated with a unique variable $ \variable ( p ) $.
    \item There is an edge between two pixels $ \pixel , \pixel ' $ whenever there exists a stabbing segment intersecting both $ \pixel $ and $ \pixel ' $.
    Each edge of $ \graph $ is thus associated with a unique clause of $ \formula $.
\end{itemize}
We consider the constraint graph $ \graph $ to be drawn in the plane with straight line segments included in corresponding stabbing segments.
By Observation~\ref{obs:thinGeneral}, a vertex of $ \graph $ is incident to at most one vertical edge and at most one horizontal edge.
Therefore, the maximum degree of $ \graph $ is $ 2 $, and $ \graph $ consists of a set of cycles and paths (which may be isolated vertices).
(In fact, one may show that $ \graph $ is either one cycle or a set of paths, but this is not useful for our purpose.)
Moreover, each cycle of $ \graph $ is even, as it is drawn as a rectilinear polygon (without holes).

We now rename every other variable $ \variable ( p ) $ along each path and each cycle of $ \graph $ as $ \variable ' ( p ) = \lnot \variable ( p ) $.
This is possible because the cycles are all even.
By construction after this change of variables, the clauses are now Horn clauses, and each contains a negative literal.

To conclude the proof, we have declared a linear number of variables and built a linear number of clauses.
The renaming process that we have described is also linear, and so is the assignment computation.
Therefore, we have described an algorithm that computes a conforming partition of $ \polygon $ with stabbing number at most~$ 2 $ in linear time.
\end{proof}

We finally state and prove the main theorem of this section.

\begin{theorem} \label{thm:decideStabbing2}
    There exists an algorithm that, for any rectilinear $ n $-gon $ \polygon $ (possibly with holes), decides $ \CSTAB{ 2 } $ and provides a solution (if any) in $ \OO ( n \log n ) $ time.
\end{theorem}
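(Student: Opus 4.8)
The plan is to run the rule-propagation machinery to completion, dispose of the two decisive outcomes immediately, and reduce the remaining case to solving independent thin gate-free subpolygons via the $\SAT{2}$ formulation of Lemma~\ref{lem:2SAT}.

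First I would apply rules (R1)--(R5) exhaustively, maintaining the lists $\Lfixed$ and $\Limpossible$ until no rule produces a new fixed or impossible segment. By the implementation described in Appendix~\ref{sec:runtime} (which I may assume), this closure can be computed in $\OO(n\log n)$ time using line-sweep and ray-shooting. If some segment ends up in both $\Lfixed$ and $\Limpossible$, I report that no conforming partition with stabbing number~$2$ exists. Otherwise I test in linear time whether $\Lfixed$ already assigns a reflex segment to every reflex vertex with no two of its segments intersecting; if so, $\Lfixed$ is a conforming partition, and since (R3) has been applied each stabbing segment meets at most one of its segments, so the stabbing number is~$2$ and I output $\Lfixed$.

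The remaining case is the crux. Here I would insert the segments of $\Lfixed$ into $P$ to obtain the pieces $P_1,\dots,P_\ell$ in linear time. Because neither (R4) nor (R5) fires on any piece, each $P_i$ is thin and has no gate, so by Lemma~\ref{lem:thinGeneral} every stabbing segment of $P_i$ meets at most two reflex segments, i.e.\ the parameter $\ell$ of Lemma~\ref{lem:2SAT} equals~$2$ for each piece. I would therefore invoke the $\SAT{2}$ algorithm of Lemma~\ref{lem:2SAT} on each piece, which runs in time linear in the number of reflex segments of that piece. By Observation~\ref{obs:rflxSgtOfPieceIsRflxSgtOfP} every reflex segment of a piece is a distinct reflex segment of $P$, so the pieces together carry only $\OO(n)$ reflex segments and the total time over all pieces is $\OO(n)$. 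Finally I would glue the per-piece solutions to $\Lfixed$; Observation~\ref{obs:solvePieces} guarantees that this union is a valid solution for $P$ exactly when each piece is solvable, and the union is assembled in linear time.

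Summing the three phases yields the claimed $\OO(n\log n)$ bound, dominated by the rule-closure step. The main obstacle I anticipate is precisely that closure: naively re-scanning all segments after each deduction is polynomial but far from near-linear, and attaining $\OO(n\log n)$ requires the careful line-sweep and ray-shooting bookkeeping relegated to the appendix. The correctness arguments, by contrast, are already packaged in Observations~\ref{obs:rflxSgtOfPieceIsRflxSgtOfP} and~\ref{obs:solvePieces} together with Lemma~\ref{lem:thinGeneral}, so the conceptual content lies almost entirely in making the propagation fast rather than in verifying that the assembled partition is correct.
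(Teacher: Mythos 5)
Your proposal is correct and follows essentially the same route as the paper's proof: exhaustive rule propagation in $\OO(n\log n)$ time, the two decisive outcomes dispatched directly, and the residual case decomposed along $\Lfixed$ into thin, gate-free pieces whose stabbing segments meet at most two reflex segments (Lemma~\ref{lem:thinGeneral}), solved per piece in linear time and glued back via Observations~\ref{obs:rflxSgtOfPieceIsRflxSgtOfP} and~\ref{obs:solvePieces}. The only cosmetic difference is that you invoke Lemma~\ref{lem:2SAT} directly with $\ell = 2$, whereas the paper routes the identical argument through Theorem~\ref{thm:thinGeneral}.
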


\begin{proof}
The algorithm described above runs in $ \OO ( n \log n ) $ time by Lemma~\ref{lem:runningTime} and correctly decides $ \CSTAB{ 2 } $ by Lemma~\ref{lem:solvePieces} and Lemma~\ref{lem:thinGeneral}.
Besides, it is possible to compute a conforming partition with stabbing number at most~$ 2 $ of $ \polygon $ in $ \OO ( n ) $ extra time by Lemma~\ref{lem:thinGeneral}.
\end{proof}

\section{Polygons with Small Treewidth} \label{sec:tw}

We now turn towards FPT algorithms, and in particular, study polygons with bounded treewidth. 
The treewidth (the definition is recalled in Section~\ref{sec:preliminaries}) has frequently been used for FPT algorithms for graph problems but can also be used for solving problems on polygons, see e.g.~\cite{BiedlMehrabi}.

In our case, we consider the treewidth of the \emph{pixel graph} $ \pixelGraph{ \polygon } $ of $ \polygon $ (defined in Section~\ref{sec:preliminaries} and illustrated again in Figure~\ref{fig:MSO_ex}(a)).
In fact, we rely on a graph $ \extGraph{ \polygon } $ derived from $ \pixelGraph{ \polygon } $;  this graph $ \extGraph{ \polygon } $ is defined next and illustrated in Figure~\ref{fig:MSO_ex}(b).
\begin{itemize}
    \item The vertex set of $ \extGraph{ \polygon } $, denoted by $ \vertexSetOf{ \extGraph{ \polygon } } $, is $ \vertexSetOf{ \extGraph{ \polygon } } = \vertexSetOf{ \pixelGraph{ \polygon } } \cup \edgeSetOf{ \pixelGraph{ \polygon } } \cup \pixelSetOf{ \pixelGraph{ \polygon } } $, the union of the vertices of $ \pixelGraph{ \polygon } $, the edges of $ \pixelGraph{ \polygon } $, as well as the pixels of $ \polygon $.
    \item The edge set of $ \extGraph{ \polygon } $, denoted by $ \edgeSetOf{ \extGraph{ \polygon } } $, consists of all the pairs of vertices 
    \begin{itemize}
        \item $ e , v \in \edgeSetOf{ \pixelGraph{ \polygon } } \times \vertexSetOf{ \pixelGraph{ \polygon } } $ such that $ v $ is an endpoint of the segment $ e $, and 
        \item $ e , \pixel \in \edgeSetOf{ \pixelGraph{ \polygon } } \times \pixelSetOf{ \pixelGraph{ \polygon } } $ such that $ e $ is an edge of the pixel $ \pixel $.
    \end{itemize}
\end{itemize}
This new graph $ \extGraph{ \polygon } $ has treewidth $ \OO ( \tw ( \pixelGraph{ \polygon } ) ) $.
Indeed, we construct a tree decomposition $ \treeDecomposition ' $ of $ \extGraph{ \polygon } $ from a tree decomposition $ \treeDecomposition $ of $ \pixelGraph{ \polygon } $ as follows.
For each $ v \in \vertexSetOf{ \pixelGraph{ \polygon } } $ and for each bag of $ \treeDecomposition $, we construct a bag of $ \treeDecomposition ' $ by adding the (up to) four edges $ e \in \edgeSetOf{ \pixelGraph{ \polygon } } $ which are adjacent to $ v $, together with the (up to) four pixels $ \pixel \in \pixelSetOf{ \pixelGraph{ \polygon } } $ which share the vertex $ v $.
By construction, for every edge $ v _ 1 , v _ 2 $ of $ \extGraph{ \polygon } $ that is not an edge in $ \pixelGraph{ \polygon } $, we have added the vertices $ v _ 1 , v _ 2 $ to a common bag.
Thus, the tree $ \treeDecomposition ' $ satisfies the property~\ref{edgeInBag} of the definition of a tree decomposition.
As for the property~\ref{connectedSubtree}, consider a vertex $ v ' $ of $ \extGraph{ \polygon } $, and the set $ T $ of bags of $ \treeDecomposition ' $ that contain $ v ' $.
If $ v ' \in \vertexSetOf{ \pixelGraph{ \polygon } } $, then $ T $ is indeed a non-empty connected subtree of $ \treeDecomposition ' $ since~\ref{connectedSubtree} holds for $ \treeDecomposition $ and since $ \treeDecomposition ' $ is just a relabeling of $ \treeDecomposition $.
If $ v ' \in \edgeSetOf{ \pixelGraph{ \polygon } } $, say $ v ' $ is the edge adjacent to $ v _ 1 , v _ 2 $ in $ \extGraph{ \polygon } $, then $ T $ is the union of the non-empty connected subtrees corresponding to $ v _ 1 $ and $ v _ 2 $ in $ \treeDecomposition $ by~\ref{connectedSubtree}.
These two connected subtrees share a common bag by~\ref{edgeInBag}, which ensures that their union, $ T $, is indeed a (non-empty) connected subtree of $ \treeDecomposition ' $.
If $ v ' \in \pixelSetOf{ \pixelGraph{ \polygon } } $, say $ v ' $ is the pixel with corners $ v _ 1 , v _ 2 , v _ 3 , v _ 4 $ in counterclockwise order, then a similar argument shows that $ T $, as the union of four non-empty connected subtrees corresponding to $ v _ 1 , v _ 2 , v _ 3 , v _ 4 $, is again a (non-empty) connected subtree of $ \treeDecomposition ' $.

\begin{figure}[ht]
    \hspace*{\stretch{1}}
    \subcaptionbox{}{\includegraphics[scale=\graphicsScale,page=2]{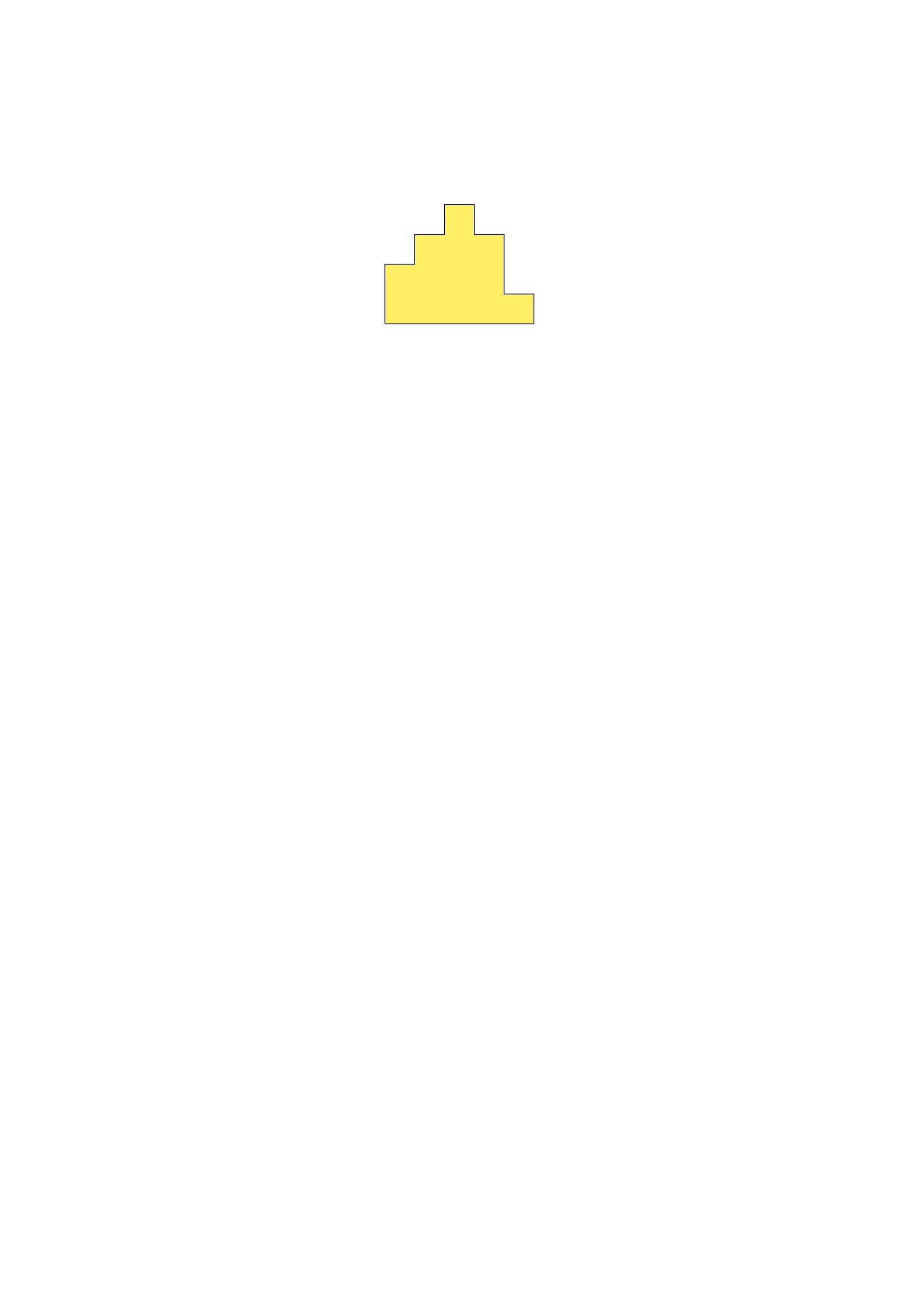}}
    \hspace*{\stretch{2}}
    \subcaptionbox{}{\includegraphics[scale=\graphicsScale,page=5]{MSO_ex.pdf}}
    \hspace*{\stretch{2}}
    \subcaptionbox{}{\includegraphics[scale=\graphicsScale,page=6]{MSO_ex.pdf}}
    \hspace*{\stretch{1}}
    \caption{
    (a) A polygon $ \polygon $ with its pixel graph $ \pixelGraph{ \polygon } $. The vertices of $ \pixelGraph{ \polygon } $ are solid black disks; the edges of $ \pixelGraph{ \polygon } $ are dotted red lines except when they lie on the boundary of $ \polygon $, in which case they are solid black lines.
    (b) The graph $ \extGraph{ \polygon } $. The vertices of $ \extGraph{ \polygon } $ that are edges of $ \pixelGraph{ \polygon } $ are small hollow black circles, while the vertices of $ \extGraph{ \polygon } $ that are pixels are big black circles filled with white. Edges of $ \extGraph{ \polygon } $ that are not edges of $ \pixelGraph{ \polygon } $ are thin solid black lines.
    (c) One possible solution to the MSO formula $ \formula $ for $ k = 3 $.
    A partition $ \partition $ is represented with plain bold red line segments. Vertices in the interpretation of $ \Variable $ corresponding to the partition $ \partition $ are circled in bold red. We label a pixel $ \pixel $ with an \raisebox{0.5pt}{\textcircled{\raisebox{-0.7pt}{$ i $}}} to indicate that $ \pixel $ belongs to $ \interpreted{ \extGraph{ \polygon } }{ \pixelSetVariable _ i ^ \hor } $. The idea is to encode a ``count'' of the horizontal stabbing number of each row of pixels from left to right.
    }
    \label{fig:MSO_ex}
\end{figure}

In the proof of Theorem~\ref{thm:treewidth} (stated next), we show how to exploit small treewidth of $ \extGraph{ \polygon } $ to compute both the stabbing number of $ \polygon $ and the conforming stabbing number of $ \polygon $.
To this end, we use Courcelle's theorem~\cite{courcelle1990monadic,Courcelle-book,ParamCompelxity-book},
which states that if a graph property can be expressed in monadic second-order logic (MSO) as a formula $ \formula $, then testing whether a graph $ \graph $ satisfies $ \formula $ is computable in time that is linear in the number of vertices of $ \graph $ and fixed-parameter tractable in $ | \formula | + \tw ( \graph ) $.   
The length of the two formulas (one for the stabbing number of $ \polygon $ and one for the conforming stabbing number of $ \polygon $) we exhibit in the proof of Theorem~\ref{thm:treewidth} is linear in $ k $ and independent of the size of the graph $ \extGraph{ \polygon } $. 
The pixel graph $ \pixelGraph{ \polygon } $ has $ \OO ( n ^ 2 ) $ vertices, and so does $ \extGraph{ \polygon } $.
Therefore, with Courcelle's theorem, we obtain FPT algorithms which are quadratic in $ n $ to decide each of $ \STAB{ k } $ and $ \CSTAB{ k } $.

\begin{theorem}
    \label{thm:treewidth}
    There exists an algorithm that, for a rectilinear $ n $-gon $ \polygon $ (possibly with holes) with treewidth $ \ell $, decides $ \STAB{ k } $ and $ \CSTAB{ k } $ in $ \OO ( f ( k + \ell ) \ n ^ 2 )$ time, for some function $ f $ that does not depend on $ n $.
\end{theorem}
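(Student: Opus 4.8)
The plan is to express $\CSTAB{k}$ as a property of the radial graph $R_P$ in monadic second-order logic (MSOL) and then invoke Courcelle's theorem~\cite{courcelle1990monadic}. Two ingredients are already in place from the discussion above: $R_P$ has $\OO(n^2)$ vertices (since $G_P$ does), and it has treewidth $\OO(\tw(G_P)) = \OO(\ell)$ by the argument of Borradaile et al.~\cite{BorradaileELW17}. I would first compute a tree decomposition of $R_P$ of width $\OO(\ell)$ in FPT time (linear in $|V(R_P)| = \OO(n^2)$, with dependence on $\ell$ only), using known fixed-parameter algorithms for tree decompositions. It then remains to build a closed MSOL sentence $\formula$, of length depending only on $k$, that holds on $R_P$ precisely when $P$ admits a conforming partition of stabbing number at most $k$. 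Given such a $\formula$, Courcelle's theorem decides whether $R_P \models \formula$ in time linear in $|V(R_P)| = \OO(n^2)$ and fixed-parameter tractable in $|\formula| + \tw(R_P) = \OO(k) + \OO(\ell)$, yielding the claimed $\OO(f(k,\ell)\,n^2)$ bound for a suitable $f$.

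The heart of the construction is $\formula$, which I would build around existentially quantified set variables. I quantify over vertex sets $\Vver, \Vhor \subseteq V_P$ — the vertices of $G_P$ lying on the chosen vertical, respectively horizontal, reflex segments — together with a partition of $\pixelset_P$ into $\pixelset_1^\hor, \dots, \pixelset_k^\hor$ and a second partition into $\pixelset_1^\ver, \dots, \pixelset_k^\ver$. Using only adjacencies of $R_P$, I would assert that $\Vver$ and $\Vhor$ are unions of genuine reflex segments of the correct orientation, that every reflex vertex of $P$ lies in $\Vver \cup \Vhor$ (so each reflex vertex is covered), and that the chosen segments are pairwise non-crossing (expressed by forbidding any vertex from carrying both a chosen horizontal and a chosen vertical segment). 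Together these make the chosen segments a conforming partition; the delicate local formulas, in particular the one capturing ``union of reflex segments,'' are deferred to Appendix~\ref{sec:MSOL_details}.

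For the stabbing bound I would use the counting sets, working over a labelled copy of $R_P$ in which the edges of $G_P$ are tagged horizontal or vertical and pixel incidences record left/right and top/bottom (such bounded labellings do not change the treewidth and are standard for Courcelle's theorem). The intended meaning is that $\pixel \in \pixelset_i^\hor$ exactly when the horizontal stabbing segment through $\pixel$, swept left to right, has entered $i$ rectangles upon reaching $\pixel$. I would require the leftmost pixel of each maximal horizontal run to lie in $\pixelset_1^\hor$, and for every pair of horizontally adjacent pixels $\pixel$ (left) and $\pixel'$ (right) sharing a vertical edge $e$ with $\pixel \in \pixelset_i^\hor$, I force $\pixel' \in \pixelset_{i+1}^\hor$ when the vertices of $e$ lie in $\Vver$ (so $e$ is used and the segment enters a new rectangle) and $\pixel' \in \pixelset_i^\hor$ otherwise. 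Capping the indices at $k$ then forbids any horizontal stabbing segment from meeting more than $k$ rectangles, and the symmetric constraints on $\pixelset_1^\ver, \dots, \pixelset_k^\ver$ do the same for vertical stabbing segments. Each such requirement is a fixed MSOL formula referring only to $R_P$-adjacencies and the quantified sets.

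The main obstacle is verifying correctness of the encoding and that $|\formula|$ is genuinely independent of $n$. For correctness I would lean on the fact, noted in the preliminaries, that two stabbing segments are equivalent exactly when they meet the same pixels; this justifies tracking rectangle counts pixel by pixel rather than over a continuum of segments, and I must check that the local increment rule telescopes to the true number of rectangles a stabbing segment crosses and that the cap at $k$ captures exactly the desired bound. For the length bound, note that $\formula$ refers to the $2k$ counting sets together with $\Vver$ and $\Vhor$, while every remaining clause uses only a bounded number of first-order quantifiers over vertices and edges; hence $|\formula| = \OO(k)$ with no dependence on the size of $R_P$. With correctness and this length bound in hand, Courcelle's theorem delivers the stated running time.
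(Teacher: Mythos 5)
Your proposal is correct and follows essentially the same route as the paper: encode the conforming partition via vertex sets $\Vver,\Vhor$ on the radial graph $R_P$, track per-pixel rectangle counts with sets $\pixelset_1^\hor,\dots,\pixelset_k^\hor$ (and the vertical analogues) whose indices increment across used vertical edges, and invoke Courcelle's theorem on a width-$\OO(\ell)$ tree decomposition of $R_P$, which has $\OO(n^2)$ vertices. The only cosmetic difference is that you anchor each horizontal run's leftmost pixel in $\pixelset_1^\hor$ and demand a genuine partition, whereas the paper allows a mere cover with arbitrary starting index and observes that the telescoping increments still cap the stabbing number at $k$; both variants are sound.
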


\begin{proof}
As mentioned, Theorem~\ref{thm:treewidth} follows from Courcelle's theorem, which requires to exhibit two MSO formulas, say $ \MSOFormula{ k } $ for $ \STAB{ k } $ and $ \MSOFormulaConf{ k } $ for $ \CSTAB{ k } $, which do not dependent on $ \polygon $.

Recall that an MSO formula in graph theory may contain three types of variables:
\begin{itemize}
    \item First order variables, denoted using lower case letters such as $ \FOvariable $, that are interpreted as a vertex, denoted $ \interpreted{ \extGraph{ \polygon } }{ \FOvariable } $, in the graph $ \extGraph{ \polygon } $. 
    \item MSO$ _ 1 $ variables, denoted using upper case letters such as $ \Variable $, that are interpreted as a set of vertices, denoted $ \interpreted{ \extGraph{ \polygon } }{ \Variable } $, in the graph $ \extGraph{ \polygon } $ (the set $ \interpreted{ \extGraph{ \polygon } }{ \Variable } $ may be viewed as an \emph{unary} relation, hence the term ``monadic''). 
    \item MSO$ _ 2 $ variables that are interpreted as a set of edges in the graph $ \extGraph{ \polygon } $. Our formulas $ \MSOFormula{ k } $ and $ \MSOFormulaConf{ k } $ do not use those, which makes them formulas of MSO$ _ 1 $.\footnote{Since we only use MSO$ _ 1 $ and not MSO$ _ 2 $, we actually have a stronger result: These algorithms are FPT even in the \emph{cliquewidth} of $ \extGraph{ \polygon } $~\cite{Courcelle-book}.}
\end{itemize}
Both types of variable may be quantified with $ \forall , \exists $, the usual Boolean operators $ \neg , \wedge , \vee , \implies $ may be used, as well as the binary predicates $ \in $ (as in $ \FOvariable \in \Variable $) and $ \isAdjacent{ } $ (as in $ \isAdjacent{ } ( v _ 1 , v _ 2 ) $, interpreted as true iff the vertices $ v _ 1 , v _ 2 $ are adjacent in the graph $ \extGraph{ \polygon } $). 
We freely use the usual shorthands, such as $ \forall \FOvariable \in \Variable \ ( \formula ( \FOvariable , \dots ) ) $ in place of $ \forall \FOvariable \ ( \FOvariable \in \Variable \implies \formula ( \FOvariable , \dots ) ) $, and $ \FOvariable \not \in \Variable $ in place of $ \neg ( \FOvariable \in \Variable ) $, for any first order variable $ \FOvariable $, any second order variable or constant $ \Variable $, and any formula $ \formula $.

In fact, Courcelle's theorem also applies to \emph{labeled} graphs, where each vertex and each edge is labeled by a constant number of bits.
In other words, our formulas $ \MSOFormula{ k } $ and $ \MSOFormulaConf{ k } $ may use additional predicates as long as their interpretation only involves reading some labels on vertices or edges of the graph $ \extGraph{ \polygon } $.
In our case, we label each vertex $ v $ of $ \extGraph{ \polygon } $ to read which of $ \vertexSetOf{ \pixelGraph{ \polygon } } , \edgeSetOf{ \pixelGraph{ \polygon } } , \pixelSetOf{ \pixelGraph{ \polygon } } $ the vertex $ v $ belongs to, as well as whether $ v $ lies on the boundary of the polygon $ \polygon $.
We also duplicate each edge of $ \extGraph{ \polygon } $ into two directed edges with a label to read whether they are horizontal or vertical and whether they lie on the boundary of the polygon $ \polygon $.
In the following, we describe the additional predicates (readable from our labeling of $ \extGraph{ \polygon } $ is constant time) that we introduce.
\begin{itemize}
    \item The unary predicate $ v \in \vertexSetVariable $ is interpreted as true if and only if $ \interpreted{ \extGraph{ \polygon } }{ v } \in \vertexSetOf{ \pixelGraph{ \polygon } } $.
    \item The unary predicate $ e \in \edgeSetVariable $ is interpreted as true if and only if $ \interpreted{ \extGraph{ \polygon } }{ e } \in \edgeSetOf{ \pixelGraph{ \polygon } } $.
    \item The unary predicate $ \pixel \in \pixelSetVariable $ is interpreted as true if and only if $ \interpreted{ \extGraph{ \polygon } }{ \pixel } \in \pixelSetOf{ \pixelGraph{ \polygon } } $.
    \item The unary predicate \[ \isOnBoundary ( v ) \] is interpreted as true if and only if $ \interpreted{ \extGraph{ \polygon } }{ v } \in \vertexSetOf{ \pixelGraph{ \polygon } } \cup \edgeSetOf{ \pixelGraph{ \polygon } } $ and $ \interpreted{ \extGraph{ \polygon } }{ v } $ lies on the boundary of the polygon $ \polygon $.
    \item For each $ \alpha \in \{ \north , \south , \est , \west \} $, the binary predicate \[ \isAdjacent{ \alpha } ( v , e ) \] is interpreted as true if and only if $ \interpreted{ \extGraph{ \polygon } }{ e } \in \edgeSetOf{ \pixelGraph{ \polygon } } $ is the \emph{$ \alpha $ edge} from $ \interpreted{ \extGraph{ \polygon } }{ v } $, i.e., goes in the direction $ \alpha $ from the vertex $ \interpreted{ \extGraph{ \polygon } }{ v } \in \vertexSetOf{ \pixelGraph{ \polygon } } $.
    \item The ternary predicate \[ \areAdjacent ^ \hor ( \pixel , e , \pixel ' ) \quad ( \text{respectively } \areAdjacent ^ \ver ( \pixel , e , \pixel ' ) ) \] is interpreted as true if and only if $ \interpreted{ \extGraph{ \polygon } }{ \pixel } , \interpreted{ \extGraph{ \polygon } }{ \pixel ' } \in \pixelSetOf{ \pixelGraph{ \polygon } } $, $ \interpreted{ \extGraph{ \polygon } }{ e } \in \edgeSetOf{ \pixelGraph{ \polygon } } $, and $ \interpreted{ \extGraph{ \polygon } }{ \pixel } , \interpreted{ \extGraph{ \polygon } }{ e } , \interpreted{ \extGraph{ \polygon } }{ \pixel ' } $ forms a horizontal (respectively vertical) path directed from left to right (respectively from bottom to top) in $ \extGraph{ \polygon } $.
\end{itemize}

We now build the formula $ \MSOFormula{ k } $ (respectively $ \MSOFormulaConf{ k } $) that is satisfied if and only if polygon $ \polygon $ has stabbing number at most $ k $ (respectively has conforming stabbing number at most $ k $).
We define $ \MSOFormula{ k } $ and $ \MSOFormulaConf{ k } $ via subformulas with explicit names to ease the reading.
Both of $ \MSOFormula{ k } $ and $ \MSOFormulaConf{ k } $, begin with $ \exists \Variable $, where the MSO$ _ 1 $ variable $ \Variable $ is to be interpreted as a set $ \interpreted{ \extGraph{ \polygon } }{ \Variable } $ of vertices of the graph $ \extGraph{ \polygon } $.
This set $ \interpreted{ \extGraph{ \polygon } }{ \Variable } $ is constrained by the formulas to be a set of edges of the pixel graph $ \pixelGraph{ \polygon } $ that yields a partition of $ \polygon $ (in the sense that the edges of the rectangles of the partition are formed by concatenating the edges in $ \interpreted{ \extGraph{ \polygon } }{ \Variable } $) with stabbing number at most $ k $.
\begin{flalign*}
\MSOFormula{ k } \Equal \exists \Variable \ \big [ \ & \isPartition ( \Variable ) \ \wedge & \\
& \hasStabbingNumberAtMost{ k } ^ \hor ( \Variable ) \ \wedge & \\
& \hasStabbingNumberAtMost{ k } ^ \ver ( \Variable ) \quad \ \big ] &
\end{flalign*}
\begin{flalign*}
\MSOFormulaConf{ k } \Equal \exists \Variable \ \big [ \ & \isPartition ( \Variable ) \ \wedge & \\
& \isConforming ( \Variable ) \ \wedge & \\
& \hasStabbingNumberAtMost{ k } ^ \hor ( \Variable ) \ \wedge & \\
& \hasStabbingNumberAtMost{ k } ^ \ver ( \Variable ) \quad \ \big ] &
\end{flalign*}

We now define the subformulas, starting with $ \isPartition $.
\begin{flalign*}
    \isPartition ( \Variable ) \Equal & \containsOnlyEdgesOfThePixelGraph ( \Variable ) \ \wedge & \\
    & \coversAllEdgesOfThePolygon ( \Variable ) \ \wedge & \\
    & \yieldsOnlyRectangles ( \Variable ) &
\end{flalign*}
The first two subformulas of $ \isPartition $ have the following easy definitions.
\begin{flalign*}
    \containsOnlyEdgesOfThePixelGraph ( \Variable ) \Equal \forall e \in \Variable \ ( e \in \edgeSetVariable ) & &
\end{flalign*}
\begin{flalign*}
    \coversAllEdgesOfThePolygon ( \Variable ) \Equal \forall e \in \edgeSetVariable \ ( \isOnBoundary ( e ) \implies e \in \Variable ) & &
\end{flalign*}
The last subformula of $ \isPartition $ has the following definition.
It constrains the set $ \interpreted{ \extGraph{ \polygon } }{ \Variable } $ to yield only rectangles in the partition, that is, that the edges of $ \pixelGraph{ \polygon } $ which are in $ \interpreted{ \extGraph{ \polygon } }{ \Variable } $ and are incident to a common vertex $ v \in \vertexSetOf{ \pixelGraph{ \polygon } } $ do not form reflex vertex (of degree two or one).
Since the edges of the pixel graph $ \pixelGraph{ \polygon } $ which lie on the boundary of $ \polygon $ are all included in the partition defined by $ \interpreted{ \extGraph{ \polygon } }{ \Variable } $ (via $ \containsOnlyEdgesOfThePixelGraph $), such a reflex vertex $ v $ has degree four in the pixel graph.
We therefore enforce, for every vertex $ v $ of degree four in the pixel graph, that if the $ \north $ edge from $ v $ is in $ \interpreted{ \extGraph{ \polygon } }{ \Variable } $, then either the $ \south $ edge from $ v $ or both $ \est $ and $ \west $ edges from $ v $ are also in $ \interpreted{ \extGraph{ \polygon } }{ \Variable } $.
We also enforce the three other similar rules, one for each orientation of the same situation.
\begin{flalign*}
    \yieldsOnlyRectangles ( \Variable ) \Equal \forall v \in \vertexSetVariable \ \exists e _ \north , e _ \south , e _ \est , e _ \west \ \Bigg [ & &
\end{flalign*}\vspace{-2em}
\begin{flalign*}
    \left ( \vphantom{ \bigwedge _{ \alpha \in \{ \north , \south , \est , \west \} } } \right . & \left . \bigwedge _{ \alpha \in \{ \north , \south , \est , \west \} } \isAdjacent{ \alpha } ( v , e _ \alpha ) \right ) \implies & \\
    \Bigg ( & \Big ( \ e _ \north \in \Variable \ \implies \ \big ( e _ \south \in \Variable \ \vee \ ( e _ \est \in \Variable \ \wedge \ e _ \west \in \Variable ) \big ) \ \Big ) \ \wedge \\
    & \Big ( \ e _ \south \in \Variable \ \implies \ \big ( e _ \north \in \Variable \ \vee \ ( e _ \est \in \Variable \ \wedge \ e _ \west \in \Variable ) \big ) \ \Big ) \ \wedge \\
    & \Big ( \ e _ \est \in \Variable \ \implies \ \big ( e _ \west \in \Variable \ \vee \ ( e _ \north \in \Variable \ \wedge \ e _ \south \in \Variable ) \big ) \ \Big ) \ \wedge \\
    & \Big ( \ e _ \west \in \Variable \ \implies \ \big ( e _ \est \in \Variable \ \vee \ ( e _ \north \in \Variable \ \wedge \ e _ \south \in \Variable ) \big ) \ \Big ) \ \phantom{ \wedge \ } \Bigg ) \Bigg ] &
\end{flalign*}

We now define the subformula $ \isConforming $.
The idea is somewhat similar to the definition of $ \yieldsOnlyRectangles $.
For each vertex $ v $ of the pixel graph that has degree four in the pixel graph and that lies in the interior of $ \polygon $, we enforce that $ v $ is not a Steiner point. In other words, we enforce that if an edge of the pixel graph is in $ \interpreted{ \extGraph{ \polygon } }{ \Variable } $ and is incident to $ v $, then there is another such edge extending the first one in the same direction, but we exclude the situation where all four edges of the pixel graph incident to $ v $ are in $ \interpreted{ \extGraph{ \polygon } }{ \Variable } $.
\begin{flalign*}
    \isConforming ( \Variable ) \Equal \forall v \in \vertexSetVariable \ \exists e _ \north , e _ \south , e _ \est , e _ \west \Bigg [ & &
\end{flalign*}\vspace{-2em}
\begin{flalign*}
    & & \left ( \neg \isOnBoundary ( v ) \ \wedge \left ( \bigwedge _{ \alpha \in \{ \north , \south , \est , \west \} } \isAdjacent{ \alpha } ( v , e _ \alpha ) \right ) \right ) \implies \\
    & & \Bigg ( \ ( e _ \north \in \Variable \ \iff \ e _ \south \in \Variable ) \ \wedge \ \hphantom{ \Bigg ) \Bigg ] } \\
    & & ( e _ \est \in \Variable \ \iff \ e _ \west \in \Variable ) \ \wedge \ \hphantom{ \Bigg ) \Bigg ] } \\
    & & ( e _ \north \not \in \Variable \ \vee \ e _ \est \not \in \Variable ) \ \hphantom{ \wedge } \ \Bigg ) \Bigg ]
\end{flalign*}

Finally, we define the subformula $ \hasStabbingNumberAtMost{ k } ^ \hor $ (the subformula $ \hasStabbingNumberAtMost{ k } ^ \ver $ has exactly the same definition, except that all the occurrences of $ ^ \hor$ are replaced by $ ^ \ver $).
The idea is to define $ k $ MSO$ _ 1 $ variables $ \pixelSetVariable _ 1 ^ \hor , \dots , \pixelSetVariable _ k ^ \hor $, which are to be interpreted as $ k $ sets $ \interpreted{ \extGraph{ \polygon } }{ \pixelSetVariable _ 1 ^ \hor } , \dots , \interpreted{ \extGraph{ \polygon } }{ \pixelSetVariable _ k ^ \hor } $ of vertices of the graph $ \extGraph{ \polygon } $. These will be constrained by the formulas to be interpreted as a set of pixels of $ \polygon $ that yields a ``count'' of the horizontal stabbing number of each row of pixels in $ \polygon $ from left to right (Figure~\ref{fig:MSO_ex}(c)).
More specifically, in the interpretation:
\begin{thmEnumerate}
    \item\label{pixelSetVariablesCover} We have $ \interpreted{ \extGraph{ \polygon } }{ \pixelSetVariable _ 1 ^ \hor } \cup \dots \cup \interpreted{ \extGraph{ \polygon } }{ \pixelSetVariable _ k ^ \hor } = \pixelSetOf{ \pixelGraph{ \polygon } } $. Note that we do not impose $ \interpreted{ \extGraph{ \polygon } }{ \pixelSetVariable _ 1 ^ \hor } , \dots , \interpreted{ \extGraph{ \polygon } }{ \pixelSetVariable _ k ^ \hor } $ to be pairwise disjoint.
    \item\label{pixelSetVariablesCount} If we have a pixel $ \pixel \in \interpreted{ \extGraph{ \polygon } }{ \pixelSetVariable _ i ^ \hor } $ and call $ c $ its center, and if $ \segment $ is the horizontal stabbing segment containing $ c $, then the portion of $ \segment $ to the left of $ c $ stabs at most $ i $ rectangles of the partition defined by $ \interpreted{ \extGraph{ \polygon } }{ \Variable } $. Note that we do not impose the leftmost pixel of a row to belong to $ \interpreted{ \extGraph{ \polygon } }{ \pixelSetVariable _ 1 ^ \hor } $.
\end{thmEnumerate}
\begin{flalign*}
    \hasStabbingNumberAtMost{ k } ^ \hor ( \Variable ) \Equal & \exists \pixelSetVariable _ 1 ^ \hor , \dots , \pixelSetVariable _ k ^ \hor \ \bigg [ & &
\end{flalign*}\vspace{-2em}
\begin{flalign*}
    & & \coversAllPixels ( \pixelSetVariable _ 1 ^ \hor , \dots , \pixelSetVariable _ k ^ \hor ) \ \wedge \ \hphantom{ \bigg ] } \\
    & & \isCountIncrementedAtEachStab ^ \hor ( \Variable , \pixelSetVariable _ 1 ^ \hor , \dots , \pixelSetVariable _ k ^ \hor ) \ \wedge \ \hphantom{ \bigg ] } \\
    & & \isCountNotDecrementedAtEachNonStab ^ \hor ( \Variable , \pixelSetVariable _ 1 ^ \hor , \dots , \pixelSetVariable _ k ^ \hor ) \phantom{ \ \wedge \ } \bigg ] 
\end{flalign*}
Note that the formulas $ \hasStabbingNumberAtMost{ k } ^ \hor $ has a size linear in $ k $ while all the other subformulas defined thus far had constant size.   (Some of the subformulas for $ \hasStabbingNumberAtMost{ k } ^ \hor $ will have size linear in $ k $ as well, as will the formula for $ \hasStabbingNumberAtMost{ k } ^ \ver $ and its subformulas, but no formula will have bigger size.) 

The first subformula of $ \hasStabbingNumberAtMost{ k } ^ \hor $ is defined as follows without difficulty and implements~\ref{pixelSetVariablesCover}. 
\begin{flalign*}
    \coversAllPixels ( \pixelSetVariable _ 1 ^ \hor , \dots , \pixelSetVariable _ k ^ \hor ) \Equal \forall \pixel \in \pixelSetVariable \ \left ( \bigvee _{ i = 1 } ^{ k } ( \pixel \in \pixelSetVariable _ i ^ \hor ) \right ) & &
\end{flalign*}
The remaining two subformulas implement~\ref{pixelSetVariablesCount} locally: consider two pixels $ \pixel , \pixel ' $ sharing a common vertical edge $ e $ with $ \pixel $ on the left of $ e $.
With the subformula $ \isCountIncrementedAtEachStab ^ \hor $, we impose that, if $ e \in \interpreted{ \extGraph{ \polygon } }{ \Variable } $ and if $ \pixel \in \interpreted{ \extGraph{ \polygon } }{ \pixelSetVariable _ i ^ \hor } $, then $ \pixel ' \in \interpreted{ \extGraph{ \polygon } }{ \pixelSetVariable _{ i + 1 } ^ \hor } $.
With $ \isCountNotDecrementedAtEachNonStab ^ \hor $, we impose that, if $ e \not \in \interpreted{ \extGraph{ \polygon } }{ \Variable } $ and if $ \pixel \in \interpreted{ \extGraph{ \polygon } }{ \pixelSetVariable _ i ^ \hor } $, then $ \pixel ' \in \interpreted{ \extGraph{ \polygon } }{ \pixelSetVariable _ i ^ \hor } $.
A simple induction on $ i $ proves that this indeed implements~\ref{pixelSetVariablesCount}.
\begin{flalign*} 
    & \isCountIncrementedAtEachStab ^ \hor ( \Variable , \pixelSetVariable _ 1 ^ \hor , \dots , \pixelSetVariable _ k ^ \hor ) \Equal &
\end{flalign*}
\[
    ~ ~ \forall e \in \Variable \ \forall \pixel , \pixel ' \in \pixelSetVariable  \ \Bigg [ \areAdjacent ^ \hor ( \pixel , e , \pixel ' ) \ \implies \ \left ( \bigwedge _{ i = 1 } ^{ k - 1 } ( \pixel \in \pixelSetVariable _ i ^ \hor \implies \pixel ' \in \pixelSetVariable _{ i + 1 } ^ \hor ) \right ) \Bigg ]
\]
\begin{flalign*} 
    & \isCountNotDecrementedAtEachNonStab ^ \hor ( \Variable , \pixelSetVariable _ 1 ^ \hor , \dots , \pixelSetVariable _ k ^ \hor ) \Equal &
\end{flalign*}
\[
    ~ ~ \forall e \not \in \Variable \ \forall \pixel , \pixel ' \in \pixelSetVariable  \ \Bigg [ \areAdjacent ^ \hor ( \pixel , e , \pixel ' ) \ \implies \ \left ( \bigwedge _{ i = 1 } ^{ k } ( \pixel \in \pixelSetVariable _ i ^ \hor \implies \pixel ' \in \pixelSetVariable _{ i } ^ \hor ) \right ) \Bigg ]
\]
This concludes the proof of Theorem~\ref{thm:treewidth}.
\end{proof}

\section{Hole-Free Polygons in General Position} \label{sec:gate-free-hole-free}

In this section, we give a second FPT algorithm, which makes a different assumption on the input polygon $ \polygon $.   
We require $ \polygon $ to have no holes and no gates (the latter holds in particular if $ \polygon $ is in general position), but in exchange we no longer need to bound the treewidth.
The idea for this theorem is to distinguish two cases using the maximum number $ m $ of reflex segments intersected by a stabbing segment in $ \polygon $: if $ m $ is smaller than a well-chosen threshold $ \ell $, then we show that the treewidth of $ \polygon $ is small and Theorem~\ref{thm:treewidth} applies, and if $ m $ is at least $ \ell $, then we show that the conforming stabbing number is bigger than $ k $ (Lemma~\ref{lem:gridStabbing}).

\begin{theorem}
    \label{thm:simpleGeneral}
    There exists an algorithm that, for any rectilinear $ n $-gon $ \polygon $ that has no holes and no gates (the latter holds if $ \polygon $ is in general position), decides $ \CSTAB{ k } $ and provides a solution (if any) in $ \OO ( f '( k ) n ^ 2 ) $ time, for some function $ f ' $ that does not depend on $ n $.
\end{theorem}

\begin{proof}
We show Theorem~\ref{thm:simpleGeneral} by applying a dichotomy argument somewhat similar to bidimensionality~\cite{DFHT05}.
We choose the threshold $ \ell = k \big( 2 ^{ k + 2 } + 4 ( k + 1 ) \big) + ( k - 1 ) $ (the reason for this choice of $ \ell $ will be clear later; it has been chosen for ease of explanation and has not been optimized).
Let $ m $ be the maximum number of reflex segments that are intersected by a stabbing segment in $ \polygon $. We have the following two cases.
\begin{Cases}
    \item $ m < \ell $.
    Recall that the \emph{outerplanarity} of a plane graph $ \graph $ is:
    \begin{itemize}
        \item $ 1 $ if $ \graph $ is outerplanar (i.e. all the vertices of $ \graph $ lie on the outer face of $ \graph $), and is
        \item  $ j \geq 2 $ if the induced subgraph of $ \graph $ obtained by removing the vertices of the outer face of $ \graph $ has outerplanarity $ j - 1 $.
    \end{itemize}
    It is well-known that the treewidth of a planar graph $ \graph $ is at most three times the outerplanarity of $ \graph $ minus one~\cite{Bod88outerplanarityBoundsTreewidth,Bod98arboretum,Kat13outerplanarityBoundsTreewidth}.

    Applied to our case, if we prove that the outerplanarity of $ \pixelGraph{ \polygon } $ is $ \OO ( \ell ) $ (which is done in the next paragraph), then the treewidth of $ \pixelGraph{ \polygon } $ is also $ \OO ( \ell ) $.
    Theorem~\ref{thm:treewidth} then applies and ensures that $ \CSTAB{ k } $ is decidable in $ \OO ( f ( k , \OO ( \ell ) ) n ^ 2 ) $ time.
    Since $ \ell $ only depends on $ k $ and not on $ n $, this gives the result in this case.

    We now prove that the outerplanarity of $ \pixelGraph{ \polygon } $ is $ \OO ( \ell ) $.
    Since we have $ m < \ell $, every vertex $ v $ of the pixel graph $ \pixelGraph{ \polygon } $ is within distance $ \lceil \ell / 2 \rceil $ of a vertex that lies on the boundary of $ \polygon $, simply by walking along one of the two reflex segments (seen as a stabbing segment) intersecting at $ v $ to reach the boundary.
    Since $ \polygon $ has no holes, all vertices of $ \pixelGraph{ \polygon } $ are within distance $ \lceil \ell / 2 \rceil $ of some vertex on the outer face of $ \pixelGraph{ \polygon } $.
    In other words, the outerplanarity of $ \pixelGraph{ \polygon } $ is at most $ \lceil \ell / 2 \rceil + 1 = \OO ( \ell ) $.

    \item $ m \geq \ell $.
    In this case, Lemma~\ref{lem:gridStabbing} (proved hereafter) applies and ensures that $ \CSTAB{ k } $ has no solution.
    In other words, deciding $ \CSTAB{ k } $ in this case amounts to systematically reject.\qedhere
\end{Cases}
\end{proof}

\begin{lemma}
    \label{lem:gridStabbing}
    Let $ \polygon $ be a gate-free rectilinear polygon (possibly with holes) that contains a stabbing segment $ \segment $ intersecting at least $ \ell = k ( 2 ^{ k + 2 } + 4 ( k + 1 ) ) + ( k - 1 ) $ reflex segments of $ \polygon $.
    Then any conforming partition of $ \polygon $ has stabbing number at least $ k + 1 $.
\end{lemma}

\begin{figure}[ht]
    \centering
    \includegraphics[scale=\graphicsScale,page=1]{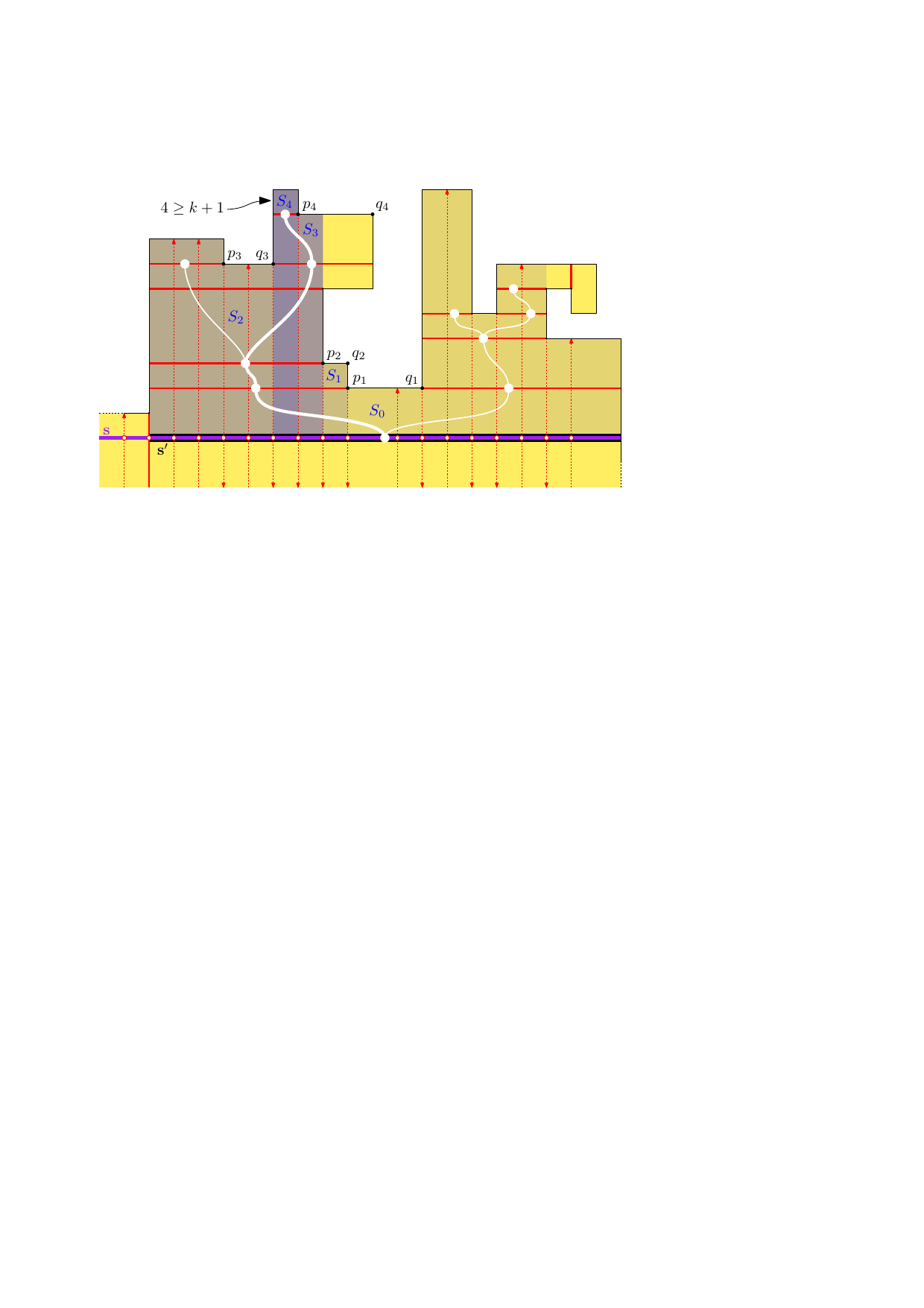}
    \caption{Illustration of the proof of Lemma~\ref{lem:gridStabbing}: If the segment $ \segment $ intersects a lot of vertical reflex segments (these intersections are hollow red circles), then the stabbing number is high (at least $ k + 1 $). The idea is a binary search in the tree drawn with white curvy edges. The tree is rooted at $ \segment ' $ and has a node for each horizontal reflex segment of the form $ \horOf{ p } $, where $ p $ is a reflex vertex above $ \segment ' $ and $ \verOf{ p } $ intersects $ \segment ' $ (note that not all horizontal reflex segments give rise to a node). The binary search recursively chooses the node of the tree that has the most reflex segments intersecting $ \segment ' $ from above (drawn with dotted red segments ending in a downward arrow). The branch of the tree resulting from this binary search is in bold white.}
    \label{fig:bigGrid}
\end{figure}

\begin{proof}
(This proof is illustrated in Figure~\ref{fig:bigGrid}.)
Assume without loss of generality that $ \segment $ is horizontal.
Assume for a contradiction that $ \polygon $ admits a conforming partition $ \partition $ with stabbing number at most $ k $.
The segments in $ \partition $ intersect $ \segment $ no more than $ k - 1 $ points, thereby inducing a partition of $ \segment $ into $ k $ sub-segments.
With our choice of $ \ell $, one of these sub-segments of $ \segment $, say $ \segment ' $, intersects at least $ 2 ^{ k + 2 } + 4 ( k + 1 ) $ vertical reflex segments of $ \polygon $ that are not in $ \partition $.
Up to mirroring vertically, at least half of these $ 2 ^{ k + 2 } + 4 ( k + 1 ) $ reflex segments are \emph{from above} $ \segment ' $, i.e., are of the form $ \verOf{ p } $, where $ p $ is a reflex vertex above $ \segment ' $ and $ \verOf{ p } $ intersects $ \segment ' $.
Note that, by construction of $ \segment ' $, none of these $ \verOf{ p } $ from above $ \segment ' $ are in the partition $ \partition $, which implies that the corresponding $ \horOf{ p } $ are in $ \partition $ instead.

The following construction exhibits a non-empty set of vertical stabbing segments each of which intersects at least $ k + 1 $ of these $ \horOf{ p } $.
The idea is to recursively refine a variable $ S $ initialized as the set $ S _ 0 $ of all the vertical stabbing segments of $ \polygon $ intersecting $ \segment ' $, using a binary search. (The underlying tree of this binary search is drawn in Figure~\ref{fig:bigGrid}.)

Among the horizontal edges of $ \polygon $ above $ \segment ' $, let $ \sgt{ p _ 1 }{ q _ 1 } $ be the \emph{unique} one which is the closest to $ \segment ' $; there is no tie because $ \polygon $ has no gates.
We first remove from $ S $ the stabbing segments which end on $ \sgt{ p _ 1 }{ q _ 1 } $.
The stabbing segments remaining in $ S $ are partitioned into two sets: the ones that stab $ \horOf{ p _ 1 } \in \partition $ (forming the \emph{left set}), and the ones that stab $ \horOf{ q _ 1 } \in \partition $ (forming the \emph{right set}).
One of these two sets, say the left set, contains at least about half ($ 2 ^{ k } + 2 k $, to be more precise) of the vertical reflex segments from above $ \segment ' $.
We keep the left set and discard the right set ($ S = S _ 1 $ in Figure~\ref{fig:bigGrid}).
We recursively repeat this binary search process until both the right set and the left set are empty ($ S = S _ 4 $ in Figure~\ref{fig:bigGrid}).
Each time, the stabbing segments we keep in $ S $ stab one more horizontal segment in the partition $ \partition $.
Our choice of $ \ell $ ensures a total of at least $ k + 1 $ iterations before running out of reflex segments from above $ \segment ' $.
At the end of the process, $ S $ has the desired property to raise a contradiction:
$ S $ is a non-empty set of vertical stabbing segments each of which intersects at least $ k + 1 $ reflex segments in $ \partition $.
\end{proof}

\begin{figure}[ht]
    \centering
    \includegraphics[scale=\graphicsScale,page=1]{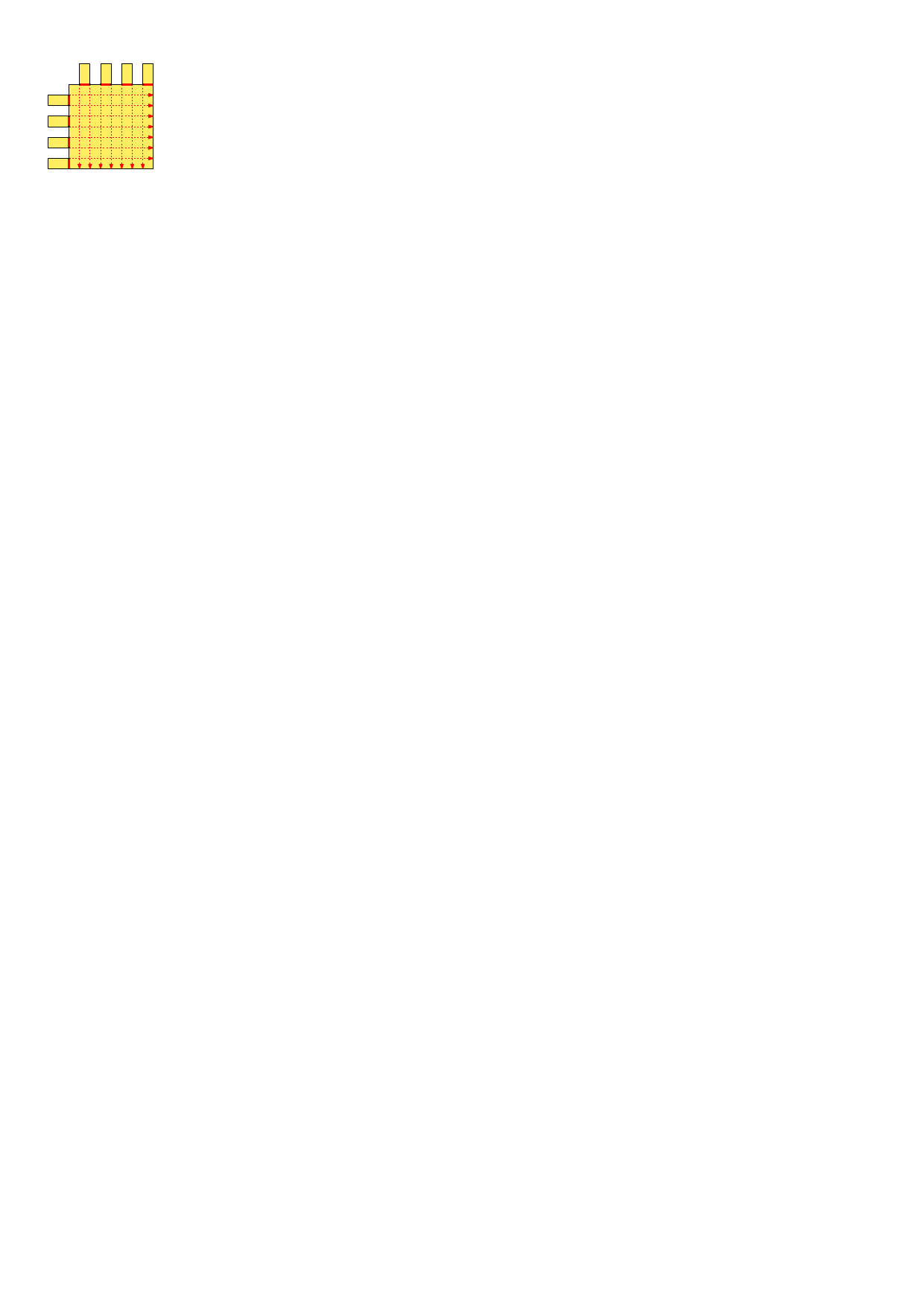}
    \caption{The polygon $ \comb $ (with two rows of~$ 4 $ teeth and $ n = 16 $ vertices).}
    \label{fig:comb}
\end{figure}

\medskip
Unfortunately, Lemma~\ref{lem:gridStabbing} does not carry over to polygons that have gates.
Consider the polygon $ \comb $ that is essentially a square, except that we attach $ \Theta ( n ) $ \emph{teeth} (i.e., small rectangles) on the left side and $ \Theta ( n ) $ teeth on the top side (Figure~\ref{fig:comb}).
The polygon $ \comb $ has conforming stabbing number~$ 2 $ (consider the conforming partition consisting of the reflex segments that ``cut off'' all the teeth). 
Yet, $ \comb $ contains a stabbing segment intersecting $ \Theta ( n ) $ reflex segments of $ \comb $ (i.e. $ m = \Omega ( n ) $), and in fact, the pixel graph of $ \comb $ contains a $ \Theta ( n ) \times \Theta ( n ) $-grid and has treewidth $ \Theta ( n ) $. 
Thus, a dichotomy based on the treewidth of $ \comb $ or on $ m $ does not work. 
We conjecture that some other approach would work and leave one important open problem.

\begin{conjecture}
    Testing whether a rectilinear polygon $ \polygon $ without holes has conforming stabbing number at most $ k $ is fixed-parameter tractable in $ k $. 
\end{conjecture}

\section{Conclusion}

In this paper, we show that computing a conforming partition of a rectilinear polygon with stabbing number $ k $ is $ \NP $-hard for all $ k \geq 4 $. Since the reduction uses only thin polygons, the hardness result follows even if we omit the conforming constraint. The polygons used in our reduction have holes. Therefore, determining the time complexity of computing an optimal (conforming) partition for polygons without holes remains open.

On the positive side, we provide an $ \OO ( n \log n ) $-time algorithm to decide whether a polygon admits a conforming partition with stabbing number~$ 2 $.
Since the problem is $ \NP $-hard already for conforming stabbing number~$ 4 $, only the case of conforming stabbing number~$ 3 $ remains open.
For polygons (possibly with holes) with bounded treewidth and bounded conforming stabbing number, we give a quadratic-time algorithm to compute the minimum stabbing number. 
An exciting direction would be to design fixed-parameter tractable algorithms for polygons without holes parameterized by the conforming stabbing number, which would complement the hardness result for polygons with holes. Interestingly, for polygons without holes that are in general position, we already gave such a fixed-parameter-tractable algorithm. 
But we also proved that general position does not help the case of polygons with holes: computing a conforming partition with stabbing number at most $ k $ (for $ k \geq 4 $) remains $ \NP $-hard for polygons in general position.

Extending all these results to higher dimensions would be interesting, even for the restricted class of orthogonal 3D-histograms where previous results focus on minimizing the number of partitions into rectangular boxes~\cite{biedl2018partitioning,floderus20183d}.

\ARXIVVERSIONONLY{
    \bibliographystyle{plainurl}
}
\ELSVERSIONONLY{
    \bibliographystyle{elsarticle-num}
    \biboptions{sort&compress}
}
\bibliography{bibl}

\appendix

\renewcommand{\thesection}{\Alph{section}}
\makeatletter
\def\@seccntformat#1{\@ifundefined{#1@cntformat}
   {\csname the#1\endcsname.\hspace{0.5em}}
   {\csname #1@cntformat\endcsname}}
\newcommand\section@cntformat{\appendixname\ \thesection.\hspace{0.5em}}
\makeatother

\section{Details of Section~\ref{sec:4+intratable}}
\label{sec:intractableProof}

In this section, we provide a detailed proof of Theorem~\ref{thm:NPHardStabbingN4+}.

The following definitions are used to ease the writing or shorten the writing of some frequent terms in the proofs.
Given a polygon $ \polygon $, a \emph{sub-polygon} $ \polygon ' $ of $ \polygon $ is a polygon included in $ \polygon $ such that an edge (seen as an open segment) of $ \polygon ' $ is either included in an edge of $ \polygon $ or disjoint with the boundary of $ \polygon $.
We call the latter type of edge a \emph{connection edge}.
We extend the definition of a sub-polygon $ \polygon ' $ to any polygon with these two types of edges without specifying a $ \polygon $ containing $ \polygon ' $.
The vertical (respectively horizontal) reflex segments of $ \polygon $ on the boundary of a pixel $ \pixel $ are called the \emph{verticals} (respectively \emph{horizontals}) of $ \pixel $.
We also use the following lemma.

\begin{lemma} \label{lem:consequences(thin)}
    The stabbing number of a rectilinear thin polygon $ \polygon $ (possibly with holes) is equal to the conforming stabbing number of $ \polygon $.
\end{lemma}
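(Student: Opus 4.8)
The plan is to prove the equality by two inequalities. One direction is free: every conforming partition is a rectangular partition, so the stabbing number of $P$ is at most its conforming stabbing number, for \emph{any} polygon. All the work is in the reverse inequality, and here I would argue that for a thin $P$ one may start from an arbitrary optimal (unrestricted) rectangular partition $\mathcal{R}$ and rewrite it into a conforming partition of no larger stabbing number.

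The pivotal observation is that in a thin polygon it suffices to turn $\mathcal{R}$ into a partition that uses only reflex segments. Indeed, a reflex segment $\mathbf{h}_p$ (or $\mathbf{v}_p$) has an endpoint at the reflex vertex $p$ and, being maximal, has its other endpoint on $\partial P$; hence both endpoints lie on $P$, and since $P$ is thin no two reflex segments cross. Therefore a partition built from reflex segments alone has neither a Steiner point nor an interior crossing, i.e.\ it is automatically conforming. I would stress that this is exactly where thinness is indispensable: when reflex segments are allowed to cross, Steiner points can strictly help (Figure~\ref{fig:partitionWithSteinerPointsVSConforming}(a)--(b)), so no such rewriting can exist in general.

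To carry out the rewriting I would proceed in two stages. First, I apply the standard sliding operation for rectangular partitions: any maximal partition segment that avoids every reflex vertex can be translated perpendicular to itself until it becomes incident to a reflex vertex (or merges with another segment), and such a translation does not increase, for any stabbing segment, the number of partition segments it crosses, hence does not increase the stabbing number. After sliding exhaustively, every maximal segment is incident to a reflex vertex; in a thin polygon such a segment can only extend into the interior on the side of the corresponding reflex segment, so it is a (possibly proper) sub-segment of some $\mathbf{h}_p$ or $\mathbf{v}_p$. Second, I eliminate the remaining Steiner points, which are now T-junctions where a stem ends on a crossbar short of $\partial P$: each such T-junction is removed by a local flip that extends the stem to its full reflex segment while deleting the portion of the crossbar rendered redundant, and I would control termination with a potential such as the number of T-junctions (equivalently, the total length of partition material lying off the reflex segments).

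The step I expect to be the main obstacle is showing that each flip does not increase the stabbing number. Extending a stem past the crossbar it meets adds, a priori, one crossing to every stabbing segment passing through the newly covered region, so the proof must match each such new crossing with one destroyed by the accompanying deletion along the \emph{same} stabbing segment. This is precisely where I would invoke the non-crossing structure of reflex segments in a thin polygon: it forces the region affected by a flip to be bounded by reflex-compatible segments, guaranteeing that the compensating deletion exists and that the crossing sequence of every stabbing segment is preserved. A careful enumeration of the local configurations around a T-junction, verifying this invariant in each, is the technical heart of the argument.
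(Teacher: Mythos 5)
Your easy direction and your structural observation (in a thin polygon, a partition built from reflex segments alone is automatically conforming) are both fine, but the rewriting procedure has a genuine gap, and it sits exactly where you placed it: the flip step. As described, a flip extends a (say vertical) stem past the horizontal crossbar it meets and deletes a portion of that crossbar. The new crossings this creates lie on \emph{horizontal} stabbing segments (those passing through the region newly traversed by the extended stem), whereas deleting a piece of the horizontal crossbar only relieves \emph{vertical} stabbing segments; a horizontal stabbing segment never crosses a horizontal partition segment transversally. So the compensation you hope for --- matching each created crossing ``with one destroyed by the accompanying deletion along the \emph{same} stabbing segment'' --- cannot exist: the created and destroyed crossings live on orthogonal families of stabbing segments. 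Separately, your sliding claim is asserted without proof and is false for an arbitrary sliding direction; to make it true one must choose the direction of each slide (slide a horizontal segment toward the side on which it carries at least as many incident vertical stems) and use the fact that stabbing segments are maximal, so that only the \emph{total} number of stems on each side of the sliding segment matters.

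Ironically, if the sliding step is repaired, the flip step becomes unnecessary: once every maximal partition segment is a sub-segment of the closure of some reflex segment, a T-junction would place a Steiner point of the open segment $\mathbf{v}_p$ in the relative interior of a sub-segment of $\overline{\mathbf{h}_q}$, hence in the open segment $\mathbf{h}_q$, forcing two reflex segments of $P$ to intersect and contradicting thinness. So in a thin polygon no Steiner point can survive your stage one, and the ``technical heart'' you could not complete is vacuous. The paper avoids all of this with a purely monotone argument: it never slides or extends anything, it only \emph{deletes}. Given a Steiner point $q$, thinness implies that some partition segment incident to $q$ is not a reflex segment of $P$, and (again by thinness) such a segment must be a full shared edge of two partition rectangles; merging those two rectangles removes it, and deletions can never increase the stabbing number. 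Iterating removes all Steiner points, which is the entire proof.
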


\begin{proof}
    Let $ \partition $ be a rectangular partition of $ \polygon $ with a Steiner point $ q $. We show that removing $ q $ and merging some adjacent rectangles yields a rectangular partition of $ \polygon $.

    By definition of a thin polygon, $ q $ is not the intersection of two reflex segments of $ \polygon $.
    Thus, $ q $ is incident to at least one segment $ \segment $ of $ \partition $ which is not a reflex segment of $ \polygon $.
    Now, $ \segment $ is necessarily a full edge (and not only portions of an edge) shared by two rectangles of $ \partition $, because otherwise $ \polygon $ is not thin. 
    Therefore, merging these two rectangles yields a rectangle. Repeating this process for all segments $ \segment $ eventually removes the $ q $ from the partition but does not increase the stabbing number of the partition.
\end{proof}

\subsection{The Forcer Gadget for \texorpdfstring{$ k = 4 $}{k=4} Using Thin Polygons}
\label{sec:forcer-gadget}

\begin{figure}[ht]
    \hspace*{\stretch{1}}
    \subcaptionbox{}{\includegraphics[scale=\graphicsScale,page=3]{forcerGadgetStabbingN4.pdf}}
    \hspace*{\stretch{2}}
    \subcaptionbox{}{\includegraphics[scale=\graphicsScale,page=5]{forcerGadgetStabbingN4.pdf}}
    \hspace*{\stretch{2}}
    \subcaptionbox{}{\includegraphics[scale=\graphicsScale,page=6]{forcerGadgetStabbingN4.pdf}}
    \hspace*{\stretch{2}}
    \subcaptionbox{}{\includegraphics[scale=\graphicsScale,page=12]{forcerGadgetStabbingN4.pdf}}
    \hspace*{\stretch{1}}
    \caption{
    (a) The generic forcer gadget $ F _ 0 $ is shaded (in yellow). 
    The connection pixel is adjacent to the $ \xcoor $-axis.
    Reflex vertices are small (black) disks, and reflex segments are dotted (in red). The pixels of the reflex vertices are numbered from~$ 1 $ to $ 16 $ as defined in the proof of Lemma~\ref{lem:forcerGadgetStabbingN4}.\\
    (b) The horizontals of pixels $ 6, 11 $ and verticals of pixels $ 7, 10 $ are solid (red) segments. They are one of the two possibilities for a minimal conforming partition of $ F _ 0 $ with stabbing number at most~$ 4 $ (proof of Lemma~\ref{lem:forcerGadgetStabbingN4}\ref{lem:forcerGadgetForall}).\\
    (c) The force-stab of $ F _ 0 $ (thick purple) ends with an arrow pointing outside $ F _ 0 $. The solid (red) verticals of pixels $ 2, 15 $ and horizontals of pixels $ 3, 14 $ are in any minimal conforming partition of $ F _ 0 $ with stabbing number at most~$ 4 $ (proof of Lemma~\ref{lem:forcerGadgetStabbingN4}\ref{lem:forcerGadgetForall}).\\
    (d) The solid (red) reflex segments form a minimal conforming partition of $ F _ 0 $ with stabbing number at most~$ 4 $ (proof of Lemma~\ref{lem:forcerGadgetStabbingN4}\ref{lem:forcerGadgetExists}).
    }
    \label{fig:forcerGadgetStabbingN4}
\end{figure}

Let $ F _ 0 $ be the sub-polygon with holes defined as follows (Figure~\ref{fig:forcerGadgetStabbingN4}(a)). 
The coordinates of the vertices of $ F _ 0 $ in counterclockwise order along the outer boundary are:
\[ ((3, 0), (3, 1), (7, 1), (7, 8), (0, 8), (0, 1), (2, 1), (2, 0)). \]
The set of holes of $ F _ 0 $ is composed of~$ 9 $ squares and is described next:
\[ \{((\mya, b), (\mya + 1, b), (\mya + 1, b + 1), (\mya, b + 1)) :
        (\mya, b) \in \{1, 3, 5 \} \times \{2, 4, 6 \} \}. \]
The only connection edge of $ F _ 0 $ is $ \sgt{ (2, 0) }{ (3, 0) } $.
An rectilinear sub-polygon $ F $ is a \emph{forcer gadget} if there exists a transformation $ \tau $ such that $ \tau $ is the composition of a translation with a rotation with angle in $ \{0, \frac{\pi}{2}, \pi, \frac{3 \pi}{2} \} $ and such that $ F = \tau ( F _ 0 ) $.
Next, we give names to some segments of interest of $ F $.
\begin{itemize}
    \item The pixel with corners $ \tau ( 2 , 0 ) $ and $ \tau ( 3 , 1 ) $ is the \emph{connection pixel} of $ F $.
    \item The stabbing segment containing $ \tau \left( \sgt{ (2.5, 8) }{ (2.5, 0) } \right) $ is the \emph{force-stab} of $ F $ (the thick segment drawn with an arrow pointing outside $ F _ 0 $ in Figure~\ref{fig:forcerGadgetStabbingN4}(c-d)).
\end{itemize}

\begin{lemma} \label{lem:forcerGadgetStabbingN4}
    Let $ F $ be an arbitrary forcer gadget. Then the following holds.
    \begin{thmEnumerate}
        \item\label{lem:forcerGadgetForall} 
        In any conforming partition with stabbing number at most~$ 4 $,
        the force-stab of $ F $ intersects at least~$ 3 $ reflex segments 
        within $ F $.
        \item\label{lem:forcerGadgetExists} The forcer gadget $ F $ admits a partition with stabbing number at most~$ 4 $. 
    \end{thmEnumerate}
\end{lemma}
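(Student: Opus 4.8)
The plan is to prove the two parts separately, using Lemma~\ref{lem:consequences(pixel)} as the main engine for part~\ref{lem:forcerGadgetForall} and a direct verification for part~\ref{lem:forcerGadgetExists}. For part~\ref{lem:forcerGadgetForall} I would first note that it suffices to prove the bound for a \emph{minimal} conforming partition $\mathcal{R}$ of $F$ with stabbing number at most~$4$, since removing redundant reflex segments never increases the stabbing number and only decreases the number of segments the out-stab crosses. The nine holes create a $4\times 4$ array of distinguished pixels (those numbered $1,\dots,16$ in Figure~\ref{fig:forcerGadgetStabbingN4}(a)): exactly the pixels lying strictly between two consecutive hole-columns (or a hole-column and the outer wall) and between two consecutive hole-rows (or a hole-row and the outer wall). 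Each such pixel is the wedge-pixel of every one of its reflex corners, so Lemma~\ref{lem:consequences(pixel)} applies and labels each of the $16$ pixels either \emph{horizontal} ($\mathcal{R}$ takes its horizontal reflex segments) or \emph{vertical} ($\mathcal{R}$ takes its vertical reflex segments). I index these pixels $N_{i,j}$ with row $i\in\{1,2,3,4\}$ (bottom to top) and column $j\in\{1,2,3,4\}$ (left to right), so that the out-stab is the vertical stabbing segment through column~$j=2$.

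The key observation is that the crossings along any row- or column-centered stabbing segment are determined by these labels, which I record as two weight tables. A \emph{vertical} pixel $N_{i,j}$ contributes its verticals to the (full-width) horizontal stabbing segment through row~$i$; counting reflex segments gives weights $1,2,2,1$ for $j=1,2,3,4$ in every row. A \emph{horizontal} pixel $N_{i,j}$ contributes its horizontals to the vertical stabbing segment through column~$j$; counting gives weights $1,2,2,1$ for $i=1,2,3,4$ in each column, \emph{except} that the bottom out-stab pixel $N_{1,2}$ has weight~$2$ --- its lower horizontal is the reflex segment atop the connection tab, which the out-stab also crosses. Since the stabbing number is at most~$4$, every stabbing segment crosses at most~$3$ reflex segments, so each row's total vertical weight and each column's total horizontal weight is at most~$3$. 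Part~\ref{lem:forcerGadgetForall} is precisely the claim that column~$2$ has horizontal weight at least~$3$ (equivalently, by the upper bound, exactly~$3$).

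I would establish this by contradiction: assume column~$2$ has horizontal weight at most~$2$. Its weights are $2,2,2,1$, so at most one of $N_{1,2},N_{2,2},N_{3,2}$ is horizontal, hence at least two of them are vertical. A vertical label in column~$2$ consumes weight~$2$ of its row, leaving room for at most~$1$, which forces the weight-$2$ pixel $N_{i,3}$ in that row to be horizontal. Propagating this into column~$3$ (horizontal weights $1,2,2,1$) and then back through the row constraints, each of the three cases (which two of $N_{1,2},N_{2,2},N_{3,2}$ are vertical) forces either column~$3$ or row~$4$ to carry two weight-$2$ pixels of the same label, i.e.\ a stabbing segment crossing $2+2=4>3$ reflex segments --- a contradiction. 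This cascading propagation is the technical heart of the argument, and the main obstacle is organizing it cleanly; the asymmetric weight~$2$ at $N_{1,2}$ coming from the connection tab is exactly what breaks the left--right symmetry and makes all three cases close.

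For part~\ref{lem:forcerGadgetExists} I would simply exhibit the conforming partition drawn in Figure~\ref{fig:forcerGadgetStabbingN4}(d) --- concretely, labeling $N_{2,2}$ and $N_{3,3}$ (together with their forced companions) horizontal and the complementary central pixels vertical --- and verify directly that its stabbing number is at most~$4$. Because $F$ has only $O(1)$ equivalence classes of stabbing segments, this is a finite check: one lists the classes (the four row segments, the four column segments, and the short segments interior to the bridge- and hole-row pixels), counts the rectangles each crosses, and observes that the maximum is~$4$.
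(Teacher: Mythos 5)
Your proposal is correct, and it runs on the same engine as the paper's own proof: reduce to \emph{minimal} conforming partitions, apply Lemma~\ref{lem:consequences(pixel)} to label each of the sixteen grid pixels horizontal or vertical, and compare weighted crossing counts of the full-length row and column stabbing segments (weights $1,2,2,1$, with the connection tab giving $N_{1,2}$ column-weight $2$) against the bound of three crossings per stab. The organization differs, though. The paper argues forward: it first forces the central $2\times 2$ block into one of two diagonal patterns, then chains forced labels ($N_{1,2}$ vertical, hence $N_{1,3}$ horizontal, hence $N_{4,3}$ vertical, hence $N_{4,2}$ horizontal) and reads off that column~$2$ carries exactly three horizontals; this buys an explicit description of segments that \emph{every} valid partition must contain, which is what the figures display and what makes the gadget's role in the global reduction transparent. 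You argue by contradiction, and your cases do all close: if column~$2$ has weight at most~$2$, then at least two of $N_{1,2},N_{2,2},N_{3,2}$ are vertical, and $N_{4,2}$ is also forced vertical whenever one of those three is horizontal; the vertical ones force horizontals in column~$3$, and either column~$3$ directly receives two weight-$2$ horizontals, or it saturates at~$3$, forcing $N_{4,3}$ vertical and overloading row~$4$ with the two weight-$2$ verticals $N_{4,2},N_{4,3}$. Your version is slightly shorter and never needs the diagonal-pattern step.

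One caution on part~(b): the phrase ``together with their forced companions'' undersells the remaining work. Only $N_{1,2},N_{1,3},N_{4,2},N_{4,3}$ (and the central diagonal) are forced; the eight boundary pixels are \emph{not}, and careless completions fail --- for instance, making both $N_{2,1}$ and $N_{3,1}$ horizontal overloads column~$1$. So you must commit to one concrete full labelling, e.g.\ verticals exactly at $N_{1,1},N_{1,2},N_{2,1},N_{2,3},N_{3,2},N_{3,4},N_{4,3},N_{4,4}$, and then run your finite check over the $\OO(1)$ stab classes (this labelling yields three verticals per row and three horizontals per column, hence stabbing number~$4$). Doing that check honestly is worthwhile here: the vertical set listed in the paper's own text for part~(b) (pixels $1,2,5,7,10,12,13,14$) is inconsistent with part~(a), which forces pixel~$14$ horizontal and pixel~$15$ vertical --- apparently a typo for the figure's partition (pixels $1,2,5,7,10,12,15,16$) --- and your enumeration of stabbing classes is exactly the kind of verification that catches such an error.
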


\begin{proof}
    It is enough to prove that~\ref{lem:forcerGadgetForall} and~\ref{lem:forcerGadgetExists} hold for $ F _ 0 $ considering only minimal conforming partitions. 
    Let $ \partition $ be an arbitrary minimal conforming partition of $ F _ 0 $ with stabbing number at most~$ 4 $.

    We start by naming more parts of $ F _ 0 $.
    The polygon $ F _ 0 $ forms four rows and four columns that we number starting at row~$ 1 $ for the bottom row and at column~$ 1 $ is the left-most column.
    For each $ (\mya, b) \in \{0, 2, 4, 6 \} \times \{1, 3, 5, 7 \} $, the pixel (which is a wedge-pixel of some reflex vertex of $ F _ 0 $) consisting of a unit square with $ (\mya, b) $ as its lower left corner is numbered $ ( \frac{ \mya }{ 2 } + 1 ) + 4 ( \frac{ b - 1 }{ 2 } ) = \frac{ 1 }{ 2 } \mya + 2 b - 1 $ (Figure~\ref{fig:forcerGadgetStabbingN4}(a)).

    \ref{lem:forcerGadgetForall}:
    Pixels $ 6, 7, 10, 11 $ have~$ 2 $ verticals and~$ 2 $ horizontals each.
    Thus, among pixels $ 6, 7, 10, 11 $, $ \partition $ includes at most one pair of verticals per row (among rows $ 2, 3 $ of $ F _ 0 $) and one pair of horizontals per column (among columns $ 2, 3 $ of $ F _ 0 $).
    This leaves only two options: either $ \partition $ includes the horizontals of pixels $ 6, 11 $ and the verticals of pixels $ 7, 10 $ (Figure~\ref{fig:forcerGadgetStabbingN4}(b)), or $ \partition $ includes the horizontals of pixels $ 7, 10 $ and the verticals of pixels $ 6, 11 $.

    Thus, the force-stab of $ F _ 0 $ intersects either the horizontals of pixel~$ 6 $ or of pixel~$ 10 $; regardless, we can \emph{not} use the horizontals of pixel~$ 2 $ and therefore must use its verticals.
    This in turn means that we must use the horizontals of pixel 3, which (since one of pixels~$ 7 $ and~$ 11 $ uses the horizontals) means that we must use the verticals of pixel~$ 15 $ and the horizontals of pixel~$ 14 $ (Figure~\ref{fig:forcerGadgetStabbingN4}(c)).   
    Therefore, column~$ 2 $ of $ F _ 0 $ has at least~$ 3 $ horizontal reflex segments included in $ \partition $, which proves~\ref{lem:forcerGadgetForall}.

    \ref{lem:forcerGadgetExists}:
    Let $ \partition _ 0 $ be the minimal conforming partition of $ F _ 0 $ whose verticals are at pixels $ 1, 2, 5, 7, 10, 12, 15, 16 $ (Figure~\ref{fig:forcerGadgetStabbingN4}(d)).
    There are~$ 3 $ vertical (respectively horizontal) reflex segments of $ \partition _ 0 $ in each row (respectively column) of $ F _ 0 $.
    Thus, $ \partition _ 0 $ has stabbing number~$ 4 $, thereby proving the existence of a conforming partition of $ F _ 0 $ with stabbing number at most~$ 4 $, hence~\ref{lem:forcerGadgetExists}.
\end{proof}

\subsection{The Variable Gadget for \texorpdfstring{$ k = 4 $}{k=4} Using Thin Polygons}
\label{sec:variable-gadget}

\begin{figure}[ht]
    \hspace*{\stretch{1}}
    \subcaptionbox{}{\includegraphics[scale=\graphicsScale,page=1]{variableGadgetStabbingN4.pdf}}
    \hspace*{\stretch{2}}
    \subcaptionbox{}{\includegraphics[scale=\graphicsScale,page=2]{variableGadgetStabbingN4.pdf}}
    \hspace*{\stretch{2}}
    \subcaptionbox{}{\includegraphics[scale=\graphicsScale,page=3]{variableGadgetStabbingN4.pdf}}
    \hspace*{\stretch{2}}
    \subcaptionbox{}{\includegraphics[scale=\graphicsScale,page=5]{variableGadgetStabbingN4.pdf}}
    \hspace*{\stretch{2}}
    \subcaptionbox{}{\includegraphics[scale=\graphicsScale,page=4]{variableGadgetStabbingN4.pdf}}
    \hspace*{\stretch{2}}
    \subcaptionbox{}{\includegraphics[scale=\graphicsScale,page=6]{variableGadgetStabbingN4.pdf}}
    \hspace*{\stretch{1}}
    \caption{
    (a) The polygon $ V _ 0 $ used in the construction of the variable gadget.
    (b) The generic variable gadget $ V _ 1 $ is shaded (in yellow). The squares labeled F are forcer gadgets. 
    Reflex vertices are small (black) disks, and reflex segments are dotted (red). The pixels are numbered from~$ 1 $ to~$ 4 $ as defined in the proof of Lemma~\ref{lem:variableGadgetStabbingN4}.\\
    (c) The horizontals of pixels $ 1, 4 $ are solid (red) segments. They are included in any minimal conforming partition of $ V _ 1 $ with stabbing number at most~$ 4 $.\\
    (d) The negative (respectively positive) out-stab of $ V _ 1$ ends with an arrow pointing downwards (respectively upwards) outside $ V _ 1 $. (A stabbing segment is green if it intersects~$ 2 $ reflex segments, purple if it intersects~$ 3 $ reflex segments.) The variable gadget $ V _ 1 $ is set to false: $ \partition _ 0 $ includes the solid (red) horizontals of pixel~$ 2 $ and verticals of pixel~$ 3 $.\\
    (e) The variable gadget $ V _ 1 $ is set to true: $ \partition _ 1 $ includes the solid (red) verticals of pixel~$ 2 $ and horizontals of pixel~$ 3 $.\\
    (f) The variable gadget $ V _ 1 $ is undetermined: $ \partition _ 2 $ includes the solid (red) horizontals of both pixel~$ 2 $ and pixel~$ 3 $.
    }
    \label{fig:variableGadgetStabbingN4}
\end{figure}

Let $ V _ 0 $ be the polygon without holes defined as follows (Figure~\ref{fig:variableGadgetStabbingN4}(a)).
The coordinates of the vertices of $ V _ 0 $ in counterclockwise order along the boundary are:
\begin{align*}
    (& (1, 0), (2, 0), (2, 8), (10, 8), (10, 9), (9, 9),\\
      & (9, 17), (8, 17), (8, 9), (0, 9), (0, 8), (1, 8)).
\end{align*}
Note that we use a local coordinate system and that the $ \xcoor $-axis of the global coordinate system intersects the variable gadget.
Let $ V _ 1 $ be a sub-polygon without holes defined as the union of $ V _ 0 $ with two forcer gadgets whose connection edges are $ \sgt{ (2, 4) }{ (2, 5) } $ and $ \sgt { (8, 12) }{ (8, 13) } $.
The connection edges of $ V _ 1 $ itself are defined as $ \sgt{ (1, 0) }{ (2, 0) } $ and $ \sgt{ (9, 17) }{ (8, 17) } $.
A rectilinear sub-polygon $ V $ is a \emph{variable gadget} if there exists a horizontal translation $ \tau $ such that $ V = \tau ( V _ 1 ) $.
Next, we give names to some segments of interest of $ V $.
\begin{itemize}
    \item The edge $ \tau \left( \sgt{ (1, 0) }{ (2, 0) } \right) $ is the \emph{negative connection edge} of $ V $ (the bottom segment of the outer boundary of $ V _ 1 $ which is not drawn in Figure~\ref{fig:variableGadgetStabbingN4}(b-f)).
    \item The edge $ \tau \left( \sgt{ (9, 17) }{ (8, 17) } \right) $ is the \emph{positive connection edge} of $ V $ (the top segment of the outer boundary of $ V _ 1 $ which is not drawn in Figure~\ref{fig:variableGadgetStabbingN4}(b-f)).
    \item The stabbing segment containing $ \tau \left( \sgt{ (1.5, 9) }{ (1.5, 0) } \right) $ is the \emph{negative out-stab} of $ V $ (the thick segment drawn with an arrow pointing downwards outside $ V _ 1 $ in Figure~\ref{fig:variableGadgetStabbingN4}(d-e)).
    \item The stabbing segment containing $ \tau \left( \sgt{ (8.5, 8) }{ (8.5, 17) } \right) $ is the \emph{positive out-stab} of $ V $ (the thick segment drawn with an arrow pointing upwards outside $ V _ 1 $ in Figure~\ref{fig:variableGadgetStabbingN4}(d-e)).
\end{itemize}

\begin{lemma} \label{lem:variableGadgetStabbingN4}
    Any variable gadget $ V $ admits exactly three minimal conforming partitions $ \partition _ 0 $, $ \partition _ 1 $, and $ \partition _ 2 $ with stabbing number at most~$ 4 $ such that the following holds (up to relabeling of $ \partition _ 0 $, $ \partition _ 1 $ and $ \partition _ 2 $).
    \begin{itemize}
        \item Within $ V $, exactly~$ 3 $ reflex segments of $ \partition _ 0 $ intersect the positive out-stab of $ V $, and exactly~$ 2 $ reflex segments of $ \partition _ 0 $ intersect the negative out-stab of $ V $. In this case, we say that $ V $ is \emph{set to false} (Figure~\ref{fig:variableGadgetStabbingN4}(d)).
        \item Within $ V $, exactly~$ 2 $ reflex segments of $ \partition _ 1 $ intersect the positive out-stab of $ V $, and exactly~$ 3 $ reflex segments of $ \partition _ 1 $ intersect the negative out-stab of $ V $. In this case, we say that $ V $ is \emph{set to true} (Figure~\ref{fig:variableGadgetStabbingN4}(e)).
        \item Within $ V $, exactly~$ 3 $ reflex segments of $ \partition _ 2 $ intersect the positive out-stab of $ V $, and exactly~$ 3 $ reflex segments of $ \partition _ 2 $ intersect the negative out-stab of $ V $. In this case, we say that $ V $ is \emph{undetermined} (Figure~\ref{fig:variableGadgetStabbingN4}(f)).
    \end{itemize}
\end{lemma}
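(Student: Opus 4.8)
\section*{Proof proposal}

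The plan is to fix the gadget $V=V_1$ up to the horizontal translation $\tau$ and to analyse an arbitrary minimal conforming partition $\mathcal{R}$ of $V_1$ with stabbing number at most~$4$. First I would locate the pixels that carry any freedom. Where a forcer is glued, its connection corners become reflex vertices of $V_1$, so the ``tab pixel'' at each attachment has four reflex vertices on its boundary and, by Lemma~\ref{lem:consequences(pixel)}, uses either all its horizontals or all its verticals. In the central band the two wedge-pixels of the reflex-vertex pairs $\{(1,8),(2,8)\}$ and $\{(8,9),(9,9)\}$ each have two adjacent reflex vertices, so by Lemma~\ref{lem:consequences(pixel)} each uses either its shared horizontal gate or its two verticals. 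Every remaining pixel (the plain strip pixels, the large middle band pixel, and the forcer interiors) is then determined, so all the freedom lives in the two tab pixels and the two band pixels.

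The first substantial step is the forcing. By Lemma~\ref{lem:forcerGadgetStabbingN4}\ref{lem:forcerGadgetForall}, the forcer's out-stab already crosses three reflex segments inside the forcer body, which is exactly the number that stabbing number~$4$ permits along that segment. Hence a tab pixel cannot use its verticals: doing so would add at least one further crossing (the former connection segment, or the tab--body gate) to that same out-stab, forcing a fourth crossing and thus stabbing number at least~$5$. Consequently each tab pixel uses its horizontals, and in particular its two strip-side horizontals lie in $\mathcal{R}$. These two horizontals are crossed by the out-stab of $V$ incident to that forcer (the negative out-stab for the lower forcer, the positive out-stab for the upper one), so each forcer contributes exactly two crossings to its incident out-stab, independently of the remaining choices.

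Next I would rule out the fourth combination of band choices and count. If both band pixels used their verticals, the band would contain the four vertical reflex segments $\mathbf{v}_{(1,8)},\mathbf{v}_{(2,8)},\mathbf{v}_{(8,9)},\mathbf{v}_{(9,9)}$, and a horizontal stabbing segment through the interior of the band would cross all four and hence meet five rectangles, contradicting stabbing number~$4$. Thus at least one band pixel uses its gate, leaving exactly the three admissible combinations. For each of these the incident out-stab count is two (from the forced tab horizontals) plus one more precisely when the band pixel it passes through uses its gate; the upper band pixel governs the positive out-stab and the lower one governs the negative out-stab. Reading off the three cases yields the pairs $(3,2)$, $(2,3)$ and $(3,3)$ for (positive, negative), which I would label $\mathcal{R}_0$ (false), $\mathcal{R}_1$ (true) and $\mathcal{R}_2$ (undetermined).

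For the existence direction I would exhibit each of the three partitions explicitly: the forced strip and tab horizontals, a forcer interior supplied by Lemma~\ref{lem:forcerGadgetStabbingN4}\ref{lem:forcerGadgetExists}, and the chosen band gates or verticals; verifying stabbing number~$4$ then reduces to inspecting the band, the two strips, and the already-bounded forcer interiors. I expect the forcing step to be the main obstacle: making precise at each attachment that Lemma~\ref{lem:forcerGadgetStabbingN4}\ref{lem:forcerGadgetForall} together with Lemma~\ref{lem:consequences(pixel)} pins the tab pixel to its horizontals, and bookkeeping that the forcer contributes exactly two crossings (not one or three) to the incident out-stab. A secondary point to check is that any internal freedom of the forcers leaves the out-stab counts unchanged, so that, up to relabelling, the three behaviours above are the only ones.
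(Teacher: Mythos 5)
Your proposal is correct and follows essentially the same route as the paper's proof: use Lemma~\ref{lem:consequences(pixel)} to reduce all freedom to the two attachment pixels and the two band pixels, force the horizontals at the attachment pixels via Lemma~\ref{lem:forcerGadgetStabbingN4}\ref{lem:forcerGadgetForall} and the stabbing-number budget, exclude the both-verticals option for the band pixels via the horizontal stabbing segment through the band (the paper's ``inner stab''), and read off the crossing counts $(3,2)$, $(2,3)$, $(3,3)$ in the three surviving cases. The only slip is cosmetic: each attachment pixel has two (not four) reflex vertices on its boundary, but Lemma~\ref{lem:consequences(pixel)} covers that case equally well, so the dichotomy you invoke (all horizontals versus the shared vertical) and everything downstream of it stands.
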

\begin{proof}
    It is enough to prove Lemma~\ref{lem:variableGadgetStabbingN4} for $ V _ 1 $.
    We start by naming more parts of $ V _ 1 $.
    The segment $ \sgt{ (0, 8.5) }{ (10, 8.5) } $ is the \emph{inner stab} of $ V _ 1 $ (the horizontal thick (green) segment in Figure~\ref{fig:variableGadgetStabbingN4}(d-f)).
    The pixel consisting of a unit square is numbered $ k $ if its lower left corner is (Figure~\ref{fig:variableGadgetStabbingN4}(b)):
    \begin{itemize}
        \item $ (1, 4) $ and $ k = 1 $,
        \item $ (1, 8) $ and $ k = 2 $,
        \item $ (8, 8) $ and $ k = 3 $, or
        \item $ (8, 12) $ and $ k = 4 $.
    \end{itemize}

    Let $ \partition $ be an arbitrary minimal conforming partition of $ V _ 1 $ with stabbing number at most~$ 4 $.
    By Lemma~\ref{lem:forcerGadgetStabbingN4}, $ \partition $ includes the horizontals of pixels $ 1, 4 $ (Figure~\ref{fig:variableGadgetStabbingN4}(c)).

    The two remaining pixels are both intersected by the inner stab of $ V _ 1 $.
    Thus, there are three cases.
    \begin{Cases}
        \item $ \partition $ includes the verticals of pixel~$ 2 $ and the horizontals of pixel~$ 3 $ which corresponds to $ \partition = \partition _ 0 $ (Figure~\ref{fig:variableGadgetStabbingN4}(d)).
        \item $ \partition $ includes the horizontals of pixel~$ 2 $ and the verticals of pixel~$ 3 $ which corresponds to $ \partition = \partition _ 1 $ (Figure~\ref{fig:variableGadgetStabbingN4}(e)).
        \item $ \partition $ includes the horizontals of both pixel~$ 2 $ and pixel~$ 3 $ which corresponds to $ \partition = \partition _ 2 $ (Figure~\ref{fig:variableGadgetStabbingN4}(f)).
    \end{Cases}
\end{proof}

\subsection{The Split Gadget for \texorpdfstring{$ k = 4 $}{k=4} Using Thin Polygons}
\label{sec:split-gadget}

\begin{figure}[ht]
    \hspace*{\stretch{1}}
    \subcaptionbox{}{\includegraphics[scale=\graphicsScale,page=1]{splitGadgetStabbingN4.pdf}}
    \hspace*{\stretch{2}}
    \subcaptionbox{}{\includegraphics[scale=\graphicsScale,page=2]{splitGadgetStabbingN4.pdf}}
    \hspace*{\stretch{2}}
    \subcaptionbox{}{\includegraphics[scale=\graphicsScale,page=3]{splitGadgetStabbingN4.pdf}}
    \hspace*{\stretch{1}}\\
    \hspace*{\stretch{1}}
    \subcaptionbox{}{\includegraphics[scale=\graphicsScale,page=5]{splitGadgetStabbingN4.pdf}}
    \hspace*{\stretch{2}}
    \subcaptionbox{}{\includegraphics[scale=\graphicsScale,page=4]{splitGadgetStabbingN4.pdf}}
    \hspace*{\stretch{2}}
    \subcaptionbox{}{\includegraphics[scale=\graphicsScale,page=6]{splitGadgetStabbingN4.pdf}}
    \hspace*{\stretch{1}}
    \caption{The figure is not to scale.
    (a) The polygon $ S _ 0 $.
    (b) The generic split gadget $ S _ 1 $ is shaded (in yellow). The squares labeled F are forcer gadgets. The connection segments are the only segments of the boundary of $ S _ 1 $ which are not solid (black). The reflex vertices are small (black) disks and the reflex segments are dotted (in red). The pixels are numbered from~$ 1 $ to~$ 6 $ as defined in the proof of Lemma~\ref{lem:splitGadgetStabbingN4}.\\
    (c) The verticals of pixel~$ 1 $ and the horizontals of pixels~$ 5, 6 $ are solid (red) segments. They are included in any minimal conforming partition of $ S _ 1 $ with stabbing number at most~$ 4 $.\\
    (d) The out-stabs (respectively in-stabs) of $ S _ 1 $ ends with an arrow pointing outside (respectively inside) $ S _ 1 $. (A stabbing segment is green if it intersects~$ 2 $ reflex segments, purple if it intersects~$ 3 $ reflex segments.) The split gadget $ S _ 1 $ ``propagates false'': $ \partition _ 0 $ includes the solid (red) horizontals of pixels~$ 2, 4 $ and verticals of pixel~$ 3 $.\\
    (e) The split gadget $ S _ 1 $ ``propagates true'': $ \partition _ 1 $ includes the solid (red) verticals of pixel~$ 2, 4 $ and horizontals of pixel~$ 3 $.\\
    (f) One of the three cases where the value propagated by some of the out-stabs (here the right out-stab) is decreased compared to the value propagated by the in-stab.
    }
    \label{fig:splitGadgetStabbingN4}
\end{figure}

Let $ S_0 $ be the polygon without holes defined as follows (Figure~\ref{fig:splitGadgetStabbingN4}(a)).
The coordinates of the vertices of $ S_0 $ in counterclockwise order along the boundary are:
\begin{align*}
    (& (13, 0), (14, 0), (14, 9), (24, 9), (24, 20), (23, 20), (23, 10), \\
      & (6, 10), (6, 20), (5, 20), (5, 10), (4, 10), (4, 9), (13, 9)).
\end{align*} 
Let $ S_1 $ be the sub-polygon without holes defined as the union of $ S_0 $ with three forcer gadgets whose connection edges are $ \sgt{ (4, 9) }{ (5, 9) } $, $ \sgt{ (6, 15) }{ (6, 16) } $, and $ \sgt{ (23, 15) }{ (23, 16) } $.
The connection edges of $ S _ 1 $ itself are defined as $ \sgt{ (13, 0) }{ (14, 0) } $, $ \sgt{ (5, 20) }{ (6, 20) } $, and $ \sgt{ (23, 20) }{ (24, 20) } $.
A rectilinear polygon $ S $ is a \emph{split gadget} in the two following cases.
\begin{itemize}
    \item If there exists a translation $ \tau $ such that $ S = \tau (S_1) $, in which case $ S $ is called a \emph{positive} split gadget.
    \item If there exists a transformation $ \tau $ such that $ \tau $ is the composition of a horizontal reflection with a translation and such that $ S = \tau (S_1) $, in which case $ S $ is called a \emph{negative} split gadget.
\end{itemize}
Next, we give names to some segments of interest of $ S $.
\begin{itemize}
    \item The edge $ \tau \left( \sgt{ (13, 0) }{ (14, 0) } \right) $ is the \emph{in connection edge} of $ S $ (the bottom segment of the outer boundary of $ S_1 $ which is not drawn in Figure~\ref{fig:splitGadgetStabbingN4}(a)).
    \item The edge $ \tau \left( \sgt{ (5, 20) }{ (6, 20) } \right) $ is the \emph{left connection edge} of $ S $ (the top left segment of the outer boundary of $ S_1 $ which is not drawn in Figure~\ref{fig:splitGadgetStabbingN4}(a)).
    \item The edge $ \tau \left( \sgt{ (23, 20) }{ (24, 20) } \right) $ is the \emph{right connection edge} of $ S $ (the top right segment of the outer boundary of $ S_1 $ which is not drawn in Figure~\ref{fig:splitGadgetStabbingN4}(a)).
    \item The stabbing segment containing $ \tau \left( \sgt{ (13.5, 0) }{ (13.5, 10) } \right) $ is the \emph{in-stab} of $ S $ (the thick segment drawn with an arrow pointing upwards inside $ S_1 $ in Figure~\ref{fig:splitGadgetStabbingN4}(d-f)).
    \item The stabbing segment containing $ \tau \left( \sgt{ (5.5, 9) }{ (5.5, 20) } \right) $ is the \emph{left out-stab} of $ S $ (the leftmost thick segment drawn with an arrow pointing upwards outside $ S_1 $ in Figure~\ref{fig:splitGadgetStabbingN4}(d-f)).
    \item The stabbing segment containing $ \tau \left( \sgt{ (23.5, 9) }{ (23.5, 20) } \right) $ is the \emph{right out-stab} of $ S $ (the rightmost thick segment drawn with an arrow pointing upwards outside $ S_1 $ in Figure~\ref{fig:splitGadgetStabbingN4}(d-f)).
\end{itemize}

\begin{lemma} \label{lem:splitGadgetStabbingN4}
    Let $ S $ be an arbitrary split gadget.
    The following holds. 
    \begin{thmEnumerate}
        \item\label{item:1,lem:splitGadgetStabbingN4} Consider an arbitrary minimal conforming partition $ \partition $ of $ S $ with stabbing number at most~$ 4 $.  If~$ 0 $ reflex segments of $ \partition $ intersect the in-stab of $ S $ within $ S $,
        then exactly~$ 3 $ reflex segments of $ \partition _ 0 $ intersect the left out-stab of $ S $ within $ S $, and exactly~$ 3 $ reflex segments of $ \partition _ 0 $ intersect the right out-stab of $ S $ within $ S $ (Figure~\ref{fig:splitGadgetStabbingN4}(d)).
        \item\label{item:2,lem:splitGadgetStabbingN4} For any $ i \in \{ 0 , 1 \} $, 
        there exists a minimal conforming partition $ \partition _ i $ of $ S $ with stabbing number at most~$ 4 $ such that exactly~$ i $ reflex segments of $ \partition _ i $ intersect the in-stab of $ S $ within $ S $, and each out-stab of $ S $ intersects exactly $ 3 - i $ reflex segments of $ \partition _ i $ within $ S $ (Figure~\ref{fig:splitGadgetStabbingN4}(e)).
    \end{thmEnumerate}
\end{lemma}

Recall that a stabbing segment intersecting two gadgets propagates value~$ 0 $ exactly if it intersects three reflex segments in the gadget where it is the out-stab.
Such a segment therefore cannot intersect reflex segments in the gadget where it is the in-stab (in a partition with stabbing number at most~$ 4 $).
In this context, we rephrase Lemma~\ref{lem:splitGadgetStabbingN4}\ref{item:1,lem:splitGadgetStabbingN4} as:
\begin{quote}
    If the in-stab of a split gadget propagates~$ 0 $, then both of its out-stabs also propagate~$ 0 $.
\end{quote}

\begin{proof}
    It is enough to prove Lemma~\ref{lem:splitGadgetStabbingN4} for $ S_1 $.
    We start by naming more parts of $ S_1 $.
    The segment $ (4, 9.5) (24, 9.5) $ is the \emph{inner stab} of $ S_1 $ (the horizontal thick (purple) segment in Figure~\ref{fig:splitGadgetStabbingN4}(d-e)).
    The pixel consisting of a unit square is numbered $ k $ if its lower left corner is (Figure~\ref{fig:splitGadgetStabbingN4}(b)):
    \begin{itemize}
        \item $ (4, 9) $ and $ k = 1 $,
        \item $ (5, 9) $ and $ k = 2 $,
        \item $ (13, 9) $ and $ k = 3 $,
        \item $ (23, 9) $ and $ k = 4 $,
        \item $ (5, 15) $ and $ k = 5 $, or
        \item $ (23, 15) $ and $ k = 6 $.
    \end{itemize}

    Let $ \partition $ be an arbitrary minimal conforming partition of $ S_1 $ with stabbing number at most~$ 4 $.
    By Lemma~\ref{lem:forcerGadgetStabbingN4}, $ \partition $ includes the verticals of pixel~$ 1 $ and the horizontals of pixels~$ 5, 6 $ (Figure~\ref{fig:splitGadgetStabbingN4}(c)).

    Pixels~$ 2, 3, 4 $ are all intersected by the inner stab of $ S_1 $.
    Thus, we have the following two cases.
    \begin{Cases}
        \item If $ \partition $ includes the verticals of pixel~$ 3 $, then $ \partition $ includes the horizontals of pixel~$ 2, 4 $  (Figure~\ref{fig:splitGadgetStabbingN4}(d)).
        Thus, in this case, the partition $ \partition $ satisfies the assertion~\ref{item:1,lem:splitGadgetStabbingN4} of Lemma~\ref{lem:splitGadgetStabbingN4}, and can serve as $ \partition _ 0 $ for~\ref{item:2,lem:splitGadgetStabbingN4}. 
        \item If not, $ \partition $ includes the horizontals of pixel~$ 3 $.
        Therefore, the in-stab intersects a reflex segment and \ref{item:1,lem:splitGadgetStabbingN4} trivially holds. 
        As for\ref{item:2,lem:splitGadgetStabbingN4}, the partition including the horizontals of pixel~$ 3 $, and the verticals of pixel~$ 2, 4 $ is suitable to serve as $ \partition _ 1 $ (Figure~\ref{fig:splitGadgetStabbingN4}(e)).
    \end{Cases}
\end{proof}

\subsection{The Clause Gadget for \texorpdfstring{$ k = 4 $}{k=4} Using Thin Polygons}
\label{sec:clause-gadget}

\begin{figure}[ht]
    \subcaptionbox{}{\includegraphics[scale=\graphicsScale,page=1]{clauseGadgetStabbingN4.pdf}}
    \subcaptionbox{}{\includegraphics[scale=\graphicsScale,page=3]{clauseGadgetStabbingN4.pdf}}
    \subcaptionbox{}{\includegraphics[scale=\graphicsScale,page=4]{clauseGadgetStabbingN4.pdf}}
    \subcaptionbox{}{\includegraphics[scale=\graphicsScale,page=5]{clauseGadgetStabbingN4.pdf}}
    \caption{The figure is not to scale.
    (a) The generic clause gadget $ C _ 0 ( \mya , b , c , d , e ) $ is shaded (in yellow). The connection segments are the only segments of the boundary of $ C _ 0 ( \mya , b , c , d , e ) $ which are not solid (black). The reflex vertices are small (black) disks and the reflex segments are dotted (in red). The pixels are numbered from~$ 1 $ to~$ 3 $ as defined in the proof of Lemma~\ref{lem:clauseGadgetStabbingN4}.\\
    (b) The minimal conforming partition $ \partition _{ 0 0 0 } $ with stabbing number~$ 5 $. The verticals of pixels $ 1, 2, 3 $ are solid (red) segments. The in-stabs (respectively inner stab) of $ C _ 0 ( \mya , b , c , d , e )$ ends with an arrow pointing inside (respectively outside) $ C _ 0 ( \mya , b , c , d , e ) $. (An in-stab is green if it intersects~$ 2 $ reflex segments, purple if it intersects~$ 3 $ reflex segments. Yet, the inner stab is green if it intersects~$ 3 $ reflex segments or less, and purple otherwise.) The clause gadget $ C _ 0 ( \mya , b , c , d , e ) $ ``propagates false'': $ \partition _ 0 $ includes the solid (red) horizontals of pixels $ 2, 4 $ and verticals of pixel~$ 3 $.\\
    (c) The minimal conforming partition $ \partition _{ 0 0 1 } $ with stabbing number~$ 4 $.\\
    (d) The minimal conforming partition $ \partition _{ 0 1 0 } $ with stabbing number~$ 3 $.
    }
    \label{fig:clauseGadgetStabbingN4}
\end{figure}

Let $ \mya $, $ b $, $ c $, $ d $, and $ e $ be positive integers, and $ C _ 0 ( \mya , b , c , d , e ) $ be the sub-polygon without holes defined as follows (Figure~\ref{fig:clauseGadgetStabbingN4}).
The coordinates of the vertices of $ C _ 0 ( \mya , b , c , d , e ) $ in counterclockwise order along the boundary are:
\begin{align*}
    (& (0, 0), (0, - c - 1), (1, - c - 1), (1, - 1), (\mya + 1, - 1), (\mya + 1, - d - 1), \\
    & (\mya + 2, - d - 1), (\mya + 2, - 1), (\mya + b + 2, - 1), (\mya + b + 2, - e - 1), \\
    & (\mya + b + 3, - e - 1), (\mya + b + 3, 0)).
\end{align*}
The connection edges of $ C _ 0 ( \mya , b , c , d , e ) $ are:
\begin{align*}
    & \sgt{ (0, - c - 1) }{ (1, - c - 1) } , \\
    & \sgt{ (\mya + 1, - d - 1) }{ (\mya + 2, - d - 1) } , \\
    & \sgt{ (\mya + b + 2, - e - 1) }{ (\mya + b + 3, - e - 1) }.
\end{align*}
A rectilinear sub-polygon $ C ( \mya , b , c , d , e ) $ is a \emph{clause gadget} in the following two cases.
\begin{itemize}
    \item If there exists a translation $ \tau $ such that $ C ( \mya , b , c , d , e ) = \tau (C _ 0 ( \mya , b , c , d , e )) $, then $ C ( \mya , b , c , d , e ) $ is a called a \emph{positive} clause gadget.
    \item If there exists a transformation $ \tau $ such that $ \tau $ is the composition of a horizontal reflection with a translation and such that $ C ( \mya , b , c , d , e ) = \tau (C _ 0 ( \mya , b , c , d , e )) $, then $ C ( \mya , b , c , d , e ) $ is a called a \emph{negative} clause gadget.
\end{itemize}
Next, we give names to some segments of interest of $ C ( \mya , b , c , d , e ) $.
\begin{itemize}
    \item The edge $ \tau \left( \sgt{ (0, - c - 1) }{ (1, - c - 1) } \right) $ is the \emph{left connection edge} of $ C ( \mya , b , c , d , e ) $ (the bottom left segment of the outer boundary of $ C _ 0 ( \mya , b , c , d , e ) $ which is not drawn in Figure~\ref{fig:clauseGadgetStabbingN4}).
    \item The edge $ \tau \left( \sgt{ (\mya + 1, - d - 1) }{ (\mya + 2, - d - 1) } \right) $ is the \emph{center connection edge} of $ C ( \mya , b , c , d , e ) $ \ (the bottom center segment of the outer boundary of $ C _ 0 ( \mya , b , c , d , e ) $ which is not drawn in Figure~\ref{fig:clauseGadgetStabbingN4}).
    \item The edge $ \tau \left( \sgt{ (\mya + b + 2, - e - 1) }{ (\mya + b + 3, - e - 1) } \right) $ is the \emph{right connection edge} of $ C ( \mya , b , c , d , e ) $ (the bottom right segment of the outer boundary of $ C _ 0 ( \mya , b , c , d , e ) $ which is not drawn in Figure~\ref{fig:clauseGadgetStabbingN4}).
    \item The stabbing segment containing $ \tau \left( \sgt{ (0.5, 0) }{ (0.5, - c - 1) } \right) $ is the \emph{left in-stab} of the clause gadget $ C ( \mya , b , c , d , e ) $ (the leftmost thick segment drawn with an arrow pointing upwards inside $ C _ 0 ( \mya , b , c , d , e ) $ in Figure~\ref{fig:clauseGadgetStabbingN4}(b-e)).
    \item The stabbing segment containing $ \tau \left( \sgt{ (\mya + 1.5, 0) }{ (\mya + 1.5, - d - 1) } \right) $ is the \emph{center in-stab} of $ C ( \mya , b , c , d , e ) $ (the center thick segment drawn with an arrow pointing upwards inside $ C _ 0 ( \mya , b , c , d , e ) $ in Figure~\ref{fig:clauseGadgetStabbingN4}(b-e)).
    \item The stabbing segment containing $ \tau \left( \sgt{ (\mya + b + 2.5, 0) }{ (\mya + b + 2.5, - e - 1) } \right) $ is the \emph{right in-stab} of $ C ( \mya , b , c , d , e ) $ (the rightmost thick segment drawn with an arrow pointing upwards inside $ C _ 0 ( \mya , b , c , d , e ) $ in Figure~\ref{fig:clauseGadgetStabbingN4}(b-e)).
\end{itemize}

\begin{lemma} \label{lem:clauseGadgetStabbingN4}
    Any clause gadget $ C ( \mya , b , c , d , e ) $ admits exactly~$ 8 $ minimal conforming partitions.
    Specifically, these~$ 8 $ minimal conforming partitions are the $ \partition _{ \variable _ 1 \variable _ 2 \variable _ 3 } $ such that exactly $ \variable _ 1 \in \{ 0 , 1 \} $ (respectively $ \variable _ 2 $, $ \variable _ 3 $) reflex segments of $ \partition _{ \variable _ 1 \variable _ 2 \variable _ 3 } $ intersect the left (respectively center, right) in-stab of $ C ( \mya , b , c , d , e ) $ within $ C ( \mya , b , c , d , e ) $ (Figure~\ref{fig:clauseGadgetStabbingN4}(b-e) show respectively $ \partition _{ 0 0 0 }, \partition _{ 0 0 1 }, \partition _{ 0 1 1 }, \partition _{ 1 1 1 } $).
    Moreover, only $ \partition _{ 0 0 0 } $ has stabbing number greater than~$ 4 $.
\end{lemma}

\begin{proof}
    It is enough to prove Lemma~\ref{lem:clauseGadgetStabbingN4} for $ C _ 0 ( \mya , b , c , d , e ) $.
    We start by naming more parts of $ C _ 0 ( \mya , b , c , d , e ) $.
    The segment $ \sgt{ (0, - 0.5 ) }{ ( \mya + b + 3, - 0.5 ) } $ is the \emph{inner stab} of $ C _ 0 ( \mya , b , c , d , e ) $ (the horizontal thick (purple or green) segment in Figure~\ref{fig:clauseGadgetStabbingN4}(b-e)).
    The pixel consisting of a unit square is numbered $ k $ if its lower left corner is
    (Figure~\ref{fig:clauseGadgetStabbingN4}(a)):
    \begin{itemize}
        \item $ ( 0, - 1 ) $ and $ k = 1 $,
        \item $ ( \mya + 1, - 1 ) $ and $ k = 2 $, or
        \item $ ( \mya + b + 2, - 1 ) $ and $ k = 3 $.
    \end{itemize}

    Considering the horizontals versus the verticals in pixels $ 1, 2, 3 $ indeed shows that there exists exactly~$ 8 $ minimal conforming partition of $ C _ 0 ( \mya , b , c , d , e ) $ which are $ \{ \partition _{ \variable _ 1 \variable _ 2 \variable _ 3 } : \variable _ 1, \variable _ 2 , \variable _ 3 \in \{0, 1 \} \} $ as stated in Lemma~\ref{lem:clauseGadgetStabbingN4}.

    Given that the three pixels $ 1, 2, 3 $ all intersect the inner stab of $ C _ 0 ( \mya , b , c , d , e ) $, we check that the inner stab of $ C _ 0 ( \mya , b , c , d , e ) $ intersects at most~$ 4 $ reflex segments of the $ \partition _{ \variable _ 1 \variable _ 2 \variable _ 3 } $ except $ \partition _{ 0 0 0 } $.
\end{proof}

\subsection{Proof of Theorem~\ref{thm:NPHardStabbingN4+} When \texorpdfstring{$ k = 4 $}{k=4} Using Polygons in General Position} \label{sec:4intractableGeneralPosition}

To prove the $ \NP $-hardness for the polygons in general position, we use the same technique as for proving the hardness results for thin polygons but with modified gadgets. The functionalities of each gadget remain the same; therefore, we only give a high-level overview of the changes.

\paragraph{Forcer gadget\ARXIVVERSIONONLY{.}}
We use a staircase with~$ 6 $ reflex vertices as the force gadget, which is shown in Figure~\ref{fig:forcerGadgetGeneralN4}(a,c). We need the property that in any conforming partition of the gadget with stabbing number~$ 4 $, the maximal stabbing segment $ \segment $ perpendicular to the connection edge stabs~$ 3 $ reflex segments.  Suppose for a contradiction that $ \segment $ intersects smaller than three reflex segments. Then at least four vertical reflex segments in the partition would reach the topmost edge of the gadget, implying a stabbing number higher than~$ 4 $, a contradiction.  We also need the gadget to have a conforming partition with stabbing number at most~$ 4 $, which is shown in Figure~\ref{fig:forcerGadgetGeneralN4}(b).

\begin{figure}[ht]
    \hspace*{\stretch{1}}
    \subcaptionbox{}{\includegraphics[scale=\graphicsScale,page=1]{forcerGadgetGeneralN4.pdf}}
    \hspace*{\stretch{2}}
    \subcaptionbox{}{\includegraphics[scale=\graphicsScale,page=3]{forcerGadgetGeneralN4.pdf}}
    \hspace*{\stretch{2}}
    \subcaptionbox{}{\includegraphics[scale=\graphicsScale,page=4]{forcerGadgetGeneralN4.pdf}}
    \hspace*{\stretch{1}}
    \caption{
    (a) The generic gadget.
    (b) The unique conforming partition with stabbing number~$ 4 $.
    (c) A schematic drawing.
    }
    \label{fig:forcerGadgetGeneralN4}
\end{figure}

\paragraph{Variable gadget\ARXIVVERSIONONLY{.}}
The variable gadget (Figure~\ref{fig:variableGadgetGeneralN4}(a)) is a careful perturbation of the variable gadget that we used previously for thin polygons. Following the previously defined variable gadget, we define the positive and negative connection edges and their corresponding positive and negative out-stabs. This variable gadget in general position also has exactly three minimal conforming partitions with stabbing number at  most~$ 4 $. This can be verified by first  observing the stabbing segments imposed by the forcer gadgets, then using a case analysis on the four reflex vertices between the forcer gadgets. Two of these configurations are used to determine the truth values of the variable (Figure~\ref{fig:variableGadgetGeneralN4}(b-c)), while the third one (called \emph{undetermined}) is by convention interpreted as true (Figure~\ref{fig:variableGadgetGeneralN4}(d)).  In a false (true) configuration, exactly~$ 3 $ (exactly~$ 2 $) reflex segments intersect the positive out-stab, and exactly~$ 2 $ (exactly~$ 3 $) reflex segments intersect the negative out-stab. Similar to the previously defined variable gadget, if a positive (negative) out-stab intersects only two reflex segments, then it forces two vertical reflex segments inside the gadget, which enforces the negative (positive) out-stab to intersect three reflex segments.

\begin{figure}[ht]
    \hspace*{\stretch{1}}
    \subcaptionbox{}{\includegraphics[scale=\graphicsScale,page=1]{variablesGadgetGeneralN4.pdf}}
    \hspace*{\stretch{2}}
    \subcaptionbox{}{\includegraphics[scale=\graphicsScale,page=2]{variablesGadgetGeneralN4.pdf}}
    \hspace*{\stretch{2}}
    \subcaptionbox{}{\includegraphics[scale=\graphicsScale,page=3]{variablesGadgetGeneralN4.pdf}}
    \hspace*{\stretch{2}}
    \subcaptionbox{}{\includegraphics[scale=\graphicsScale,page=4]{variablesGadgetGeneralN4.pdf}}
    \hspace*{\stretch{1}}
    \caption{
    (a) The variable gadget, with reflex segments that are forced by the forcer gadgets and by the variable gadget itself.
    (b) Reflex segments to encode $ \variable = 1 $.
    (c) Reflex segments to encode $ \variable = 0 $.
    (d) The variable gadget is undetermined.
    }
    \label{fig:variableGadgetGeneralN4}
\end{figure}

\paragraph{Split gadget\ARXIVVERSIONONLY{.}}
The split gadget in general position (Figure~\ref{fig:splitGadgetGeneralN4}(a)) has the same property as the split gadget that we built for thin polygon. This gadget is slightly different as it uses one less forcer gadget.  However, from the perspective of the split gadget, we can still define \emph{in connection edge}, \emph{left connection edge}, and \emph{right connection edge} that connect the split gadget to the rest of the polygon, and the corresponding perpendicular maximal stabbing segments as \emph{in-stab}, \emph{left out-stab} and \emph{right out-stab}. The property that we need for this gadget is that the value of the maximal stabbing segment entering from a variable gadget into the split gadget is propagated (either as it is, or with a decreased value) to the two stabbing segments leaving the split gadget.

Consider first the case when the in-stab does not intersect any reflex segment, i.e., corresponds to the value~$ 0 $ (Figure~\ref{fig:splitGadgetGeneralN4}(c)). We now show the left out-stab (similarly, the right out-stab) must propagate~$ 0 $, i.e., it will intersect~$ 3 $ reflex segments. Note that the forcer gadget near the left-out stab enforces two horizontal reflex segments. If the left-out stab does not intersect any more reflex segments (i.e., if it propagates~$ 1 $), then we must have two vertical reflex segments inside the gadget that are imposed by the out-stab. These two vertical reflex segments together with the vertical reflex segments imposed by the in-stab imply a stabbing number larger than~$ 4 $, a contradiction.

Consider now the case when the in-stab intersects only one reflex segment (corresponding to the value~$ 1 $). We are now free to propagate either~$ 0 $ or~$ 1 $ through the out-stabs. These are illustrated with the partition in Figure~\ref{fig:splitGadgetGeneralN4}(b,d).

\begin{figure}[ht]
    \hspace*{\stretch{1}}
    \subcaptionbox{}{\includegraphics[scale=\graphicsScale,page=1]{splitGadgetGeneralN4.pdf}}
    \hspace*{\stretch{2}}
    \subcaptionbox{}{\includegraphics[scale=\graphicsScale,page=3]{splitGadgetGeneralN4.pdf}}
    \hspace*{\stretch{1}}\\
    \hspace*{\stretch{1}}
    \subcaptionbox{}{\includegraphics[scale=\graphicsScale,page=2]{splitGadgetGeneralN4.pdf}}
    \hspace*{\stretch{2}}
    \subcaptionbox{}{\includegraphics[scale=\graphicsScale,page=4]{splitGadgetGeneralN4.pdf}}
    \hspace*{\stretch{1}}
    \caption{
    (a) The split gadget, with reflex segments that are forced by the forcer gadgets and by the split gadget itself.
    (b) A set of reflex segments that propagates~$ 0 $.
    (c) A set of reflex segments that propagates~$ 1 $.
    (d) It is possible to reduce the propagated value (but it cannot increase).
    }
    \label{fig:splitGadgetGeneralN4}
\end{figure}

\paragraph{Clause gadget\ARXIVVERSIONONLY{.}}
A clause gadget in general position has the same properties as the one for thin polygons (Figure~\ref{fig:clauseGadgetGeneralN4}(a)).  The reflex vertices are perturbed such that a stabbing number greater than~$ 4 $ would require all in-stabs to propagate~$ 0 $ values (Figure~\ref{fig:clauseGadgetGeneralN4}(b)). For any other combination of values, there exists a partition with stabbing number at most~$ 4 $ where each in-stab that propagates~$ 1 $ intersects exactly one reflex segment, and each in-stab that propagates~$ 0 $ does not intersect any reflex segment. Figure~\ref{fig:clauseGadgetGeneralN4}(c-d) illustrates such choices when at least one incoming value is~$ 1 $.

\begin{figure}[ht]
    \hspace*{\stretch{1}}
    \subcaptionbox{}{\includegraphics[scale=\graphicsScale,page=1]{clauseGadgetGeneralN4.pdf}}
    \hspace*{\stretch{2}}
    \subcaptionbox{}{\includegraphics[scale=\graphicsScale,page=2]{clauseGadgetGeneralN4.pdf}}
    \hspace*{\stretch{2}}
    \subcaptionbox{}{\includegraphics[scale=\graphicsScale,page=3]{clauseGadgetGeneralN4.pdf}}
    \hspace*{\stretch{2}}
    \subcaptionbox{}{\includegraphics[scale=\graphicsScale,page=4]{clauseGadgetGeneralN4.pdf}}
    \hspace*{\stretch{1}}
    \caption{
    (a) The generic clause gadget.
    (b) If all incoming values are~$ 0 $, then we require stabbing number~$ 5 $.
    (c-d) If at least one incoming value is~$ 1 $, then there is a choice of reflex segments with stabbing number at most~$ 4 $.
    }
    \label{fig:clauseGadgetGeneralN4}
\end{figure}

\subsection{Proof of Theorem~\ref{thm:NPHardStabbingN4+} When \texorpdfstring{$ k > 4 $}{k>4}} \label{sec:5+intractable}

In this section, we mention the modifications of the forcer gadgets to adapt them to higher values of $ k $.
There are two families of forcer gadgets to cover: the forcer gadgets for thin polygons and the forcer gadgets in general position.

\paragraph{Forcer gadgets for thin polygons\ARXIVVERSIONONLY{.}}
For the case when $ k = 5 $, we use the force gadget as shown in  (Figure~\ref{fig:forcerGadgetStabbingN5+67}(a-b)).
Similarly to the proof of Lemma~\ref{lem:forcerGadgetStabbingN4}, we show that the out-stab intersects at least four reflex segments in any conforming partition $ \mathcal{ R } $ with stabbing number at most~$ 5 $ of $ F $. 
Indeed, considering only the~$ 16 $ pixels which are the wedge-pixels of some reflex vertex, $ \mathcal{ R } $ includes the verticals of at most two wedge-pixels per row, that is, $ \mathcal{ R } $ includes the horizontals of at least two wedge-pixels per row.
Thus, in total, $ \mathcal{ R } $ includes the horizontals of at least eight wedge-pixels.
Since $ \mathcal{ R } $ includes the horizontals of at most two wedge-pixels per column, each column has exactly four horizontal reflex segments included in $ \mathcal{ R } $.
Furthermore, the gadget indeed admits a partition with stabbing number at most~$ 5 $ (Figure~\ref{fig:forcerGadgetStabbingN5+67}(b)).

\begin{figure}[ht]
    \hspace*{\stretch{1}}
    \subcaptionbox{}{\includegraphics[scale=\graphicsScale,page=3]{forcerGadgetStabbingN5.pdf}}
    \hspace*{\stretch{2}}
    \subcaptionbox{}{\includegraphics[scale=\graphicsScale,page=5]{forcerGadgetStabbingN5.pdf}}
    \hspace*{\stretch{2}}
    \subcaptionbox{}{\includegraphics[scale=\graphicsScale,page=1]{forcerGadgetStabbingN6.pdf}}
    \hspace*{\stretch{2}}
    \subcaptionbox{}{\includegraphics[scale=\graphicsScale,page=1]{forcerGadgetStabbingN7.pdf}}
    \hspace*{\stretch{1}}
    \caption{(a) The forcer gadget in the context of $ \STAB{ 5 } $. (b) A partition with stabbing number at most~$ 5 $. The out-stab is thick (in purple). (c) The forcer gadget in the context of $ \STAB{ 6 } $. (d) The forcer gadget in the context of $ \STAB{ 7 } $.}
    \label{fig:forcerGadgetStabbingN5+67}
\end{figure}

\paragraph{Forcer gadgets in general position\ARXIVVERSIONONLY{.}}
For $ k > 4 $, we use as forcer gadget a staircase with $ 2 ( k - 1 ) $ reflex vertices, see Figure~\ref{fig:forcerGadgetStabbingN5+}(d).
Exactly as for $ k = 4 $, one argues that 
(in any conforming partition with stabbing number $ k $)
the stabbing segment $ \segment $ perpendicular to the connection edge stabs $ k - 1 $ reflex segments, since otherwise $ k $ reflex segments would reach the topmost edge of the staircase, which is impossible.

\end{document}